\definecolor{Maroon}{rgb}{0.5, 0.0, 0.0}
\definecolor{MediumVioletRed}{rgb}{0.78, 0.08, 0.52}
\definecolor{ltblue}{rgb}{0.23, 0.21, 0.98}
\definecolor{dkblue}{rgb}{0,0.1,0.5}
\definecolor{lightblue}{rgb}{0,0.5,0.5}
\definecolor{dkgreen}{rgb}{0,0.4,0}
\definecolor{dk2green}{rgb}{0.4,0,0}
\definecolor{dkviolet}{rgb}{0.6,0,0.8}
\definecolor{mantra}{rgb}{0.2,0.6,0.2}
\definecolor{gotcha}{rgb}{0.8,0.2,0}
\definecolor{ocre}{RGB}{243,102,25} 
\definecolor{dkolive}{RGB}{85, 107, 47}
\definecolor{pine}{RGB}{1, 121, 111}
\definecolor{DarkSlateBlue}{RGB}{72,61,139}
\definecolor{dkred}{RGB}{139, 0, 0}
\definecolor{coffee}{RGB}{111, 78, 55}
\newcommand{\code}[1]{\ensuremath{\texttt{#1}}} 
\newcommand{\rulename}[1]{\DefTirName{#1}\xspace}
\newcommand{\dbj}{de Bruijn\xspace}
\newcommand{\SEP}{\mathbin{\mathbf{|\!\!|}}}
\newcommand{\theDiagram}{\text{the diagram}}
  \newcommand{\withcolor}[2]{#2} 
  \newcommand{\withcolor}[2]{\colorlet{currbkp}{.}\color{#1}{#2}\color{currbkp}}
\newcommand{\colorproc}{Maroon} 
\newcommand{\colorexp}{MediumVioletRed} 
\newcommand{\colorte}{black}
\newcommand{\colorlp}{blue} 
\newcommand{\colorgt}{violet} 
\newcommand{\colorcog}{pine} 
\newcommand{\colorcol}{dkolive} 
\newcommand{\colorpre}{coffee} 
\newcommand{\colorrules}{DarkSlateBlue} 
\let\DefTirNameOld\DefTirName
\renewcommand{\DefTirName}[1]{\withcolor\colorrules{\DefTirNameOld{\footnotesize[{#1}]}}}
\newcommand{\ofLt}{\vdash_{\text{\textsc{lt}}}}
\newcommand{\dslName}{\textsf{Zooid}\xspace}
\newcommand{\dproc}[1]{\withcolor{\colorproc}{#1}} 
\newcommand{\dcoq}[1]{\withcolor{\colorexp}{#1}} 
\newcommand{\proc}{\dproc{\code{proc}}}
\newcommand{\Proc}{\code{Proc}}
\newcommand{\pend}{\dproc{\code{finish}}}
\newcommand{\pjump}[1]{\dproc{\code{jump} \; #1}}
\newcommand{\ploop}[2]{\dproc{\code{loop} \; #1 \; \{ #2 \}}}
\newcommand{\precv}[2]{\dproc{\code{recv} \; #1 \; \{ #2 \}}}
\newcommand{\psend}[3]{\dproc{\code{send} \; #1 \; (#2).\; #3}}
\newcommand{\pread}[2]{\dproc{\code{read} \; {#1} \; (x.{#2})}}
\newcommand{\pwrite}[3]{\dproc{\code{write} \; {#1} \; {#2} \; {#3}}}
\newcommand{\pinteract}[3]{\dproc{\code{interact} \; {#1} \; {#2} \; (x.{#3})}}
\newcommand{\pacti}{\ensuremath{\withcolor{blue}{\code{act}_i}}\xspace}
\newcommand{\pactr}{\ensuremath{\withcolor{blue}{\code{act}_r}}\xspace}
\newcommand{\pactw}{\ensuremath{\withcolor{blue}{\code{act}_w}}\xspace}
\newcommand{\zooid}{\dproc{\code{Z}}}
\newcommand{\zend}{\dproc{\code{finish}}}
\newcommand{\zjump}[1]{\dproc{\code{jump}} \; #1}
\newcommand{\zloop}[2]{\dproc{\code{loop} \; #1 \; ( #2 )}}
\newcommand{\zif}[1]{\dproc{\code{if} \; #1 }}
\newcommand{\zthen}[1]{\dproc{\code{then} \; #1 }}
\newcommand{\zelse}[1]{\dproc{\code{else} \; #1 }}
\newcommand{\zalts}[1]{\dproc{\begin{array}[t]{@{}l@{}} [#1] \end{array}}}
\newcommand{\zsend}[3]{\dproc{\code{send} \; #1 \;  (#2, \; #3) \code{!} \;}}
\newcommand{\zrecv}[3]{\dproc{\code{recv} \; #1 \;  (#2, \; #3) \code{?} \;}}
\newcommand{\zbranch}[2]{\dproc{\code{branch} \; #1 \;  #2}}
\newcommand{\zselect}[2]{\dproc{\code{select} \; #1 \;  #2}}
\newcommand{\zfun}[1]{\code{fun}\ #1\ \Rightarrow\ }
\newcommand{\zinteract}[2]{\dproc{\code{interact} \; #1 \ \  #2}}
\newcommand{\zbalt}[2]{\dproc{\; #1, \;  #2 \code{?} \; }}
\newcommand{\zcase}[3]{\dproc{\code{case} \; #1 \Rightarrow #2, \; #3 \code{!} \;}}
\newcommand{\zskip}[2]{\dproc{\code{skip} \Rightarrow #1, \; #2 \code{!} \;}}
\newcommand{\zdflt}[2]{\dproc{\code{otherwise} \Rightarrow #1, \; #2 \code{!} \;}}
\newcommand{\zooidb}{\dproc{\zooid^b}}
\newcommand{\zooids}{\dproc{\zooid^s}}
\newcommand{\eif}[1]{\dcoq{\code{if} \; #1}}
\newcommand{\ethen}[1]{\dcoq{\code{then} \; #1}}
\newcommand{\eelse}[1]{\dcoq{\code{else} \; #1}}
\newcommand{\tfun}[2]{\dcoq{\code{fun} \; #1 \Rightarrow #2}}
\newcommand{\expr}{\dcoq{e}}
\newcommand{\vx}{\dcoq{x}}
\newcommand{\subtrace}{\preceq}
\newcommand{\zooidTheoremCoq}{\code{process\_traces\_are\_global\_types}}
\newcommand{\Rpipe}{\code{pipeline}}
\newcommand{\Rpingpong}{\code{ping\_pong}}
\newcommand{\RpipeLT}{\code{pipeline$_{lt}$}}
\newcommand{\AliceLT}{\code{alice$_{lt}$}}
\newcommand{\BobLT}{\code{bob$_{lt}$}}
\newcommand{\BobProc}{\code{bob}}
\newcommand{\AliceProc}{\code{alice}}
\newcommand{\coqProj}{\code{\textbackslash}\code{project}}
\newcommand{\coqGet}{\code{\textbackslash}\code{get}}
\newcommand{\coqkwstyle}{\ttfamily\color{dkviolet}}
\newcommand{\coqDef}{\code{\coqkwstyle{Definition}}}
\newcommand{\coqEval}{\code{\coqkwstyle{Eval}}}
\newcommand{\coqOpaque}{\code{\coqkwstyle{Opaque}}}
\newcommand{\zooidTy}[1]{\code{wt\_proc} \; #1}
\newcommand{\azooid}{\code{typed\_proc}}
\newcommand{\runProc}{\code{extract\_proc}}
\newcommand{\TwoBuyer}{\code{two\_buyer}}
\newcommand{\ProcB}{\code{buyer}_{\code{B}}}
\newcommand{\BuyerA}{\code{A}}
\newcommand{\BuyerB}{\code{B}}
\newcommand{\BuyerBLT}{\code{B}_{\code{lt}}}
\newcommand{\Seller}{\code{S}}
\newcommand{\ItemId}{\code{ItemId}}
\newcommand{\Quote}{\code{Quote}}
\newcommand{\Propose}{\code{Propose}}
\newcommand{\Accept}{\code{Accept}}
\newcommand{\Reject}{\code{Reject}}
\newcommand{\Date}{\code{Date}}
\newcommand{\dte}[1]{\withcolor{\colorte}{#1}} 
\newcommand{\tbool}{\dte{\code{bool}}}
\newcommand{\tnat}{\dte{\code{nat}}}
\newcommand{\tunit}{\dte{\code{unit}}}
\newcommand{\tint}{\dte{\code{int}}}
\newcommand{\tseq}{\dte{\code{seq}}}
\newcommand{\tS}{\dte{\code{S}}}
\newcommand{\tplus}[2]{\dte{#1\code{+}#2}}
\newcommand{\tpair}[2]{\dte{#1\code{*}#2}}
\newcommand{\role}{\code{role}}
\newcommand{\rel}[2]{\code{rel}\ \ {#1}\ \ {#2}}
\newcommand{\lty}{\code{l\_ty}}
\newcommand{\gty}{\code{g\_ty}}
\newcommand{\rlty}{\code{rl\_ty}}
\newcommand{\rgty}{\code{rg\_ty}}
\newcommand{\igty}{\code{ig\_ty}}
\newcommand{\colty}{\code{l\_ty}^{\code c}}
\newcommand{\cogty}{\code{g\_ty}^{\code c}}
\newcommand{\mty}{\code{mty}}
\newcommand{\lbl}{\code{label}}
\newcommand{\seq}{\code{seq}}
\newcommand{\fmap}{\code{fmap}}
\newcommand{\guarded}{\code{guarded}}
\newcommand{\gfv}{\code{fv}}
\newcommand{\lfv}{\code{fv}}
\newcommand{\closed}{\code{closed}}
\newcommand{\env}{\code{renv}}
\newcommand{\subject}[1]{\ensuremath{\code{subj}\ {#1}}}
\newcommand{\newDfrac}[2]{%
  \ooalign{%
    $\dfrac{#1}{\phantom{#2}}$\cr
    \raisebox{-2.0pt}{$\dfrac{\phantom{#1}}{#2}$}}%
}
\newcommand{\p}{{\sf p}}
\newcommand{\q}{{\sf q}}
\newcommand{\pr}{{\sf r}}
\newcommand{\parti}{\code{prts}}
\newcommand{\partof}[2]{\code{part\_of}\ {#1} \ {#2}}
\newcommand{\pof}{\code{part\_of}}
\newcommand{\Alice}{{{\sf Alice}}}
\newcommand{\Bob}{{{\sf Bob}}}
\newcommand{\Carol}{{{\sf Carol}}}
\newcommand{\dlt}[1]{\withcolor{\colorlp}{#1}} 
\newcommand{\lT}{\dlt{ \ensuremath{{\sf L}}}}
\newcommand{\lsend}[4]{\dlt{![{#1}];\{#2_i({#3}_i). #4_i \}_{i \in I}}}
\newcommand{\lsendni}[4]{\dlt{![{#1}];\{#2_i({#3}_i). #4 \}_{i \in I}}}
\newcommand{\lrecv}[4]{\dlt{?[{#1}];\{#2_i({#3}_i). #4_i \}_{i \in I}}}
\newcommand{\lrecvni}[4]{\dlt{?[{#1}];\{#2_i({#3}_i). #4 \}_{i \in I}}}
\newcommand{\lrcv}[1]{\dlt{?[{#1}];}}
\newcommand{\lsnd}[1]{\dlt{![{#1}];}}
\newcommand{\lrec}[2]{\dlt{\mu #1 . #2}}
\newcommand{\lX}{\dlt{X}}
\newcommand{\lend}{\dlt{\mathtt{end}}}
\newcommand{\lunroll}[2]{#1\;\Re\;#2}
\newcommand{\dcol}[1]{\withcolor{\colorcol}{#1}}
\newcommand{\colT}{\dcol{ \ensuremath{{\sf L^c}}}}
\newcommand{\colrecv}[4]{\dcol{?^{\sf c}[{#1}];\{#2_i({#3}_i). #4_i \}_{i \in I}}}
\newcommand{\colsend}[4]{\dcol{!^{\sf c}[{#1}];\{#2_i({#3}_i). #4_i \}_{i \in I}}}
\newcommand{\colrcv}[1]{\dcol{?^{\sf c}[{#1}];}}
\newcommand{\colsnd}[1]{\dcol{!^{\sf c}[{#1}];}}
\newcommand{\colend}{\dcol{\mathtt{end}^{\sf c}}}
\newcommand{\dgt}[1]{\withcolor{\colorgt}{#1}} 
\renewcommand{\G}{\dgt{ \ensuremath{{\sf G}}}}
\newcommand{\G}{\dgt{ \ensuremath{{\sf G}}}}
\newcommand{\gmu}{\dgt{\mu}}
\newcommand{\msg}[2]{\dgt{ \ensuremath{#1\to#2:}}}
\newcommand{\msgi}[5]{\dgt{ \ensuremath{#1\to#2:\{#3_i({#4}_i). #5_i \}_{i \in I}}}}
\newcommand{\gX}{\dgt{X}}
\newcommand{\grec}[2]{\dgt{\mu #1 . #2}}
\newcommand{\gend}{\dgt{\mathtt{end}}}
\newcommand{\dcog}[1]{\withcolor{\colorcog}{#1}}
\newcounter{sarrow}
\newcommand\xrsa[1]{%
\stepcounter{sarrow}%
\mathrel{\begin{tikzpicture}[baseline= {( $ (current bounding box.south) + (0,0
) $ )}]
\node[inner sep=.5ex] (\thesarrow) {$\scriptstyle #1$};
\path[draw,<-,decorate,
  decoration={zigzag,amplitude=0.7pt,segment length=1.2mm,pre=lineto,pre length=4pt}]
    (\thesarrow.south east) -- (\thesarrow.south west);
\end{tikzpicture}}%
}
\newcommand{\coG}{\dcog{ \ensuremath{{\sf G}^{\sf c}}}}
\newcommand{\cogend}{\dcog{\mathtt{end}^{\sf c}}}
\newcommand{\comsgni}[5]{\dcog{ \ensuremath{#1\to#2:\{#3_i({#4}_i). #5_i \}_{i \in I}}}}
\newcommand{\comsgsi}[6]{\dcog{ \ensuremath{#1\xrsa{#3} #2:\{#4_i({#5}_i). #6_i \}_{i \in I}}}}
\newcommand{\comsgn}[2]{\dcog{ \ensuremath{#1\to#2:}}}
\newcommand{\comsgs}[3]{\dcog{ \ensuremath{#1\xrsa{#3} #2:}}}
\newcommand{\gunroll}[2]{#1\;\Re\;#2}
\newcommand{\step}{\xrightarrow{\text{step}}}
\newcommand{\stepa}[1]{\xrightarrow{#1}}
\newcommand{\preG}{\dpre{ \ensuremath{{\sf G}^{\sf p}}}}
\newcommand{\dpre}[1]{\withcolor{\colorpre}{#1}}
\newcommand{\pgend}{\dpre{\mathtt{inj}^{\sf p}}}
\newcommand{\pmsgni}[5]{\dpre{ \ensuremath{#1\to#2:\{#3_i({#4}_i). #5_i \}_{i \in I}}}}
\newcommand{\pmsgsi}[6]{\dpre{ \ensuremath{#1\xrsa{#3} #2:\{#4_i({#5}_i). #6_i \}_{i \in I}}}}
\newcommand{\pmsgs}[3]{\dpre{ \ensuremath{#1\xrsa{#3} #2:}}}
\newcommand{\projt}[2]{{#1}{\upharpoonright}{#2}}
\newcommand{\coproj}[3]{ #2\ {\upharpoonright^{\code c}} {#1}\ #3}
\newcommand{\osproj}{\upharpoonright\!\upharpoonright}
\newcommand{\qenv}{\code{qenv}}
\newcommand{\enq}{\code{enq}}
\newcommand{\deq}{\code{deq}}
\newcommand{\qproj}[2]{{#1}{\upharpoonright^{\code q}}{#2}}
\newcommand{\trend}{[]}
\newcommand{\trnext}[2]{{#1}\#{#2}}
\newcommand{\glts}[2]{\code{tr}^{\code{g}}\;{#1}\;{#2}}
\newcommand{\llts}[2]{\code{tr}^{\code{l}}\;{#1}\;{#2}}
\newcommand{\plts}[2]{\code{tr}^{\code{p}}\;{#1}\;{#2}}
  \newcommand{\chcolor}[1]{black} 
  \newcommand{\chcolor}[1]{#1} 
\newcommand{\myparagraph}[1]{\paragraph{\textbf{#1}}}
\renewcommand{\sec}{\S\xspace}
\newcommand{\ocamlcommentstyle}{\color{blue}}
\lstdefinelanguage{ocaml}[Objective]{Caml}{
  deletekeywords={ref},
  morekeywords={module}
  flexiblecolumns=false,
  showstringspaces=false,
  framesep=5pt,
  commentstyle=\ocamlcommentstyle,
  basicstyle=\tt\footnotesize,
  numberstyle=\scriptsize,
  escapeinside={$}{$},
}
\newcommand{\appref}[1]{%
  \ifthenelse{\isundefined{\appendixIncluded}}{\ref{#1}}{\ref{APP-#1}}}
\newcommand{\revised}[1]{#1}
\newcommand{\revised}[1]{{\color{red}#1}}
\lstdefinelanguage{Coq}{
%
mathescape=true,
%
texcl=false,
%
morekeywords=[1]{Section, Module, End, Require, Import, Export,
  Variable, Variables, Parameter, Parameters, Axiom, Hypothesis,
  Hypotheses, Notation, Local, Tactic, Reserved, Scope, Open, Close,
  Bind, Delimit, Definition, Let, Ltac, Fixpoint, CoFixpoint, Add,
  Morphism, Relation, Implicit, Arguments, Unset, Contextual,
  Strict, Prenex, Implicits, Inductive, CoInductive, Record,
  Structure, Canonical, Coercion, Context, Class, Global, Instance,
  Program, Infix, Theorem, Lemma, Corollary, Proposition, Fact,
  Remark, Example, Proof, Goal, Save, Qed, Defined, Hint, Resolve,
  Rewrite, View, Search, Show, Print, Printing, All, Eval, Check,
  Projections, inside, outside, Def},
%
morekeywords=[2]{forall, exists, exists2, fun, fix, cofix, struct,
  match, with, end, as, in, return, let, if, is, then, else, for, of,
  nosimpl, when},
%
morekeywords=[3]{Type, Prop, Set, true, false, option},
%
morekeywords=[4]{pose, set, move, case, elim, apply, clear, hnf,
  intro, intros, generalize, rename, pattern, after, destruct,
  induction, using, refine, inversion, injection, rewrite, congr,
  unlock, compute, ring, field, fourier, replace, fold, unfold,
  change, cutrewrite, simpl, have, suff, wlog, suffices, without,
  loss, nat_norm, assert, cut, trivial, revert, bool_congr, nat_congr,
  symmetry, transitivity, auto, split, left, right, autorewrite},
%
morekeywords=[5]{by, done, exact, reflexivity, tauto, romega, omega,
  assumption, solve, contradiction, discriminate},
%
morekeywords=[6]{do, last, first, try, idtac, repeat},
%
morecomment=[s]{(*}{*)},
%
showstringspaces=false,
%
morestring=[b]",
morestring=[d],
%
tabsize=3,
%
extendedchars=false,
%
sensitive=true,
%
breaklines=false,
%
basicstyle=\small,
%
captionpos=b,
%
columns=[l]flexible,
%
identifierstyle={\ttfamily\color{black}},
keywordstyle=[1]{\ttfamily\color{dkviolet}},
keywordstyle=[2]{\ttfamily\color{dkgreen}},
keywordstyle=[3]{\ttfamily\color{ltblue}},
keywordstyle=[4]{\ttfamily\color{dkblue}},
keywordstyle=[5]{\ttfamily\color{dkred}},
stringstyle=\ttfamily,
commentstyle={\ttfamily\color{dkgreen}},
%
literate=
    {\\forall}{{\color{dkgreen}{$\forall\;$}}}1
    {\\exists}{{$\exists\;$}}1
    {<-}{{$\leftarrow\;$}}1
    {=>}{{$\Rightarrow\;$}}1
    {==}{{\code{==}\;}}1
    {==>}{{\code{==>}\;}}1
    {->}{{$\rightarrow\;$}}1
    {<->}{{$\leftrightarrow\;$}}1
    {<==}{{$\leq\;$}}1
    {\#}{{$^\star$}}1
    {\\o}{{$\circ\;$}}1
    {\@}{{$\cdot$}}1
    {\/\\}{{$\wedge\;$}}1
    {\\\/}{{$\vee\;$}}1
    {++}{{\code{++}}}1
    {~}{{\ }}1
    {\@\@}{{$@$}}1
    {\\mapsto}{{$\mapsto\;$}}1
    {\\hline}{{\rule{\linewidth}{0.5pt}}}1
}[keywords,comments,strings]
\begin{document}

\theoremstyle{acmdefinition}
\newtheorem{remark}[theorem]{Remark}

\title[Certified Multiparty Computation]
  {\dslName: a DSL for Certified Multiparty Computation}         
\subtitle{From Mechanised Metatheory to Certified Multiparty Processes\\(long version)}


\author{David Castro-Perez}
\affiliation{
  \department{Department of Computer Science}              
  \institution{Imperial College London, UK}            
}

 \affiliation{
   \department{School of Computing}             
   \institution{University of Kent, UK}           
 }
\email{d.castro-perez@kent.ac.uk}

\author{Francisco Ferreira}
\affiliation{
  \department{Department of Computer Science}              
  \institution{Imperial College London, UK}            
}
\email{f.ferreira@imperial.ac.uk}          

\author{Lorenzo Gheri}
\affiliation{
  \department{Department of Computer Science}              
  \institution{Imperial College London, UK}            
}
\email{l.gheri@imperial.ac.uk}          

\author{Nobuko Yoshida}
\affiliation{
  \department{Department of Computer Science}              
  \institution{Imperial College London, UK}            
}
\email{n.yoshida@imperial.ac.uk}          

\begin{abstract}

We design and implement $\textsf{Zooid}$, a domain specific language for
certified multiparty communication, embedded in Coq and
implemented atop our mechanisation framework of asynchronous
multiparty session types (the first of its kind). $\textsf{Zooid}$ provides a fully
mechanised metatheory for the semantics of global and local types, and
a fully verified end-point process language that faithfully reflects
the type-level behaviours and thus inherits the global types
properties such as deadlock freedom, protocol compliance, and liveness
guarantees.


\end{abstract}

\begin{CCSXML}
<ccs2012>
   <concept>
       <concept_id>10010147.10010919.10010177</concept_id>
       <concept_desc>Computing methodologies~Distributed programming languages</concept_desc>
       <concept_significance>500</concept_significance>
       </concept>
   <concept>
       <concept_id>10003752.10003790.10011740</concept_id>
       <concept_desc>Theory of computation~Type theory</concept_desc>
       <concept_significance>300</concept_significance>
       </concept>
   <concept>
       <concept_id>10003752.10010124.10010131</concept_id>
       <concept_desc>Theory of computation~Program semantics</concept_desc>
       <concept_significance>300</concept_significance>
       </concept>
   <concept>
       <concept_id>10003752.10003753.10003761.10003764</concept_id>
       <concept_desc>Theory of computation~Process calculi</concept_desc>
       <concept_significance>500</concept_significance>
       </concept>
 </ccs2012>
\end{CCSXML}

\ccsdesc[500]{Software and its engineering~General programming languages}
\ccsdesc[300]{Social and professional topics~History of programming languages}

\keywords{multiparty session types, mechanisation, Coq, asynchronous message passing,
  concurrent processes}  

\maketitle

\section{Introduction}\label{sec:intro}
\emph{Concurrent behavioural type
  systems}~\cite{Huttel:2016:FST:2911992.2873052} accurately simulate and abstract
the behaviour of interactive \emph{processes}, as opposed to sequential types for
programs that simply describe values.  The \emph{session types
system}~\cite{Honda:1993,Honda:1998,Takeuchi:1994} is one of such behavioural
type systems, which can determine protocol
compliance for processes. Session types consist of 
actions for sending and receiving, 
sequencing, choices, and
recursion. In session types, when a typed process communicates, its
type also evolves, thus reflecting the progression
of the state of the protocol (type) after
performing an action. This rich behavioural aspect of session types
has opened new areas of study, such as a connection with communicating
automata~\cite{cfsm83} and concurrent game semantics~\cite{lics11} by
linking actions of session types to transitions of state machines
\cite{DenielouY12} and events of games~\cite{POPL19}.

Originally, binary session types (BST) provide deadlock-freedom for a pair of
processes, but not when more than two \emph{participants} (often also called
\emph{roles}) are involved. \revised{For more than two processes,
  ensuring deadlock-freedom in BST requires either complicated
  additional causality-based typing systems on top of plain BST,
  e.g. \cite{dezani-deliguoro-yoshida,balzer-toninho-pfenning} or
  limitation to deterministic, strongly-normalising session types
\cite{TY2018,TY2018b}.}

\emph{Multiparty session types}
(MPST,~\cite{Honda2008Multiparty,HYC2016}) solve this limitation, by
defining \emph{global types} as an overall specification of all the
communications by every participant involved. The essence of the MPST
theory (depicted in Figure~\ref{fig:mpst}) is \emph{end-point projection} where a global type $\G$ is projected
into one \emph{local type} $\dlt{\lT_i}$ for each participant, so that the
participant $\dproc{\proc_i}$ can be implemented following an abstract behaviour
represented by the local type.
To ensure correctness, the collection of behaviours of the local types projected
from a global type need to mirror the behaviour of that global type.

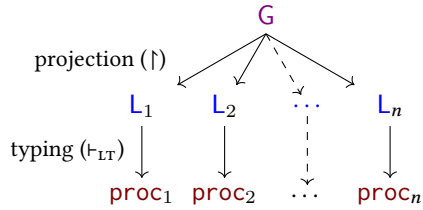
\begin{figure}
  \centering
  \vspace{3mm}
  \begin{tikzpicture}
    [x=1.1cm, y=-1.5cm, align=center, font=\normalsize, draw,
    every node/.style={minimum width=1.8cm}]
		\node [] (G) at (0,0) {$\G$};
		\node [] (L1) at (-1.5,.8 ) {$\dlt{\lT_1}$};
		\node [] (L2) at (-.50,.8 ) {$\dlt{\lT_2}$};
		\node [] (LL) at (.50 ,.8 ) {$\dlt{\dots}$};
		\node [] (Ln) at ( 1.5,.8 ) {$\dlt{\lT_n}$};
		\node [] (P1) at (-1.5,1.6) {$\dproc{\proc_1}$};
		\node [] (P2) at (-.50,1.6) {$\dproc{\proc_2}$};
		\node [] (PP) at ( .50,1.6) {$\dots$};
		\node [] (Pn) at ( 1.5,1.6) {$\dproc{\proc_n}$};
		\draw [->] (G.south) to (L1);
		\draw [->] (G.south) to (L2);
		\draw [->, dashed] (G.south) to (LL);
		\draw [->] (G.south) to (Ln);
		\draw [->] (L1) to (P1);
		\draw [->] (L2) to (P2);
		\draw [->, dashed] (LL) to (PP);
		\draw [->] (Ln) to (Pn);
	    \node  at (-2,0.4) {\small projection ($\upharpoonright$)};
	    \node  at (-2.4,1.2) {\small  typing ($\ofLt$)};
	\end{tikzpicture}
	\caption{MPST in a nutshell}
\vspace{-6mm}
\label{fig:mpst}
\end{figure}

The behaviour of global and local types is defined by
(asynchronous) labelled transition systems (LTS) whose sound and
complete correspondence is key to provide: progress of
processes~\cite{HYC2016}, synthesis of global
protocols~\cite{DenielouYoshida2013,LTY2015}, and to establish bisimulation for
processes~\cite{KY2015}. Practically, type-level transition systems
are particularly useful for, e.g., dynamic monitoring of components
in distributed systems~\cite{DHHNY2015} and generating deadlock-free APIs
of various programming
languages, e.g.,
~\cite{Castro:2019:DPU:3302515.3290342,SDHY2017,HY2017,NHYA2018,ZFHNY2020}.

Unfortunately, the more complicated the behaviour is,
the more \emph{error-prone} the theory becomes.
The literature reveals broken proofs of subject reduction for several
MPST systems~\cite{SY2019}, and a flaw of the decidability of
subtyping~\cite{BravettiCZ16} for asynchronous MPST. All of which are
caused by an incorrect understanding of the (asynchronous) behaviour of
types.

Motivated by this experience, we design and implement
\dslName\footnote{A \href{https://en.wikipedia.org/wiki/Zooid}{zooid}
  is a single animal that is part of a colonial animal, akin to how an
  endpoint process is part of a distributed system.}, a certified
Domain Specific Language (DSL) to write \emph{well-typed by
  construction} communicating processes. \dslName's
implementation is embedded in the Coq proof
assistant~\cite{CoqManual}, so that it relies on solid and precise
foundations: in Coq we have formalised the metatheory for MPST, which
serves as the type system for \dslName{}. On one side, mechanising the
metatheory is immediately useful for documenting, clarifying, and
ensuring the validity of proofs, on the other it results in certified
specifications and implementations of the concepts in the theory.
\dslName{} exemplifies this for MPST, a complex and relevant theory
with many real-world applications. In this system, not only the theory
is validated in Coq, the actual implementation of projection, type
checking and validation of processes, is extracted from certified
proofs.

We provide the first \emph{fully mechanised} proof of sound and
complete correspondence between the labelled transition systems of
global and local types, in terms of equivalence of execution traces,
recapturing the original LTS provided in~\citep{DenielouYoshida2013}.
In this work, instead of trying to formalise existing proofs in the
literature, we approach the problem with a fresh look and use
tools that would allow for a successful and reusable mechanisation. On
the theory side, we use coinductive trees inspired
by~\cite{Li-yao:2019, GHILEZAN2019127}; on the tool side, we
depend on the Coq proof assistant~\cite{CoqManual}, taking advantage
of small scale reflection (SSReflect)~\cite{Gonthier:2010} to
structure our proofs, and PaCo~\cite{paco} to provide a powerful
parameterised coinduction library, which we use extensively.

To certify an MPST end-point process implementation,
we define a concurrent process language and an
LTS semantics for it. This guarantees that process traces
respect the ones from its local and global types.
Naturally, processes do not need to implement every aspect of the
protocol. Therefore, we define the notion of complete subtraces to
represent the fact that an implementation may choose not to implement
some aspects, but it still needs to match the global trace (we make precise
this concept in \sec~\ref{sec:zooidsem}). Our final result
is the design and implementation of \dslName{}, a Coq-embedded
DSL to write end-point processes that are
well-typed (hence deadlock-free and live) \emph{by construction}.
This development takes full
advantage of the metatheory to provide a certified validation,
projection, and type checking for \dslName processes.

The contributions of this work are fourfold:
\begin{description}[leftmargin=2mm]
\item[Fully mechanised transition systems] for global and local types,
  using \emph{asynchronous communications} and proofs of their sound and
  complete trace equivalence.
\item[Semantic representation] of behavioural
types based on \emph{coinductive trees},
proposing a novel approach to the proof of trace
equivalences.
\item[A concurrent process language] with an associated typing discipline and the
  notion of \emph{complete subtraces} to relate process traces to global
  traces, as processes may not fully implement a protocol and still be
  compliant.
\item[\dslName] a DSL embedded in Coq and framework that
  \emph{specifies} global protocols, performs \emph{projections}, and
  implements intrinsically well-typed processes, using code \emph{certified}
  by Coq proofs. The code of \dslName{} processes is \emph{extracted into
    OCaml} code for execution. \dslName{} uses the mechanisation to
  provide a framework for processes that enjoy
 deadlock freedom and liveness (with a type checker certified in Coq).
\end{description}
\textbf{\emph{Outline.}\ } In \sec~\ref{sec:overview}, we provide an overview
of the theory and the paper.  In
\sec~\ref{sec:MPST}, we present the theory of MPST together with the
soundness and completeness results. 
We describe the process language, its metatheory and the \dslName DSL
in~\sec~\ref{sec:proc}. In \sec~\ref{sec:zooid}, we present
\dslName{}'s workflow and  showcase its use with some examples.
In \sec~\ref{sec:related} we discuss related work and offer
some future work and conclusions.

\revised{The git repository of our development is publicly available: 
  \url{https://github.com/emtst/zooid-cmpst}; it contains all the
complete Coq definitions and proofs from the paper, together with the
examples and case studies implemented using \dslName.
\iftoggle{full}{In the Appendix, we present the proofs of our theorems (\sec\ref{appendix:meta-proofs}), and additional technical details of the toolchain (\sec\ref{appendix:extraction})
}{
We present the proofs of our theorems, and additional technical details of the
toolchain, in the appendix of the full version of the paper
(\url{https://arxiv.org/abs/2009.06541})}.
}


\vspace{-1mm}
\section{Overview}\label{sec:overview}
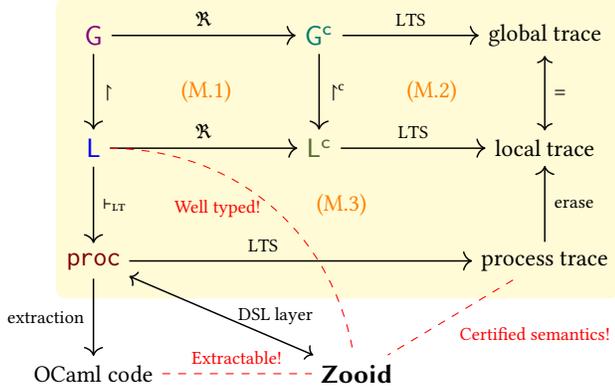
\begin{figure}[t] 

\pgfdeclarelayer{background}
\pgfsetlayers{background,main}

\begin{tikzpicture}[commutative diagrams/every diagram]

   \begin{pgfonlayer}{background}
   \fill[rounded corners, yellow!20!white] (-0.5,0.5) rectangle (7,-3.5);
   \end{pgfonlayer}

	\node(G0)at (0,0) {$\G$};
	\node(G1)at (3,0) {$\coG$};
	\node(G2)at (6,0) {\small{global trace}};
	\node(L0)at (0,-1.5) {$\lT$};
	\node(L1)at (3,-1.5) {$\colT$};
	\node(L2)at (6,-1.5) {\small{local trace}};
	\node(M1)at (1.5,-0.75) {\small{\textcolor{orange}{(M.1)}}};
	\node(M2)at (4.5,-0.75) {\small{\textcolor{orange}{(M.2)}}};
	\node(P0)at (0,-3) {$\proc$};
	\node(P2)at (6,-3) {\small{process trace}};
	\node(M3)at (3.3,-2.25) {\small{\textcolor{orange}{(M.3)}}};
	\node(OC)at (0,-4.5) {\small{OCaml code}};
	\node(ZZ)at (3.5,-4.5) {\bf \dslName};

	\path[commutative diagrams/.cd,every arrow,font=\scriptsize]
	(G0) edge node[above] {$\Re$} (G1)
	(G1) edge node[above] {LTS} (G2)
	(L0) edge node[above] {$\Re$} (L1)
	(L1) edge node[above] {LTS} (L2)
	(G0) edge node[right] {$\upharpoonright$} (L0)
	(G1) edge node[right] {$\upharpoonright^\textsf{c}$} (L1)
	(G2) edge[<->] node[right] {$=$} (L2)
	(L0) edge node[right] {$\ofLt$} (P0)
	(P0) edge node[above, xshift=-.5cm] {LTS} (P2)
	(P2) edge node[right] {erase} (L2)
	(P0) edge node[left] {extraction} (OC)
	(P0) edge[<->] node[right,xshift=.1cm] {DSL layer} (ZZ)
	;
	\path[dashed, red, commutative diagrams/.cd, font=\scriptsize]
	(L0) edge[bend left=40] node[below left,yshift=.2cm] {Well typed!} (ZZ)
	(P2) edge[] node[below right] {Certified semantics!} (ZZ)
	(OC) edge[] node[above] {Extractable!} (ZZ)
	;
\end{tikzpicture}
\vspace{-.2cm}

\caption{Our contribution at a glance.}\label{fig:dia}
\vspace{-.3cm}
\end{figure}
In this section, we present our formalised results and
the relationship that puts them together to build \dslName{}; and we show,
with an example, how our development allows to \emph{certify}
the implementation of a multiparty protocol.

\subsection{Results and Development}
\label{subsec:diagram}
Figure~\ref{fig:dia}
summarises our contribution.
The yellow
rectangle on the background encases the metatheory that we have
formalised for types and processes. On such solid basis,
we build \dslName, our language for specifying end-point processes.

\myparagraph{
  Types as Trees, Projection and Unravelling.}
We formalise in Coq the inductive syntaxes of global types
and local types.
Of these,
we give an alternative representation in terms of coinductive trees,
moving one step forward towards semantics. By defining
the unravelling relation $\Re$, of a type into a tree (\sec\ref{subsec:global}),
and projections $\upharpoonright$, from
global to local objects (\sec\ref{subsec:projection}), we
prove Theorem \ref{thm:unr-pro}:
projection is preserved by unravelling (square \textcolor{orange}{(M.1)}
in Figure \ref{fig:dia}).

\myparagraph{
  Trace Semantics.} Moving further to
the right, we define labelled
transition systems for trees
(\sec\ref{subsec:async_projection} and
\ref{subsec:step}). Exploiting their tree representation,
we give an asynchronous semantics in terms of execution traces
to global and local types (\sec\ref{subsec:trace-eq}).
\emph{Soundness and completeness} come together
in the \emph{trace equivalence theorem} for global
and local types, Theorem \ref{thm:trace-equiv}, thus closing square
\textcolor{orange}{(M.2)} in Figure \ref{fig:dia}.

\myparagraph{
  Process Language and Typing.}
We formalise the syntax for specifying (core) processes, $\proc$ in
Figure~\ref{fig:dia} (\sec\ref{sec:core-procs}). 
We define a typing relation between local types and
processes, then we give semantics to processes (\sec\ref{sec:zooidsem}), again in terms of an LTS and execution traces,
and finally we prove type preservation, Theorem
\ref{thm:preservation}. We conclude the metatheory part
with Theorem
\ref{thm:zooid}, (thus closing square \textcolor{orange}{(M.3)} of Figure~\ref{fig:dia}):
we show that \emph{process traces are global traces}.

\subsection{Process Language: \dslName}
\label{subsec:dsl:overview}
On the foundations of a formalised metatheory, we build a domain
specific language embedded in Coq, \dslName, as presented in \sec~\ref{sec:proc} and \ref{sec:zooid}. Processes specified in \dslName
are well-typed by construction. \dslName{} terms are dependent pairs
of a core process $\proc$, 
and a proof that it is
well-typed with respect to a given local type $\lT$, obtained via
projection of the global type $\G$ given for the protocol. \dslName{}
terms are built using a collection of \emph{smart constructors}: we
make sure that the local type of any smart constructor is fully
determined by its inputs, so that we can use Coq to infer the local
type for every \dslName{} process.

To summarise, our end product \dslName{} is a DSL embedded in Coq. The user specifies as inputs:
\begin{enumerate}[leftmargin=15pt, nosep]
\item the general discipline of the protocol as a global type;
\item the communicating process they are interested in, as a \dslName{} term.
\end{enumerate}
From this the user will obtain:
\begin{enumerate}[label=(\alph*), leftmargin=15pt, nosep]
\item a collection of local types inferred by projection from the given global type;
\item that their process is well-typed by construction;
\item a certified semantics for their process, namely the guarantee that the behaviour of their process adheres to the semantics of the global protocol.
\end{enumerate}
Moreover the user's process is easily translated to an OCaml program, thanks to Coq code-extraction.

\subsection{\dslName{} at Work}
\label{subsec:first-zooid}

We briefly illustrate how \dslName{} works with a simple example, a
ring protocol. We want to write a certified process for $\Alice$ that
sends a message to $\Bob$ and then receives a message from $\Carol$,
but only after $\Bob$ and $\Carol$ have exchanged a message themselves.
In what follows, all the considered messages are natural numbers
of type $\tnat$.

First, we provide \dslName{} with the intended disciplining protocol,
a global type $\G$:\\[0.5mm]
{\small
\centerline{
  $\begin{array}{@{}l@{\;}l@{}}
    \G = \msg{\Alice}{\Bob} \dgt{\ell(\tnat) .}
     & \msg{\Bob}{\Carol} \dgt{\ell(\tnat) .} \\
    &
    \msg{\Carol}{\Alice} \dgt{\ell(\tnat) .}
    \gend
  \end{array}
$}
}\\[0.5mm]
The global type $\G$ prescribes the full protocol, where $\Alice$
sends a message containing a $\tnat$ number to $\Bob$ (with a
generic label $\ell$), $\Bob$ receives it and sends another number
to $\Carol$, who receives and can send the last message to $\Alice$.
$\Alice$ receives and the protocol terminates ($\gend$).

Taking the point of view of $\Alice$, we automatically obtain
a local type $\lT$, \emph{projection} of $\G$ onto the role $\Alice$:
\\[0.5mm]
{\small
\centerline{
$\lT = \lsnd{\Bob} \dlt{\ell(\tnat).}
\lrcv{\Carol}\dlt{\ell(\tnat).\lend}$,
}}\\[0.5mm]
which prescribes for $\Alice$ that she will send a number to $\Bob$,
receive a number from $\Carol$ and terminate.

A \dslName{} implementation for $\Alice$'s process,
respecting $\lT$, is ($\Alice$ sends $x$ to $\Bob$ and
gets $y$ from $\Carol$):
\\[0.5mm]
{\small
\centerline{
  $\begin{array}{@{}l@{\;}l@{}}
    \proc = & \zsend{\Bob}{\ell}{x : \tnat} \\
    &\zrecv{\Carol}{\ell}{\dcoq{y} : \tnat} \zend
  \end{array}
$
}}\\[0.5mm]
Thanks to \dslName{}'s smart constructors, we obtain that
$\proc$ is \emph{well-typed} with respect to the local type $\lT$.
Additionally, the underlying metatheory certifies, by Coq
proofs, that the 
behaviour of $\proc$ conforms to the semantics of protocol $\G$.


\section{Sound and Complete Asynchronous Multiparty Session Types}
\label{sec:MPST}
In this section, we describe the first layer of \dslName{}'s certified
development: a mechanisation of the metatheory of multiparty session types. We
focus on the design, main concepts and results, while for
a more in-detail presentation with pointers to the Coq mechanisation,
we refer to \iftoggle{full}{Appendix \appref{appendix:meta-proofs}}{\cite{fullversion}}.

\subsection{Global and Local Types}
\label{subsec:global}\label{subsec:loc}

A \emph{global type} describes the communication protocol in its entirety,
recording all the interactions between the different participants. Each
participant has a \emph{local type} specifying its intended behaviour within the
protocol. The literature offers a wide variety of presentations of global and
local types \cite{Honda2008Multiparty,HYC2016,Scalas:2019,CDPY2015}:
here, building on
\cite{DenielouYoshida2013}, we formalise full \emph{asynchronous multiparty
  session types} (MPST), which captures asynchronous communication,
with choice and recursion.
\begin{definition}[Sorts, global and local types]
  \label{def:global-types}\label{def:local-types}%
        {\em Sorts} ({\small \code{mty}} in {\small \code{Common/AtomSets.v}}),
        {\em global types} ({\small \gty} in {\small \code{Global/Syntax.v}}),  and
        {\em Local types} ({\small \code{l\_ty}} in {\small \code{Local/Syntax.v}}), ranged over by $\tS$, $\G$, and $\lT$ respectively,
        are generated by:\\[0.5mm]
\centerline{
\small
  $\begin{array}{@{}r@{\;}l@{}}
    \tS & \Coloneqq \tnat \SEP \tint \SEP\tbool \SEP \tplus \tS \tS \SEP \tpair \tS \tS 
    \\[1mm]
    \G & \Coloneqq %
              \gend \;\SEP\;%
              \gX \;\SEP\;%
              \grec \gX \G \;\SEP\;
              \msgi \p\q \ell {\tS} {\G}%
              \\[1mm]
    \lT & \Coloneqq \lend \SEP \lX \SEP \lrec \lX \lT
               \SEP  \lsend \q \ell {\tS} {\lT}
               \SEP \lrecv \p \ell {\tS} {\lT}
  \end{array}
$}\\[0.5mm]
with $\p\neq \q$, $I\neq \emptyset,$ and
$\ell_i \neq \ell_j$ when $i \neq j,$ for all $i,j\in I$.
\end{definition}
\noindent Above, \emph{sorts} refer to the
types of supported message payloads.
We are interested in types such that (1) bound variables are
\emph{guarded}---e.g.,
$\gmu\gX.\allowbreak \msg \p\q \dgt{\ell (\tnat). \G}$ is a valid
global type, whereas $\gmu\gX\dgt{.}\gX$ is not---and (2) types are
\emph{closed}, i.e., all 
variables are bound by $\gmu\gX$
(\iftoggle{full}
         {Appendix \appref{appendix:meta-proofs}, Definitions \appref{def:guarded-A} and \appref{def:closed-A}}
         {\cite[Appendix \appref{appendix:meta-proofs}, Definitions \appref{def:guarded-A} and \appref{def:closed-A}]{fullversion}}).

In the literature, it is common 
to adopt the
\emph{equi-recursive viewpoint} \cite{PierceBook}, i.e., to identify
$\dgt{\gmu\gX.\G}$ and $\dgt{\G \{ \gmu\gX.\G / \gX}\dgt{ \}}$, given that
their intended behaviour is the same. Such unravelling of
recursion can be performed infinitely many times, thus
obtaining possibly infinite trees\footnote{
  Formally, in Coq, a coinductively defined datatype (codatatype)
  of finitely branching trees with possible infinite depth.},
whose structure derives from the
syntax of global and local types \cite{GHILEZAN2019127}.

\begin{definition}[Semantic global and local trees]
  \label{def:global trees} \label{def:local-trees}
  {\em Semantic global trees} ({\small \rgty} and {\small \igty} in
  {\small \code{ Global/Tree.v}}, \iftoggle{full}{see also
Appendix \appref{appendix:meta-proofs}, Remark
\appref{remark:prefixes-A}}{see also
\cite[Appendix \appref{appendix:meta-proofs}, Remark \appref{remark:prefixes-A}]{fullversion}}),
  ranged over by $\coG$, and
  {\em semantic local trees} ({\small \code{rl\_ty}} in
  \code{\small Local/Tree.v}), ranged over by $\lT$,
  are generated \emph{coinductively} by:\\[0.5mm]
\centerline{
  \small
  $\begin{array}{@{}r@{\;}l@{}}
    \coG &  ::=   \cogend \SEP
    \comsgni \p\q \ell {\tS} {\coG} \SEP
    \comsgsi  \p\q {\ell_j} \ell {\tS} {\coG}\\[1mm]
    \colT  & ::=  \colend\;\SEP\;\colsend \p \ell {\tS} {\colT}
    \;\SEP\;   \colrecv \q \ell {\tS} {\colT}
  \end{array}$
}\\[0.5mm]
  with $\p\neq \q$, $I\neq \emptyset,$ and
  $\ell_i \neq \ell_j$ when $i \neq j,$ for all $i,j\in I$. %
\end{definition}

Global and local objects share the
type for a terminated protocol $\mathtt{end}$,
the injection of a variable $X$,
and the recursion construct $\mu X.\dots$; semantic global/local
trees do not include the last two constructs,
since recursion is captured by infinite depth
\iftoggle{full}{(Appendix \appref{subsec:global-A} and \appref{subsec:loc-A})}{(\cite[Appendix \appref{subsec:global-A} and \appref{subsec:loc-A}]{fullversion})}.
\textbf{Global messages:}  
$\msgi \p\q \ell {\tS} {\G}$ describes a protocol where %
participant $\p$ sends to $\q$ one message %
with label $\ell_i$ and a value of sort $\dte{\tS_i}$ as payload, for
some $i \in I$; %
then, depending on which $\ell_i$ was sent by $\p$, %
the protocol continues as $\G_i$. With trees, we make explicit
the two asynchronous stages of the communication of a
message: $\comsgni \p\q \ell {\tS} {\coG}$ represents the status where a
message from $\p$ to $\q$ has yet to
be sent; $\comsgsi \p\q {\ell_j} \ell {\tS} {\coG}$ represents the next
status: the label $\ell_j$ has been selected,
$\p$ has sent the message, with payload $\tS_j$,
but $\q$ has not received it yet. \textbf{Local messages:}
\emph{send type} $\lsend \q \ell {\tS} {\lT}$:
the participant 
sends a message to $\q$; if the participant chooses the label $\ell_i$, %
then the sent payload value must be of sort
$\tS_i$, and it continues as prescribed by $\dlt{\lT_i}$. %
\emph{Receive type} $\lrecv \p \ell {\tS} {\lT}$: %
the participant waits %
to receive from $\p$ a value of sort $\tS_i$, for some $i \in I$, %
via a message with label $\ell_i$; then the protocol continues
as prescribed by $\dlt{\lT_i}$. The same intuition holds, mutatis mutandis, for trees.

We define the function $\parti$ to return the set of participants (or
\emph{roles}) of a global type; e.g. $\p$ and $\q$ above.  For global trees, we
define the predicate \pof{}.  The formal definitions can be found in
\iftoggle{full}{Appendix \appref{subsec:global-A}}{\cite[Appendix \appref{subsec:global-A}]{fullversion}}.

We formalise equi-recursion by relating
types with their representation as trees, as follows:
\begin{definition}[Unravelling]
  \label{def:gunroll}%
  \label{def:lunroll}%
 {\em Unravelling of global types types} (\code{\small GUnroll} in
  \code{\small Global/Unravel.v}) and {\em unravelling of local types} (\code{\small LUnroll} in
  \code{\small Local/Unravel.v}) are the relations between global/local types and semantic
  global/local trees coinductively defined by: 
  \[
  \small
  \hspace{-2mm}\begin{array}{l}
    \begin{array}{ll}
      \rulename{g-unr-end}\ \newDfrac{ }{\gunroll \gend \cogend}\ &
      \rulename{g-unr-rec}\ \newDfrac{\gunroll {\dgt{\G \{ \gmu\gX.\G / \gX \}}} \coG}{\gunroll{\grec \gX \G} \coG}
  \end{array}\\[2mm]
\begin{array}{l}
     \rulename{g-unr-msg}\ \newDfrac{\forall i\in I. \gunroll {\dgt{\G_i}} {\dcog{\coG_i}} }{\gunroll{\msgi \p\q \ell {\tS} {\G}}{\comsgni \p\q \ell {\tS} {\coG}} }
\end{array}\\[3mm]
%
    \begin{array}{ll}
   \rulename{l-unr-end} & \hspace{-2mm}\rulename{l-unr-send}\\
   \newDfrac{ }{\lunroll \lend \colend} &
       \hspace{-2mm}\newDfrac{\forall i\in I. \lunroll {\dlt{\lT_i}} {\dcol{\colT_i}}}
             {\lunroll{  \lsend \q \ell {\tS} {\lT} }{\colsend \q \ell {\tS} {\colT}}  }

  \\[3mm]
  \rulename{l-unr-rec} &  \hspace{-2mm}\rulename{l-unr-recv}\\
  \newDfrac{\lunroll {\dlt{ \lT \{ \lrec{\lX}{\lT} / \lX \}}} \colT}{\lunroll{\lrec \lX \lT} \colT}
  &\hspace{-2mm}\newDfrac{\forall i\in I. \lunroll {\dlt{\lT_i}} {\dcol{\colT_i}} }{\lunroll{  \lrecv \p \ell {\tS} {\lT}}{\colrecv \p \ell {\tS} {\colT}} }
  \end{array}
  \end{array}
\]
\end{definition}
Representing types in terms of trees
allows for a smoother mechanisation of the semantics.
The unravelling operation formally relates the two representations.

\subsection{Projections, or how to discipline communication}
\label{subsec:projection}

\emph{Projection} is the key operation of
multiparty session types: it extracts a local perspective of the protocol, from
the point of view of a single participant, from the global bird's-eye
perspective offered by global types. We define both inductive and
coinductive projections.

\begin{figure*}
  \begin{footnotesize}
  \begin{subfigure}{\textwidth}
  \[
  \begin{array}{@{}c@{}}
    \begin{array}{@{}lll@{}}
      \rulename{proj-end} &\rulename{proj-send} & \rulename{proj-recv}\\
      \projt{\gend} \pr = \lend  &
        \pr=\p\ \ \text{implies}\ \ \projt{\msgi \p\q \ell {\tS} {\G}} \pr=\lsendni \q \ell {\tS} { {\color{black}\projt{\G_{\dgt i}} \pr} }
      &
        \pr=\q\ \ \text{implies}\ \ \projt{\msgi \p\q \ell {\tS} {\G}} \pr=\lrecvni \p \ell {\tS} { {\color{black}\projt{\G_{\dgt i}} \pr} }
    \end{array}\\[2mm]
    \begin{array}{lll}
      \rulename{proj-var} & \rulename{proj-rec} &\rulename{proj-cont}\\
      \projt{\gX}{\pr}=\lX
      & \projt{(\grec \gX \G)}{\pr}=\lrec \lX {} (\projt{\G}{\pr})\ \text{if}\ \ \guarded  (\projt{\G}{\pr}) &
         \text{$\pr\neq \p$, $\pr \neq \q$ and $\forall i,j\in I$, $\projt{\G_{\dgt{i}}}\pr=\projt{\G_{\dgt{j}}}\pr$  implies}\ \ \projt{\msgi \p\q \ell {\tS} {\G}} \pr = \projt {\G_{\dgt{\bar{\imath}}}} \pr\ \ \text{(with $\bar{\imath}\in I$)}
    \end{array}
    \end{array}
    \]
  \vspace{-1mm}
  \subcaption{Rules for recursive projection, Definition \ref{def:projection}}
  \label{subfig:proj}
  \end{subfigure}
  \begin{subfigure}{\textwidth}
    \[
  \begin{array}{@{}c@{}}
  \begin{array}{@{}lll@{}}
    \rulename{co-proj-send-1} &     \rulename{co-proj-send-2} &    \rulename{co-proj-recv-1}\\
    \newDfrac
        {\pr=\p\ \ \ \ \ \ \ \ \forall i\in I. \coproj {\pr} {\coG_{\dcog i}} {\colT_{\dcol i}} }
 	{\coproj {\pr} {\comsgni \p\q \ell {\tS} {\coG}} {\colsend \q \ell {\tS} {\colT}}}  &
    \newDfrac{\pr\neq\q\ \ \ \ \ \ \ \ \forall i\in I. \coproj {\pr} {\coG_{\dcog i}} {\colT_{\dcol i}} }
             { \coproj {\pr} {\comsgsi  \p\q {\ell_j} \ell {\tS} {\coG} } {\colT_{\dcol j } } }
             &
             \newDfrac{\pr=\q\ \ \ \ \ \ \ \ \forall i\in I. \coproj {\pr} {\coG_{\dcog i}} {\colT_{\dcol i}} }
             { 	\coproj {\pr} { \comsgni \p\q \ell {\tS} {\coG}} {\colrecv \p \ell {\tS} {\colT} } }
  \end{array}\\[4mm]
  \begin{array}{@{}lll@{}}
          \rulename{co-proj-recv-2} & \rulename{co-proj-cont}  &   \rulename{co-proj-end}\\

           \newDfrac{\pr=\q\ \ \ \ \ \ \ \ \forall i\in I. \coproj {\pr} {\coG_{\dcog i}} {\colT_{\dcol i}} }
          {\coproj {\pr} {\comsgsi  \p\q {\ell_j} \ell {\tS} {\coG} } {\colrecv \p \ell {\tS} {\colT} } }&
 \newDfrac{
          \pr\neq\p\ \ \ \pr\neq\q\ \ \ \forall i\in I. \coproj {\pr} {\coG_{\dcog i}} {\colT_{\dcol i}} \ \ \ \forall i,j\in I.  \colT_{\dcol i}=\colT_{\dcol j}\ \ \
          \forall i\in I. \partof \pr {\coG_{\dcog i}} }
          { \coproj {\pr} { \comsgni \p\q \ell {\tS} {\coG} } {\colT_{\dcol{\bar{\imath}}} } \ \ \text{(with $\bar{\imath}\in I$)}}
          &
    \newDfrac{\neg\ \partof \pr \coG}{\coproj {\pr} {\coG}  {\colend}}
  \end{array}
  \end{array}
  \]
  \vspace{-1mm}
  \subcaption{Rules for coinductive projection, Definition \ref{def:co-projection}}
  \label{subfig:coproj}
  \end{subfigure}
  \vspace{-3mm}\caption{Projection rules}
  \label{fig:projection}
\end{footnotesize}\vspace{-3mm}
\end{figure*}

\begin{definition}  \label{pro}%
\label{def:projection}%
The inductive \emph{projection 
  of a global type onto a participant $\pr$} (\code{\small project} in
\code{\small Projection/IProject.v}) is a partial function
${\small \projt{\_} \pr: \gty \nrightarrow \lty}$
defined by recursion on $\G$ whenever one of the clauses in
Figure~\ref{subfig:proj} applies and the recursive call is defined;
%
%
  \label{co-pro}%
  \label{def:co-projection}%
  the coinductive \emph{projection 
    of a global tree onto a participant $\pr$} (\iftoggle{full}{definitions \code{\small Project} and \code{\small IProj} in \code{\small Projection/CProject.v}}{\code{\small Project}
  and \code{\small IProj} in \code{\small Projection/}\\ \code{\small CProject.v}}) is a relation
  ${\small \coproj {\pr} {\_} {\_}:\rel{\hspace{-0.5mm}\cogty}{\hspace{-0.5mm}\colty}}$ coinductively
  defined in Figure \ref{subfig:coproj}.
\end{definition}

In rules \rulename{co-proj-end} and \rulename{co-proj-cont} we have
added explicit conditions on participants. By factoring in the
predicate \pof , Definition \ref{co-pro} ensures (1) that the
projection of a global tree on a participant outside the protocol is
$\colend$ (rule \rulename{co-proj-end}) and (2) that this discipline
is preserved in the continuations (rule \rulename{co-proj-cont}). We see
that the clauses for projecting of types and trees follow the same
intuition: projecting a global object onto a sending (resp. receiving) role gives a
sending (resp. receiving) local object, provided that the local continuations
are also projections of the corresponding global continuations. As expected,
the tree projection takes care explicitly of
asynchronicity (rules \rulename{co-proj-send-2} and \rulename{co-proj-recv-2}).
This is an adaptation to our coinductive setting of the
definition in \cite[Appendix A.1]{DenielouYoshida2013}. Below we give an example
to clarify the meaning of \rulename{proj-cont}.

%
\begin{example}[Projection]\label{ex:projection}
  About rule \rulename{proj-cont}, we observe that the type
  \iftoggle{full}{
    \[
    \begin{small}
    \begin{array}{lll}
       \dgt{\G'} = \msg{\Alice}{\Bob} \dgt{\{}&\dgt{\ell_1(\tnat) .}
       \msg{\Bob}{\Carol}\dgt{ \ell(\tnat) .\gend ,}&\\
    &\dgt{\ell_2(\tnat) .
    \msg{\Alice}{\Carol}} \allowbreak \dgt{\ell(\tnat) .\allowbreak\gend} &\dgt{\}}
    \end{array}
    \end{small}
    \]
  }{
  {\small $ \dgt{\G'} = \allowbreak \msg{\Alice}{\Bob} \allowbreak \dgt{\{ \ell_1(\tnat) .}
    \msg{\Bob}{\Carol}\allowbreak\dgt{ \ell(\tnat)}\allowbreak
  \dgt{ .\gend ,\ell_2(\tnat) .
    \msg{\Alice}{\Carol}} \allowbreak \dgt{\ell(\tnat) .\allowbreak\gend \}}
  $} }
  is not projectable onto \Carol{}, since, after skipping the first interaction between
  \Alice{} and \Bob{}, it would not be clear whether \Carol{} should
  expect a message
  from  \Alice{} or from \Bob{}. If we take instead
  \iftoggle{full}{
    \[
    \begin{small}
      \begin{array}{lll}
            \dgt{\G} = \msg{\Alice}{\Bob}
            \dgt{\{}& \dgt{\ell_1(\tnat) .\msg{\Bob}{\Carol} \ell(\tnat) .\gend ,}&\\
          &  \dgt{ \ell_2}\dgt{(\tbool)}\dgt{ .
    \msg{\Bob}{\Carol}}
  \dgt{\ell(\tnat).\gend}&\dgt{ \}}\text{},

    \end{array}
    \end{small}
    \]
  }{\small $
    \dgt{\G} = \msg{\Alice}{\Bob}
    \dgt{ \{ \ell_1(\tnat) .
      \msg{\Bob}{\Carol}}\allowbreak
    \dgt{ \ell(\tnat) .}\allowbreak \dgt{\gend , \ell_2}\allowbreak\dgt{(\tbool)}
        \allowbreak \dgt{ .
    \msg{\Bob}{\Carol}}
  \dgt{\ell(\tnat).\gend \}}
  $,}
  the projection $\projt\G \Carol$ is well defined as the local type
  {\small $\lT = \lrcv{\Bob}\allowbreak\dlt{\ell(\tnat).\lend}$}. Following common
  practice, we use an option type to encode projection as
  a partial function in Coq.
\end{example}
%

Coinductive projection is more permissive than its inductive
counterpart, since it removes the technical issues related to formally
dealing with (equi)recursion, \iftoggle{full}{thus }{}allowing for a smoother development in Coq 
(\iftoggle{full}{Appendix \appref{subsec:projection-A}}{\cite[Appendix \appref{subsec:projection-A}]{fullversion}}
and \citeN[Definition 3.6 and Remark 3.14]{GHILEZAN2019127}).

If, when reasoning about semantics, coinductive trees are
more convenient objects to work with, we still want to rely
on session types for imposing a typing discipline on
the communication. The followng theorem allows us to do so.

\begin{restatable}[Unravelling preserves projections]{theorem}{unrpro}\label{thm:unr-pro}(\code{\small ic\_proj} in \code{\small Projection/Correctness.v}.)
  Given a global type $\G$, such that $\guarded\ \G$ and $\closed\ \G$, if
\begin{enumerate*}[label=(\alph*)]
\item there exists a local type $\lT$ such that $\projt{\G} \pr = \lT\ $,
\item there exists a global tree $\coG$ such that $\gunroll \G \coG$ and,
\item there exists a local tree $\colT$ such that $\lunroll \lT \colT$,
\end{enumerate*}
then $\coproj \pr \coG \colT$.
\end{restatable}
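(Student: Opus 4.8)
The conclusion $\coproj{\pr}{\coG}{\colT}$ is a coinductively defined relation, so the natural tool is \emph{parameterised coinduction} (PaCo), exactly as used throughout the development. The plan is to prove the statement by coinduction on $\coproj{\pr}{\coG}{\colT}$, taking as coinductive invariant the relation
\[
R = \{\,(\coG,\colT) \mid \exists\,\G\,\lT.\ \guarded\ \G \wedge \closed\ \G \wedge \projt{\G}{\pr}=\lT \wedge \gunroll{\G}{\coG} \wedge \lunroll{\lT}{\colT}\,\},
\]
which contains the goal by hypotheses (a)--(c). It then suffices to show that every pair in $R$ is matched by one rule of Figure~\ref{subfig:coproj} with all continuations back in $R$, so that PaCo can discharge the coinduction.

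The difficulty in discharging a single coinductive step is that the witness $\G$ need not expose a head message or $\gend$ directly: it may begin with a prefix of recursion binders. So, \emph{before} each coinductive step I would first \emph{normalise} $\G$ by unfolding its leading recursions. This is an inner \emph{induction}, justified by $\guarded\ \G$: each application of \rulename{g-unr-rec} replaces $\grec\gX{\G'}$ by $\G'\subst{\grec\gX{\G'}}{\gX}$, and guardedness guarantees the substituted copy sits under a message, so the number of leading binders strictly decreases and the process terminates at a head constructor; $\closed\ \G$ rules out the bare-variable case (\rulename{proj-var}). The lemmas feeding this normalisation are: (i) $\guarded$ and $\closed$ are preserved by the unfolding substitution; (ii) unravelling is invariant under unfolding, i.e.\ $\gunroll{\grec\gX{\G'}}{\coG}$ iff $\gunroll{\G'\subst{\grec\gX{\G'}}{\gX}}{\coG}$ (and likewise on the local side via \rulename{l-unr-rec}); and (iii) the \emph{projection/substitution commutation}: $\projt{(\G'\subst{\grec\gX{\G'}}{\gX})}{\pr}$ and $(\projt{\G'}{\pr})\subst{\lrec{\lX}{(\projt{\G'}{\pr})}}{\lX}$ have the same unravelling, so that unfolding on the global side is matched on the local side. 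I expect (iii), reconciling \rulename{proj-rec}/\rulename{proj-var} with \rulename{g-unr-rec}/\rulename{l-unr-rec}, to be the main obstacle: it forces projection, defined by recursion on the finite syntax, to interact correctly with the two coinductive unravellings, and it is what makes the finite inner induction cohere with the outer coinduction without consuming the coinductive budget (constructors are produced only at messages and $\gend$).

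Once $\G$ is in head-normal form I would case split on its head. If $\G=\gend$, then $\coG=\cogend$, $\lT=\lend$, $\colT=\colend$, and \rulename{co-proj-end} applies since $\pr$ participates in no terminated tree. If $\G=\msgi{\p}{\q}{\ell}{\tS}{\G}$, then \rulename{g-unr-msg} gives $\coG=\comsgni{\p}{\q}{\ell}{\tS}{\coG}$ with $\gunroll{\G_i}{\coG_i}$, and I split on $\pr$: when $\pr=\p$ (resp.\ $\pr=\q$), rules \rulename{proj-send}, \rulename{l-unr-send} (resp.\ \rulename{proj-recv}, \rulename{l-unr-recv}) expose $\colT=\colsend{\q}{\ell}{\tS}{\colT}$ (resp.\ a receive), and \rulename{co-proj-send-1} (resp.\ \rulename{co-proj-recv-1}) applies with each continuation pair $(\coG_i,\colT_i)$ placed back into $R$, witnessed by the subterm $\G_i$ (still $\guarded$ and $\closed$) and $\projt{\G_i}{\pr}$. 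These are precisely the instances guarded by the coinductive hypothesis.

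The remaining case, $\pr\neq\p$ and $\pr\neq\q$, is handled by \rulename{proj-cont}, which forces all $\projt{\G_i}{\pr}$ to coincide with $\lT$; setting every $\colT_i:=\colT$ (each pair $(\coG_i,\colT)$ then lying in $R$, witnessed by $\G_i$ and the common projection $\lT$) satisfies the equality premise of \rulename{co-proj-cont}. Here I would additionally decide whether $\pr$ participates: if $\partof{\pr}{\coG_i}$ holds for all $i$, I conclude by \rulename{co-proj-cont}; otherwise $\pr$ participates nowhere, whence $\neg\,\partof{\pr}{\coG}$ and \rulename{co-proj-end} applies with $\colT=\colend$. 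Dispatching this split cleanly needs an auxiliary \emph{coherence lemma} relating participation to projection---namely that (via $\gunroll{\G_i}{\coG_i}$) $\partof{\pr}{\coG_i}$ fails exactly when $\pr\notin\parti(\G_i)$, in which case $\projt{\G_i}{\pr}$ unravels to $\colend$---so that the common value of the branch projections is consistent, and in particular $\pr$ participates in all continuations or in none. Assembling the cases, each step of $R$ is matched by a coprojection rule with continuations in $R$, and PaCo closes the proof.
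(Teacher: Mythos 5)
Your proposal matches the paper's proof: the authors likewise first eliminate leading $\grec\gX{}$ binders by finite unfolding (justified by $\guarded$ and $\closed$, with closure preserved under unfolding), rely on a lemma (\code{LUnroll\_ind}) stating that the projection of the unfolded type unravels to the same local tree --- your commutation lemma (iii) --- and then run a PaCo coinduction on $\coproj{\pr}{\coG}{\colT}$ for the non-recursive head cases (\code{project\_nonrec}). Your extra detail on the \rulename{proj-cont}/participation split is consistent with what the mechanised proof must do, so this is essentially the same argument.
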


This first central result closes the first metatheory square \textcolor{orange}{(M.1)}
of \theDiagram\ in Figure \ref{fig:dia}. For a sketch of its proof
see
\iftoggle{full}{Appendix \appref{appendix:meta-proofs}, Theorem \appref{thm:unr-pro-A}}{\cite[Appendix \appref{appendix:meta-proofs}, Theorem \appref{thm:unr-pro-A}]{fullversion}}.

\subsection{Projection Environments for Asynchronous Communication}
\label{subsec:async_projection}

In this subsection, we introduce key concepts for building an
asynchronous operational semantics for MPST.
In \cite{DenielouYoshida2013} a precise correspondence is drawn
between communicating finite-state
automata and MPST. We do not formalise an explicit
syntax for automata, but develop labelled transition systems for
global and local trees with automata in mind.

Consider the following scenario:
$\p$ sends a message to $\q$
with label $\dcol{\ell}$ and payload of sort $\tS$ and continues on
$\colT$, and dually $\q$ receives from $\p$ the message, with same
label and payload, and then continues on $\dcol{\colT'}$.
For $\q$ to receive the message, it is necessary that $\p$ has first
sent it. To model this asynchronous behaviour, we use FIFO
queues: in the designated queue $Q(\p,\q)$ (empty at first) we
enqueue the message sent from $\p$, until the
message is received by $\q$ and removed from the queue.
%
We use one queue for each ordered
  pair of participants $(\p,\q)$ to store in-transit messages sent
  from $\p$ to $\q$, and we collect such queues in \emph{queue environments}.
\begin{definition}[Queue environments]
\label{def:qenv}
We call \emph{queue environment} (notation {\small \qenv} in \code{\small Local/Semantics.v}) any finitely supported function that
maps a pair of participants into a finite sequence (\emph{queue}) of
pairs of labels and sorts.
\end{definition}


We define the operations of enqueuing and
dequeuing on queue environments:
\\[0.5mm]
\centerline{
$  \small
\begin{array}{lll}
\enq\ Q\ (\p,\q)\ (\ell,\tS) &=& Q[(\p,\q)\mapsfrom Q(\p,\q)@(\ell,\tS)]\\
\deq\ Q\ (\p,\q) &=& \code{if}\ Q(\p,\q)=(\ell,\tS)\#s\\
                  && \code{then}\ ((\ell,\tS),Q[(\p,\q)\mapsfrom s])\ \ \ \code{else}\ \code{None}
\end{array}
$}\\[0.5mm]
We use $\#$ as the ``cons'' constructor for lists and
$@$ as the ``append'' operation; $f[x\mapsfrom y]$ denotes the
updating of a function $f$ in $x$ with $y$, namely
$f[x\mapsfrom y]\ x'=f\ x'$ for all $x\neq x'$ and
$f[x\mapsfrom y]\ x=y$.  We use option types for
partial functions, with $\code{None}$ as the standard returned
value where the function is undefined.
In case the sequence $Q(\p,\q)$ is empty \deq~will
not perform any operation on it, but return $\code{None}$; in case the
sequence is not empty it will return both its head and its tail (as a
pair). We denote the empty queue environment by $\epsilon$, namely
$\epsilon\ (\p,\q)=\code{None}$ for all $(\p,\q)$.

Global trees can represent stages of the execution, where
a participant has already sent a message, but it has not yet been received.
We adapt the ``queue projection'' from \citep[Appendix
A.1]{DenielouYoshida2013} to our coinductive setting, to associate
global trees to the queue contents of a system.

\begin{definition}[Queue projection](Definition \code{\small qProject} in
  \code{\small Projection/QProject.v})
  \label{qpro}%
  \label{def:q-projection}%
  \emph{Projection on queue environments
  of a global tree} (\emph{queue projection} for short) %
  is the relation \iftoggle{full}{\\}{}${\small \qproj {\_}  {\_}\ :\ \rel{\cogty}{\qenv} }$ coinductively specified by:
  \[
  \small
\begin{array}{ll}
    \rulename{q-proj-send}
    \newDfrac{\forall i\in I. \qproj {\coG_{\dcog i}} {Q}\ \ \ \ \ \ \ \ Q(\p,\q)=\code{None}
    }{\qproj {\comsgni \p\q \ell {\tS} {\coG}} {Q}} &
  \newDfrac{\rulename{q-proj-end}}{\qproj \cogend  {\epsilon}} \\[4mm]
    \multicolumn{2}{c}{    \rulename{q-proj-recv} \
      \newDfrac{\qproj {\coG_{\dcog{j}}} {Q}\ \ \ \ \ \ \ \ \deq\ Q'(\p,\q)= ((\ell_j, \tS_j),Q)}
               {\qproj {\comsgsi \p\q {\ell_j} \ell {\tS} {\coG}} {Q'}}
    }
  \end{array}
  \]
See \iftoggle{full}{Appendix \appref{subsec:async_projection-A}}{\cite[Appendix \appref{subsec:async_projection-A}]{fullversion}} for more details.
\end{definition}

Analogously to queue environments,
we consider all the local
types of the protocol at once. 
\begin{definition}[Local environments]
\label{def:env}
We call \emph{local environment}, or simply \emph{environment}, any
finitely supported function $\dcol{E}$ that maps participants into
local types.
\end{definition}

We are interested in those environments that are defined on the participants
of a global protocol $\coG$ and that map each
participant $\p$ to the projection of $\coG$ onto such $\p$.
\begin{definition}[Environment projection](Definition\\ \code{\small eProject} in \code{\small Projection/CProject.v}.)
\label{def:epro}
We say that $\dcol{E}$ is an environment projection for $\coG$,
notation $\coG\upharpoonright\dcol{E}$, if it holds that
\ $\forall \p.\ \coproj \p  \coG {(\dcol{E\ \p})}$.
\end{definition}
%

We define the semantics on
a set of local types together with queue environments. We therefore consider the
projection of a global tree both on local environments and on queue
environments, together in one shot.
\begin{definition}[One-shot projection](Definition\\ \code{\small Projection} in \code{\small Projection.v})
\label{def:ospro}
We say that the pair of a local environment and of a queue environment $(\dcol{E},Q)$ is a (one-shot) projection for the global tree $\coG$, notation $\coG\osproj(\dcol{E},Q)$ if it holds that:
$\coG\upharpoonright\dcol{E} \quad \text{and} \quad \qproj \coG Q$.
\end{definition}
\begin{example}
  Let us consider the global tree:
  $\coG = \comsgs \p \q \ell \allowbreak\dcog{\ell(\tS).}\comsgn \q \p
  \dcog{\ell(\tS).}\comsgn \q \p \allowbreak\dcog{\ell(\tS).\dots}$ .
  Participant $\p$ has sent a message to $\q$, $\q$ will receive it
  next (but has not yet) and then the protocol continues indefinitely
  with $\q$ sending a message to $\p$ after the other. We define
  $\dcol{E}$ 
  such that:
  $ \ \ \dcol{E\ \p}=\colrcv \q \dcol{\ell(\tS).}\colrcv \q
  \dcol{\ell(\tS).\dots}\ \ $ and
  $ \ \ \dcol{E\ \q}=\colrcv \p \dcol{\ell(\tS).}\colsnd \p
  \dcol{\ell(\tS).}\allowbreak \colsnd \p \dcol{\ell(\tS).\dots}\ $ . We then
  define $Q$
  such that: $\ \ Q(\p,\q)=[(\ell,\tS)]\ \ $ and
  $\ \ Q(\q,\p)=\code{None}$. It is easy to verify
  that $\coG\osproj(\dcol{E},Q)$; observe
  that the only ``message'' enqueued in $Q$ is $(\ell,\tS)$, since
  this is the only one sent, but not yet received (at this stage of
  the execution).
\end{example}

\subsection{Labelled Transition Relations for Tree Types}
\label{subsec:step}

At the core of the \emph{trace semantics} for session types
lies a labelled transition system (LTS) defined on trees,
with regard to \emph{actions}. The basic \emph{actions}
(datatype \code{\small act} in \code{\small Common/Actions.v}) of
our asynchronous communication are objects, ranged over by $a$, of the
shape either:
$!\p\q(\ell,\tS)$: send $!$ action, from participant $\p$ to
  participant $\q$, of label $\ell$ and payload type $\tS$, or
$?\q\p(\ell,\tS)$: receive $?$ action, from participant $\p$ at
  participant $\q$, of label $\ell$ and payload type $\tS$.
We define the \emph{subject} of an action $a$ (definition
\code{\small subject} in \code{\small Common/Actions.v}), $\subject a$, as $\p$ if
$a=!\p\q(\ell ,\tS)$ and as $\q$ if $a=?\q\p(\ell,\tS)$.%
\footnote{The representation of actions is directly taken from
  \cite{DenielouYoshida2013}, however we have swapped the order of
  $\p$ and $\q$ in the receive action, so that the subject of an
  action always occurs in first position.}
Given an action, our types (represented as trees) can perform a
reduction step.

\begin{definition}[LTS for global trees]\label{def:gstep}\ \\(\code{\small step} in \code{\small Global/Semantics.v})
  The \emph{labelled transition relation for global trees}
  (\emph{global reduction} or \emph{global step} for short) %
  is, for each action $a$, the relation
  $\_\ \stepa{a}\ \_\ :\ \rel{\cogty}{\cogty} $ inductively specified
  by the following clauses:\\[1mm]
\centerline{
$\small
\begin{array}{c}
  \rulename{g-step-send}
  \dfrac{a=!\p\q(\ell_j,\tS_j)}{\comsgni \p \q \ell \tS \coG \stepa{a} \comsgsi \p \q {\ell_j} \ell \tS \coG} \\[5mm]
  \rulename{g-step-recv}
  \dfrac{a=?\q\p(\ell_j,\tS_j)}{  \comsgsi \p \q {\ell_j} \ell \tS \coG \stepa{a} \coG_{\dcog{j}} }\\
\rulename{g-step-str1}
\dfrac{\subject a \neq \p \quad \subject a \neq \q \quad \forall i\in I. \coG_{\dcog{i}}\stepa{a}\dcog{{\coG}_i'}}{ \comsgni \p \q \ell \tS \coG  \stepa{a} \comsgni \p \q \ell \tS {\coG'} }\\[2mm]
 \rulename{g-step-str2}
\dfrac{\subject a \neq \q \quad \coG_{\dcog{j}}\stepa{a}\dcog{{\coG}_j'}\quad \forall i\in I\backslash\{ j \}. \coG_{\dcog{i}}=\dcog{{\coG}'_i}}{ \comsgsi \p \q {\ell_j} \ell \tS \coG  \stepa{a} \comsgsi \p \q {\ell_j} \ell \tS {\coG'} }
\end{array}
$}
\end{definition}
The step relation describes a labelled transition system for global
trees with the following intuition: \rulename{g-step-send}  \emph{sending base case}:
with the sending action $!\p\q(\ell_j,\tS_j)$,
a message with label $\ell_j$ and payload type
  $\tS_j$ is sent by $\p$, but not yet received by $\q$;
  \rulename{g-step-recv}  \emph{receiving base case}: with the receiving
  action $?\q\p(\ell_j,\tS_j)$,
  a message with label $\ell_j$ and payload type $\tS_j$,
  previously sent by $\p$, is now received by $\q$;
  in \rulename{g-step-str1}, a step is allowed
  to be performed under a sending constructor
  $\dcog{\p \to \q}$: each time that the subject of that
  action is different from $\p$ and from $\q$ and each
  continuation steps;
  \rulename{g-step-str2} with an action $a$ a step is allowed
  to be performed under a receiving constructor:
each time that the subject of
  that action is different from $\q$ ($\p$ has already sent the
  message and the label $\ell_j$ has already been selected), the
  continuation corresponding to $\ell_j$ steps
  and others stay as the same.

This semantics allows for some degree of non-determinism. For instance,
$\comsgni \p \q \ell \tS \coG$ could perform a step according to both
rules \rulename{g-step-send} and \rulename{g-step-str1} (depending
on the subject of the action).


Below we 
define a transition system for environments of local trees, together
with environments of queues.
\begin{definition}[LTS for
  environments]\label{lstep}\label{def:lstep}\
  (\code{\small l\_step} in\\ \code{\small Local/Semantics.v})\
  The \emph{labelled transition relation for environments}
  (\emph{local reduction} or \emph{local step} for short) is, for each
  $a$, the relation
  $\_\ \stepa{a}\ \_\ :\ \rel{(\env * \qenv)}{(\env * \qenv)} $
  inductively specified by the following clauses:\\[1mm]
\centerline{
 $\small
\begin{array}{c}
  \rulename{l-step-send}\
  \dfrac{a=!\p\q(\ell_j,\tS_j)\quad \dcol{E\ \p} = \colsend \q \ell \tS \colT}
        {(\dcol{E},Q) \stepa{a} (\dcol{E[\p\mapsfrom \colT_j]},\enq\ Q\ (\p,\q)\ (\ell_j,\tS_j))}\\[5mm]
  \rulename{l-step-recv}\
  \dfrac{
\begin{array}{l}
a=?\q\p(\ell_j,\tS_j)\\
\dcol{E\ \q} = \colrecv \p \ell \tS \colT \quad
Q(\p,\q)=(\ell_j,\tS_j)\#s
\end{array}}
        {(\dcol{E},Q) \stepa{a} (\dcol{E[\q\mapsfrom \colT_j]},Q[(\p,\q)\mapsfrom s])}
\end{array}
$}
\end{definition}\vspace{-2mm}

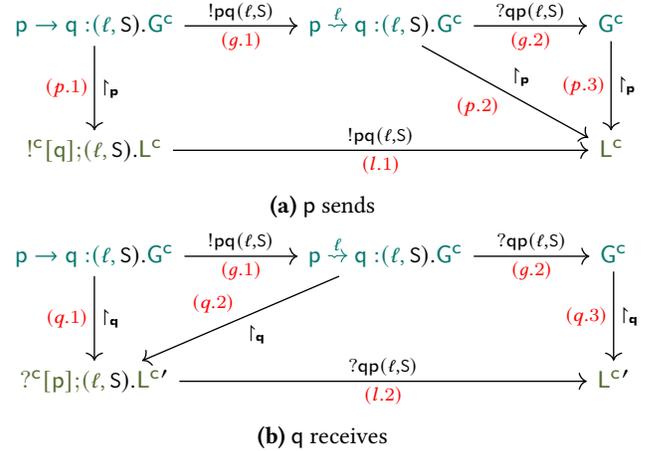
\begin{figure}
\begin{small}
\begin{subfigure}{\columnwidth}
  \[
\begin{tikzcd}
\comsgn \p \q \dcog{(\ell, \tS).} \coG \arrow[rr, "{!\p\q (\ell,\tS)}","(g.1)"' red]\arrow[dd,"\upharpoonright_{\p}","(p.1)"' red]
& &
\comsgs \p \q {\ell}  \dcog{(\ell, \tS).} \coG\arrow[rr, "{?\q\p (\ell,\tS)}","(g.2)"' red]\arrow[rrdd, "\upharpoonright_{\p}","(p.2)"' red]
& &
\coG\arrow[dd, "\upharpoonright_{\p}","(p.3)"' red] \\
& & & &\\
\colsnd \q \dcol{(\ell, \tS).} \colT \arrow[rrrr, "{!\p\q (\ell,\tS)}","(l.1)"' red]& & & &\colT
\end{tikzcd}
\]
\vspace*{-2mm}
\subcaption{$\p$ sends}\label{subfig:send}
\end{subfigure}\newline\vspace*{-2mm}
\begin{subfigure}{\columnwidth}
  \[
\begin{tikzcd}
\comsgn \p \q \dcog{(\ell, \tS).} \coG\arrow[rr, "{!\p\q(\ell,\tS)}","(g.1)"' red]\arrow[dd,"\upharpoonright_{\q}","(q.1)"' red]
& &
\comsgs \p\q \ell \dcog{(\ell, \tS).} \coG\arrow[rr, "{?\q\p(\ell,\tS)}","(g.2)"' red]\arrow[lldd, "\upharpoonright_{\q}","(q.2)"' red]
& &
\coG\arrow[dd, "\upharpoonright_{\q}","(q.3)"' red] \\
& & & &\\
\colrcv \p \dcol{(\ell, \tS).} \dcol{\colT'} \arrow[rrrr, "{?\q\p(\ell,\tS)}","(l.2)"' red]& & & &\dcol{\colT'}
\end{tikzcd}
\]
\vspace*{-2mm}
\subcaption{$\q$ receives}\label{subfig:recv}
\end{subfigure}
\vspace*{-3mm}
\caption{Basic send/receive steps for global and local trees}\label{fig:steps}
\vspace*{-4mm}
\end{small}
\end{figure}

\begin{example}[Basic 
    steps for global and local trees]
  Figure \ref{subfig:send} shows the transitions for a global tree,
  regulating the sending of a message from $\p$ to $\q$, and
  the local transition for its projection on $\p$. The
  asynchronicity of our system is witnessed by the two different
  steps: $\color{red}{(g.1)}$, for the sending action
  $!\p\q(\ell,\tS)$, and $\color{red}{(g.2)}$, for the receiving one
  $?\q\p(\ell,\tS)$. Projecting
  $\comsgn \p \q \dcog{(\ell, \tS).} \coG$ on $\p$ (arrow
  $\color{red}{(g.1)}$) gives us a local tree that performs a sending
  step $\color{red}{(l.1)}$ corresponding to $\color{red}{(g.1)}$, and
  projection is preserved (arrow $\color{red}{(p.2)}$). However this
  does not happen for the receiving step $\color{red}{(g.2)}$: here
  the projections on $\p$ of
  $\comsgs \p\q \ell \dcog{(\ell, \tS).} \coG$ along
  $\color{red}{(p.2)}$ and of $\coG$ along $\color{red}{(p.3)}$ are
  the same. Dually if we consider the projection on the
  receiving participant $\q$, Figure \ref{subfig:recv}. Here the
  projections along $\color{red}{(q.1)}$ and $\color{red}{(q.2)}$,
  corresponding to the global tree performing a sending action, result
  in the same local tree. We have instead a local step
  $\color{red}{(l.2)}$ preserving the local projections on $\q$ along
  $\color{red}{(q.2)}$ and $\color{red}{(q.3)}$ for the receiving
  action along $\color{red}{(g.2)}$.
\end{example}

Figure \ref{fig:steps} confirms our intuition: when the global tree
performs one step, \emph{there is one local tree} (namely, one projection of the
global tree) such that it performs a corresponding
step. We have indeed defined semantics for collections of local trees,
as opposed to single local trees. The formal relation of the
small-step reductions with respect to projection is established with
soundness and completeness results (see
\iftoggle{full}{Appendix \appref{appendix:meta-proofs}}{\cite[Appendix \appref{appendix:meta-proofs}]{fullversion}}
for proof outlines).

\begin{restatable}[Step Soundness]{theorem}{stepsound}(Theorem \code{\small Project\_step} in \code{\small TraceEquiv.v})\label{thm:step-sound}\
  If $\coG \stepa{a} \dcog{\coG'}$ and $\coG \osproj (\dcol{E},Q)$,
  there exist $\dcol{E'}$ and $Q'$ such that
  $\dcog{\coG'} \osproj (\dcol{E'},Q')$ and
  $(\dcol{E},Q) \stepa{a} (\dcol{E'},Q')$.
\end{restatable}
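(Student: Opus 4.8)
The plan is to argue by induction on the derivation of the global step $\coG \stepa{a} \coG'$, doing a case analysis on the last rule and, in each case, inverting the one-shot projection $\coG \osproj (\dcol{E}, Q)$ (that is, both the environment projection $\coG \upharpoonright \dcol{E}$ and the queue projection $\qproj{\coG}{Q}$) to read off how $\dcol{E}$ and $Q$ decompose along the top constructor of $\coG$. Since the step relation is inductively defined with finite derivations while projection is coinductive, the inversions yield the branch-wise projection witnesses for free, and the reconstruction of $\coG' \osproj (\dcol{E'}, Q')$ amounts to reapplying a single projection rule on top of witnesses we already hold, so no fresh coinduction is needed.

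For the base case \rulename{g-step-send}, with $\coG = \comsgni \p\q \ell {\tS} {\coG}$ and $a = {!\p\q(\ell_j,\tS_j)}$, inverting the environment projection at the sender forces $\dcol{E\ \p} = \colsend \q \ell {\tS} {\colT}$ with $\coproj{\p}{\coG_i}{\colT_i}$ for every $i$ (rule \rulename{co-proj-send-1}), so \rulename{l-step-send} fires and delivers $\dcol{E'} = \dcol{E[\p \mapsfrom \colT_j]}$ and $Q' = \enq\ Q\ (\p,\q)\ (\ell_j,\tS_j)$. I then re-establish $\coG' \osproj (\dcol{E'},Q')$: the sender's new obligation follows from \rulename{co-proj-send-2}, the receiver and every bystander are untouched and handled by \rulename{co-proj-recv-2} and \rulename{co-proj-send-2} reusing the continuation witnesses, and the queue side follows from \rulename{q-proj-recv} after noting that \rulename{q-proj-send} guarantees $Q(\p,\q)$ was empty, so dequeuing the freshly enqueued message returns exactly the original $Q$. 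The dual base case \rulename{g-step-recv} is symmetric: inverting the receiver's projection (\rulename{co-proj-recv-2}) together with \rulename{q-proj-recv} exposes simultaneously the receive type $\dcol{E\ \q} = \colrecv \p \ell {\tS} {\colT}$ and the head message $Q(\p,\q) = (\ell_j,\tS_j)\#s$, which are precisely the premises of \rulename{l-step-recv}.

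The inductive cases \rulename{g-step-str1} and \rulename{g-step-str2} are the core of the proof. For each continuation $\coG_i$ that steps, I would assemble a one-shot projection $(\dcol{E_i}, Q_i)$ of $\coG_i$ by reading the per-branch local types off the inversion of $\coG$'s projection and the per-branch queues off \rulename{q-proj-send}/\rulename{q-proj-recv}, apply the induction hypothesis to get branch-wise local steps $(\dcol{E_i},Q_i) \stepa{a} (\dcol{E'_i},Q'_i)$, and then glue these into a single step on $(\dcol{E},Q)$. The gluing is enabled precisely by the side conditions $\subject a \neq \p, \q$ (resp.\ $\subject a \neq \q$): the subject of $a$ is neither the sender nor the receiver of the pending top communication, so by \rulename{co-proj-cont} (resp.\ \rulename{co-proj-send-2}) its local type coincides across all the relevant branches; hence every branch step modifies $\dcol{E}$ at the same participant in the same way and they amalgamate into one coherent environment step. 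I close the case by reapplying the top message rule to the branch projections returned by the induction hypothesis to obtain $\coG' \osproj (\dcol{E'},Q')$.

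I expect the main obstacle to be the queue accounting in \rulename{g-step-str2}, where a message is already in transit on $(\p,\q)$ while $a$ moves another message. The delicate subcase is $\subject a = \p$ with $a$ sending again to $\q$: the stepped queue must keep the in-transit $(\ell_j,\tS_j)$ ahead of the newly enqueued payload, and re-establishing \rulename{q-proj-recv} for $\coG'$ reduces to the FIFO commutation law that enqueuing at the tail does not disturb the head, so that $\deq$ of the updated $(\p,\q)$ queue still returns $(\ell_j,\tS_j)$ paired with the properly extended remainder. A second, pervasive difficulty is that the bystander cases rely on the participation predicate \pof being stable under $\stepa{a}$, so that a role is classified by the same rule (\rulename{co-proj-cont} versus \rulename{co-proj-end}) before and after the step; I would factor this out as a standalone invariance lemma relating $\stepa{a}$ and \pof and invoke it uniformly in the structural cases.
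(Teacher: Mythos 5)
Your proposal is correct and follows essentially the same route as the paper: induction on the derivation of the global step, inversion of the (co)projections in the base cases to fire the matching local rule, and branch-wise application of the induction hypothesis followed by re-gluing in the structural cases \rulename{g-step-str1}/\rulename{g-step-str2}, with the same key observations about FIFO queue accounting and the empty-queue condition from \rulename{q-proj-send}. The only difference is organisational: the paper first defines the candidate $(\dcol{E'},Q')$ explicitly as a function \code{run\_step} and then proves the two conjuncts ($\dcog{\coG'}\osproj(\dcol{E'},Q')$ and the local step, the latter via a \code{runnable} predicate) in separate inductions, whereas you derive the witness and both conjuncts within a single case analysis.
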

\vspace{-3mm}
\begin{restatable}[Step Completeness]{theorem}{stepcompl}(Theorem \\\code{\small Project\_lstep} in \code{\small TraceEquiv.v})\label{thm:step-compl}\
  If $(\dcol{E},Q) \stepa{a} (\dcol{E'},Q')$ and
  $\coG \osproj (\dcol{E},Q)$, there exist $\dcog{\coG'}$ such that
  $\dcog{\coG'} \osproj (\dcol{E'},Q')$ and
  $\coG \stepa{a} \dcog{\coG'}$.
\end{restatable}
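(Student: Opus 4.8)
The plan is to invert the derivation of the local step $(\dcol{E},Q)\stepa{a}(\dcol{E'},Q')$, which is necessarily one of the two rules \rulename{l-step-send} or \rulename{l-step-recv}. Each fixes the action $a$, the acting participant, and the shape of its local type, and reduces the goal to two subgoals: building a global tree $\coG'$ with $\coG\stepa{a}\coG'$, and verifying $\coG'\osproj(\dcol{E'},Q')$, i.e. both $\coG'\upharpoonright\dcol{E'}$ and $\qproj{\coG'}{Q'}$. The common engine in both cases is an inversion lemma that, given a projection into an observable local action, locates inside the (possibly infinite) tree $\coG$ the single constructor on which $a$ acts, past a finite prefix of communications not involving $\subject a$. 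I would obtain this finiteness from the inductive core of the coinductively defined projection (the \code{Project}/\code{IProj} pair), so that induction is available on the prefix while coinduction handles the infinite tail.

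For the send case, $a=!\p\q(\ell_j,\tS_j)$ and $\dcol{E\ \p}=\colsend \q \ell {\tS} {\colT}$, with $\dcol{E'}=\dcol{E}[\p\mapsfrom\colT_j]$ and $Q'=\enq\ Q\ (\p,\q)\ (\ell_j,\tS_j)$. From $\coproj \p \coG {(\colsend \q \ell {\tS} {\colT})}$ the inversion lemma decomposes $\coG$ as a finite prefix wrapped around a constructor $\comsgni \p\q \ell {\tS} {\coG''}$; any other shape would, through \rulename{co-proj-recv-1}, \rulename{co-proj-recv-2} or \rulename{co-proj-send-1}, force a receive or a mismatched send on $\p$. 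I then build the global step bottom-up: \rulename{g-step-send} at the base turns $\comsgni \p\q \ell {\tS} {\coG''}$ into $\comsgsi \p\q {\ell_j} \ell {\tS} {\coG''}$, and each prefix layer is re-wrapped with \rulename{g-step-str1} or \rulename{g-step-str2}. The side conditions $\subject a\neq\p$, $\subject a\neq\q$ of these structural rules are exactly the distinctness facts guaranteed by the projection onto $\p$: a prefix layer violating them would have imposed a send or a receive on $\p$, contradicting the shape of $\dcol{E\ \p}$.

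The receive case, $a=?\q\p(\ell_j,\tS_j)$ with $\dcol{E\ \q}=\colrecv \p \ell {\tS} {\colT}$, is symmetric but is pinned down by the queue. The projection $\coproj \q \coG {(\colrecv \p \ell {\tS} {\colT})}$ again yields, via the inductive core, a finite prefix down to a constructor on which $\q$ receives from $\p$, either $\comsgni \p\q \ell {\tS} {\coG''}$ (\rulename{co-proj-recv-1}) or $\comsgsi \p\q {\ell_j} \ell {\tS} {\coG''}$ (\rulename{co-proj-recv-2}). The premise $Q(\p,\q)=(\ell_j,\tS_j)\#s$ makes this queue nonempty, which by \rulename{q-proj-send} (that rule forces the queue empty for an unsent message) rules out the first alternative, and by \rulename{q-proj-recv} forces the selected label of the $\comsgsi$ to coincide with the queue head $\ell_j$. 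The base step is then \rulename{g-step-recv}, re-wrapped through the prefix as before, with the structural side conditions discharged from the projection onto $\q$ together with the FIFO content of the queue.

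The main obstacle, and where the bulk of the work lies, is the second subgoal: re-establishing $\coG'\osproj(\dcol{E'},Q')$. For the environment component I would argue coinductively (via PaCo) that $\coproj \pr {\coG'} {(\dcol{E'\ \pr})}$ holds for every $\pr$: the acting participant's obligation is met by the continuation selected in the base rule, whereas every uninvolved $\pr$ keeps its local type and so its projection must be transported across the replacement of one constructor by its successor, which is precisely what \rulename{co-proj-send-2} and \rulename{co-proj-recv-2} make possible. For the queue component I would re-run the queue-projection derivation on $\coG'$, checking that a send converts the \rulename{q-proj-send} premise (the residual queue for $(\p,\q)$ is empty at the acted constructor) into the matching \rulename{q-proj-recv} premise once the message is appended at the tail, and dually that a receive consumes exactly the head dequeued by the step; the fact that the order of $\comsgsi$ constructors for a fixed pair along a branch mirrors the FIFO order of that queue is the invariant that makes both reconstructions align. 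Keeping the environment and queue reconstructions in lockstep with the prefix induction, and discharging the many participant-distinctness conditions uniformly, is the delicate part.
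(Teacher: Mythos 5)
Your first subgoal --- inverting the local step, locating the acted-upon constructor behind a finite prefix of $\coG$, and rebuilding the global step through \rulename{g-step-str1}/\rulename{g-step-str2} --- matches the paper's lemma \code{Project\_gstep}, which likewise proceeds by induction on the finite prefix together with case analysis and inversion lemmas on the coinductive projection (the paper draws the finiteness from the inductive prefix datatype $\igty$ of the tree itself rather than from the projection derivation, but the effect is the same). Where you genuinely diverge is the second subgoal. You propose to re-establish $\dcog{\coG'}\osproj(\dcol{E'},Q')$ from scratch, with a fresh PaCo coinduction over all participants for the environment component and a re-run of the queue-projection derivation for the queue component --- and you correctly flag this as the bulk of the work. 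The paper avoids that work entirely: it observes that the local step relation is deterministic for a fixed action (lemma \code{lstep\_eq}), so the given $(\dcol{E'},Q')$ must coincide with the canonical successor $\code{run\_step}\ a\ (\dcol{E},Q)$, and the lemma \code{runstep\_proj}, already proved for Step Soundness (Theorem~\ref{thm:step-sound}), states precisely that this canonical successor is a one-shot projection of $\dcog{\coG'}$. Your direct reconstruction would go through --- it essentially duplicates the soundness argument --- but the determinism-plus-soundness shortcut buys a much shorter proof and keeps all the per-participant and queue bookkeeping in a single place rather than doing it once for each direction of the equivalence.
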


\subsection{Trace Semantics and Trace Equivalence}
\label{subsec:trace-eq}
We finally show trace equivalence for global and
local types with
our Coq development of semantics for coinductive trees.


\begin{definition}[Traces]
  \label{def:traces}%
  (Codatatype \code{\small trace} in \code{\small Action.v}), ranged
  over by $t$, are terms generated \emph{coinductively} by
  $\ \ t\ \ \  ::=\ \trend \;\SEP\; \trnext a t\ \ $ where $a$
  is any action, as defined in
  \sec\ref{subsec:step}\footnote{For traces, we use the same notation
  as for lists, however we bear in mind that this definiton is
  coinductive: it generates possibly infinite streams.}.
\end{definition}

We associate traces to the execution of global trees and local environments.

\begin{definition}[Admissible traces for a global
  tree]\label{def:g-traces}
  We say that a trace is admissible for a global tree if the
  coinductive relation $\glts \_ \_$ (definition \code{\small g\_lts} in
  \code{\small Global/Semantics.v}) holds:
   \vspace*{-1mm}
   \[
   \small
    \newDfrac{}{\glts  \trend \cogend} \quad %
    \newDfrac{\coG \stepa{a}  {\dcog{\coG'}}\quad \glts t {\dcog{\coG'}} }{ \glts {\trnext a t} \coG}
  \]
\end{definition}


\begin{definition}[Admissible traces for environments]\label{l-traces}
  We say that a trace is admissible for a pair of a local environment
  and a queue environment if the coinductive relation $\llts \_ \_$
  (definition \code{\small l\_lts} in \code{\small Local/Semantics.v}) holds:
  \vspace*{1.5mm}
  \[
  \small
\newDfrac{\forall \p. \dcol{E}\ \p=\code{None} 
}{\llts  \trend (\dcol{E},\epsilon)} \quad %
\newDfrac
{(\dcol{E},Q) \stepa{a}  (\dcol{E'},Q')\quad \llts t {(\dcol{E'},Q')}}
{ \llts {\trnext a t} {(\dcol{E},Q)}}
\]
\end{definition}

Observe that generally more than one execution trace
are admissible for a global tree or for an environment\footnote{
  About non-determinism in our semantics, see
  \iftoggle{full}
  {Remark \appref{remark:nondet-A}, Appendix \appref{appendix:meta-proofs}}
  {\cite[Remark \appref{remark:nondet-A}]{fullversion}}.}.

We can now state the \emph{trace equivalence} theorem, our final result
for multiparty session types. We sketch an outline of the proof in \iftoggle{full}
{Appendix \appref{appendix:meta-proofs}, Theorem \appref{thm:trace-equiv-A}}{\cite[Theorem \appref{thm:trace-equiv-A}]{fullversion}}.

\begin{theorem}[Trace equivalence](Theorem \\\code{\small TraceEquivalence} in \code{\small TraceEquiv.v}.)\label{thm:trace-equiv}\ \\
  If $\coG\osproj(\dcol{E},Q)$, then
$\ \glts t \coG\ $
if and only if $\ \llts t (\dcol{E},Q)\ $.
\end{theorem}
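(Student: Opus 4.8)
The plan is to prove the two implications separately, each by coinduction on the relevant admissible-trace relation, using the already-established Step Soundness (Theorem~\ref{thm:step-sound}) and Step Completeness (Theorem~\ref{thm:step-compl}) to discharge the step case and threading the one-shot projection $\coG\osproj(\dcol{E},Q)$ as the coinductive invariant. Concretely, for the forward direction I would exhibit the candidate relation
\[
R = \{\, (t,(\dcol{E},Q)) \mid \exists \coG.\ \coG\osproj(\dcol{E},Q)\ \text{and}\ \glts t \coG \,\}
\]
and show, via the PaCo machinery, that $R$ is a post-fixed point of the generating functional for $\code{tr}^{\code l}$; the backward direction uses the symmetric relation with the roles of $\code{tr}^{\code g}$ and $\code{tr}^{\code l}$ swapped, built on top of $\code{tr}^{\code l}$-derivations instead.

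For the step case of the forward direction, suppose $(t,(\dcol{E},Q))\in R$ is witnessed by $\coG$ with $t=\trnext a {t'}$. Inverting $\glts{\trnext a {t'}}\coG$ yields $\coG \stepa a \coG'$ and $\glts{t'}{\coG'}$. Step Soundness, applied to $\coG \stepa a \coG'$ and $\coG\osproj(\dcol{E},Q)$, produces $(\dcol{E'},Q')$ with $(\dcol{E},Q)\stepa a (\dcol{E'},Q')$ and $\coG'\osproj(\dcol{E'},Q')$. The latter, together with $\glts{t'}{\coG'}$, places $(t',(\dcol{E'},Q'))$ back in $R$, which is precisely what the coinductive step rule of $\llts{\trnext a {t'}}{(\dcol{E},Q)}$ requires. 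The backward step case is identical but invokes Step Completeness, recovering a global successor $\coG'$ from the local transition.

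The base cases handle the terminated configurations. For the forward direction with $t=\trend$, inverting $\glts{\trend}{\coG}$ forces $\coG=\cogend$; then $\cogend\osproj(\dcol{E},Q)$ gives, via \rulename{q-proj-end}, that $Q=\epsilon$, and via \rulename{co-proj-end} that every participant projects to $\colend$, so $\llts{\trend}{(\dcol{E},Q)}$ holds. The backward direction needs the converse termination lemma: if $(\dcol{E},Q)$ is terminated ($Q=\epsilon$ and $\dcol{E\ \p}=\code{None}$ for all $\p$) and $\coG\osproj(\dcol{E},Q)$, then $\coG=\cogend$. This follows by inversion on the projection rules: were $\coG$ a message $\comsgn \p \q \dots$, its sending participant $\p$ would project (via \rulename{co-proj-send-1} or \rulename{co-proj-send-2}) to a non-$\colend$ local tree, contradicting $\dcol{E\ \p}=\code{None}$.

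The genuine content is concentrated in the two step lemmas, which already encapsulate preservation of the one-shot projection invariant across a single transition; given them, the coinductive argument is routine. I expect the main obstacles to be twofold: first, setting up the PaCo proof so that the invariant $\coG\osproj(\dcol{E},Q)$ is correctly carried into the coinductive call and the guardedness (productivity) side-condition is met; second, the terminal-state correspondence, in particular reconciling the environment-projection value $\colend$ produced by \rulename{co-proj-end} with the $\code{None}$ appearing in the base case of $\code{tr}^{\code l}$, which amounts to identifying the everywhere-$\colend$ environment with the empty-support one.
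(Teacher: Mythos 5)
Your proposal is correct and follows essentially the same route as the paper's proof: coinduction (via PaCo) on the target trace relation with the one-shot projection carried as the invariant, case analysis on the hypothesis trace relation, the step case discharged by Step Soundness (resp.\ Completeness), and the base case by inversion on the projection rules. Your explicit candidate relation $R$ and the attention to the $\colend$-versus-$\code{None}$ reconciliation in the terminal case are just a more detailed spelling-out of what the paper compresses into ``handled simply by inversion lemmas.''
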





Trace equivalence for global and local types (trees)
concludes our formalisation of the metatheory
of multiparty session types: squares
\textcolor{orange}{(M.1)} and \textcolor{orange}{(M.2)} of \theDiagram in Figure
\ref{fig:dia}. In the next section we specify a language
for communicating systems inside Coq and extend extend the trace equivalence
result to well-typed processes.


\section{A Certified Process Language}\label{sec:proc}

This section defines \dslName{}, an embedded domain specific language
in Coq for specifying certified multiparty processes. \dslName{}
combines shallow and deep embedding: on one hand process actions are
deeply embedded, represented as an inductive type; on the other, the
exchanged values, and computations applied to them are a shallow
embedding expressed as Gallina terms. The core process calculus of
\dslName{} is session-typed, where the typing derivation is described
as a Coq inductive predicate. The constructs of \dslName{} are
\emph{smart constructors} that build both a process, and a proof that
this is well-typed with respect to a given local type. \revised{Each
  process is single threaded and the concurrent semantics occurs due
  to the asynchronous nature of the channels.}

\subsection{Core Processes}\label{sec:core-procs}

\begin{figure}
\begin{small}
$
\begin{array}{l} 
\hspace{-5mm}
    \begin{array}{c} 
      \begin{array}{@{}l@{}}
        \rulename{p-ty-end} \\
        \dfrac{}{\Gamma \ofLt \pend : \lend }
      \end{array} \
      \begin{array}{@{}l@{}}
        \rulename{p-ty-jump} \\
        \dfrac{}{\Gamma \ofLt \pjump{\lX} : \lX}
      \end{array} \
      \begin{array}{@{}l@{}}
        \rulename{p-ty-loop} \\
        \dfrac{%
        \Gamma \vdash \expr : \Proc \quad
        \Gamma \ofLt \expr : \lT
        }{%
        \Gamma \ofLt \ploop{\lX}{\expr} : \lrec{\lX}{\lT}
        }
      \end{array} \\
      \begin{array}{@{}l@{}}
        \rulename{p-ty-read} \\
        \dfrac{%
        \Gamma \vdash \pactr : \code{unit} \to \llbracket \tS \rrbracket \quad
        \Gamma,x:\llbracket \tS \rrbracket \ofLt \expr : \lT
        }{%
        \Gamma \ofLt \pread \pactr \expr : \lT
        }
      \end{array}
    \end{array}\\[2mm]
    \begin{array}{cc} 
      \begin{array}{@{}l@{}}
        \rulename{p-ty-write} \\
        \dfrac{%
        \Gamma \vdash \pactw : \llbracket \tS \rrbracket \to \code{unit} \quad
        \Gamma \vdash {\expr_v} : \llbracket \tS \rrbracket \quad
        \Gamma \ofLt \expr : \lT
        }{%
        \Gamma \ofLt \pwrite \pactw {\expr_v} \expr : \lT
        }
      \end{array}
\\
      \begin{array}{@{}l@{}}
        \rulename{p-ty-send} \\
        \dfrac{%
        \Gamma \vdash \expr_1 : \llbracket \tS_j \rrbracket \quad
        \Gamma \vdash \expr_2 : \Proc \quad
        \Gamma \ofLt \expr_2 : \lT_j \quad
        j \in I
        }{%
        \Gamma \ofLt \psend{\p}{\ell_j, \expr_1}{\expr_2} : \lsend{\p}{\ell}{\tS}{\lT}
        }
      \end{array}\\
      \begin{array}{@{}l@{}}
        \rulename{p-ty-recv} \\
        \dfrac{%
        \forall i \in I \quad
        \Gamma \vdash \expr_i : \llbracket \tS_i \rrbracket \to \Proc \quad
        \Gamma, x : \llbracket \tS_i \rrbracket \ofLt \expr_i \; x : \lT_i
        }{%
        \Gamma \ofLt \precv{\p}{\ell_i. \expr_i}_{i \in I} : \lrecv{\p}{\ell}{\tS}{\lT}
        }
      \end{array} \\
      \begin{array}{@{}l@{}}
        \rulename{p-ty-interact} \\
        \dfrac{%
        \Gamma \vdash \pacti :  \llbracket \tS \rrbracket \to \llbracket \tS' \rrbracket \quad
        \Gamma \vdash {\expr_v} : \llbracket \tS \rrbracket \quad
        \Gamma,x:\llbracket \tS' \rrbracket \ofLt \expr : \lT
        }{%
        \Gamma \ofLt \pinteract \pacti {\expr_v} \expr : \lT
        }
      \end{array}
    \end{array}\\
  \end{array}
$
  \end{small}
\vspace*{-2mm}
\caption{Process Typing System}\label{fig:typing}
\vspace*{-4mm}
\end{figure}

The core process calculus of \dslName{} differs to those generally
used in the session-types literature in several aspects. First, the
combination of shallow and deep embedding implies that a process may
be defined in terms of a larger expression of the ambient calculus.
Secondly, the process calculus does not include parallel composition.
Just as \emph{``zooid''}, in biology, is used to refer to the single
individual in a colonial organism, a process $\proc$ implements the
behaviour of a single participant in the distributed system:
we are interested in certifying processes in
isolation to the larger system
. This approach plays well with the usual MPST
methodology and it admits heterogenous development, as in one can use
\dslName{} for the critical roles and other roles can be implemented in
different languages, using different frameworks.

\begin{definition}[Syntax of untyped processes]\label{def:proc-syntax}
  Processes, $\proc$ (definition \code{\small Proc} in \code{\small Proc.v}), are
  embedded in an ambient calculus $\expr$. In our implementation,
  $\proc$ is the inductive type of processes, of type $\Proc$, and the
  ambient calculus is Gallina, the specification language of Coq.
  \begin{displaymath}
  \begin{array}{rl}
    \expr
      \Coloneqq &\proc
      \mid \expr + \expr
      \mid \eif{\expr}\; \ethen{\expr}\; \eelse{\expr}\\
      \mid & \tfun{\vx}{\expr}
      \mid \ldots
      \hspace{.5cm}
      \dproc{n} \in \mathbb{N} 
      \hspace{.5cm}
      \dproc{\ell_i} \in \mathbb{N}
    \\
    \proc \in \Proc
    \Coloneqq & \pend
    \mid \pjump{\lX}
    \mid \ploop{\lX}{\expr}\\
    \mid &\precv{\p}{\ell_i. \expr_i}_{i \in I}
    \mid \psend{\p}{\ell, \expr}{\expr}\\
    \mid & \pread \pactr \expr
    \mid \pwrite \pactw {\expr_v} \expr \\
    \mid & \pinteract \pacti {\expr_v} \expr
  \end{array}
  \end{displaymath}
\end{definition}

The constructs of $\Proc$ mirror those of local types: $\pend$ is the
\textbf{\emph{ended}} process; $\pjump{\lX}$ is a \textbf{\emph{jump}}
to recursion variable $\lX$; $\ploop{\lX}{\expr}$ is a
\textbf{\emph{recursive process}}, built by expression $\expr$, that
introduces a new recursion variable $\lX$;
$\precv{\p}{\ell_i. \expr_i}_{i \in I}$ is the process
\textbf{\emph{receiving}} from $\p$ a message with label $\ell_i$, a
value $x$, and continues as $(\expr_i \; x)$; and
$\psend{\p}{\ell, \expr_1}{\expr_2}$ is the \textbf{\emph{sending}}
process with label $\ell$ and expression $\expr_1$ to participant
$\p$, and then continues as $\expr_2$. Our calculus does not include
parallel composition: we assume that the system is implemented as the
parallel composition of all the participants. For example, the
following is a process that receives requests from a participant $\p$
and replies increasing the received number by $m$, until $\p$ chooses
to finish:
\[
\small
\begin{array}{ll}
  \proc_\q = &
  \dproc{\code{loop}\;\lX\{}
    \dproc{\code{recv}\;\p\{}
    \dproc{\ell_1.}
    \tfun{\vx}{\psend{\p}{\ell_1, \dcoq{x + m}}{}}
    \\
    & \hspace{2cm}\pjump{\lX}\dproc{;}
      \dproc{\ell_2.\tfun{\vx}{\pend}}
    \dproc{\}}
    \dproc{\}}
\end{array}
\]
\noindent
A process can be defined mixing Gallina terms and $\proc$. For example, in the
process above, the term $\dcoq{x + m}$ is a term in Gallina. These Gallina
terms can be used to specify branching in the control flow of the process.
The process below is one possible implementation for $\p$ that loops
until the value received is greater than some threshold $n$:
\begin{displaymath}
  \begin{small}
  \begin{array}{rcl}
    \expr_\p & = &
                   \tfun{x}{} \;
                   \eif{x > n} \;
                   \ethen{%
                   \psend{\q}{\ell_2, \dcoq{\code{tt}}}{\pend}
                   }\\& & \eelse{%
                   \psend{\q}{\ell_1,\dcoq{x}}{\pjump{\lX}}
                   }

    \\
    \proc_\p & = & \psend{\q}{\ell_1, \dcoq{0}}{\ploop{\lX}{\precv{\q}{\ell_1. \expr_\p}}}
  \end{array}
  \end{small}
\end{displaymath}

\dslName processes interact with their environment by calling
functions written in the language of the runtime (OCaml in this case).
These functions exchange information between \dslName and the
environment in a safe way by not exposing channels or the transport
API. The interaction happens by calling an external function: \pactr,
\pactw, and \pacti for reading, writing or interacting with the
environment. \pactr is a function that takes a unit and returns a
value of payload type (i.e.: a $\code{coq\_ty T}$ for some type
$\code{T}$). \pactw is a function that takes a parameter of payload
type and returns unit, allowing the process to call OCaml to print on
the screen or write to file or similar things. Finally \pacti is the
action function that passes data to the OCaml runtime and receives
some response, thus combining the two other environment interaction
functions. These functions \emph{do not affect the communication structure}
of the process: they are internal actions and do not appear
in the trace of the process.

\begin{definition}[Process typing system]\label{def:proc-typing}
  We define 
  typing for processes
  $\Gamma \ofLt \expr : \lT$ in Figure \ref{fig:typing}, as an inductive predicate in Coq
  (definition \code{\small of\_lt} in \code{\small Proc.v}). Since $\proc$ is
  embedded in Coq, we assume the standard typing judgement for Gallina
  terms, of the form $\Gamma \vdash \expr : T$. We assume a set of
  sorts $\tS_j$, and an encoding as a Coq type
  $\llbracket \tS_j \rrbracket$ (see Definition \ref{def:global-types}).
\end{definition}
\noindent Rules \rulename{p-ty-end}, \rulename{p-ty-jump}, and
\rulename{p-ty-loop} state that the local type of the ended process, a
jump to $\lX$, and recursion are $\lend$, $\lX$, and a recursive
type respectively. Rule $\rulename{p-ty-send}$ specifies that a
send process with label $\ell$ has a send type, if $\ell$ is in
the set of accepted labels. Rule
$\rulename{p-ty-recv}$ specifies that a receive process 
has a receive type, if all the alternatives
have the correct local type for all possible payloads
$x : \llbracket \tS_i \rrbracket$. Any expression $\expr$ that does
not match any of these rules must be proven to be of the correct
type for all of its possible reductions. For example, it is
straightforward to prove that if $\Gamma \ofLt \dcoq{\expr_t} : \lT$
and $\Gamma \ofLt \dcoq{\expr_f} : \lT$ then
$\Gamma \ofLt \eif{\expr}\; \ethen{\expr_t}\; \eelse{\expr_f} : \lT$
by case analysis on $\expr$. Finally, rules \rulename{p-ty-read},
\rulename{p-ty-write}, and \rulename{p-ty-interact}, have no impact on
the local type, so they simply check that the actions are well typed,
and that the continuation process has the expected type.

\subsection{\dslName}

In the Coq library \code{Zooid.v}, \dslName{} terms (ranged over by
$\zooid$) are dependent pairs of a $\proc$, and a proof that it is
well-typed with respect to a given local type $\lT$.
\vspace{-1mm}
\begin{lstlisting}[language=Coq,  basicstyle=\tt\footnotesize, numberstyle=\scriptsize]
Definition wt_proc L := { P : Proc  | of_lt P L }.
\end{lstlisting}
\vspace{-1mm}
\revised{They are built using smart constructors, helper functions and
notations to define processes that are well-typed by construction
(i.e.: a process and a witness of its type derivation). Moreover, we
take care that the local type of each smart constructor is fully
determined by their inputs, so we can use Coq to \emph{infer} the
local type of each of these processes.} Given a \dslName{} expression
$\zooid$, we can project the \emph{first component} to extract the
underlying $\proc$ term. Since the behaviour of alternatives in
$\zooid$ terms is fully specified, we can infer its local type. By
construction, if a term $\zooid$ can be defined, then its underlying
$\proc$ is well-typed with respect to some local type $\lT$,
\emph{second component} of the dependent pair.

\revised{The simplest example is the \lstinline!finish! term for
  inactive processes of type \lstinline!l_end!. Coq infers most parameters.
 }\vspace{-1mm}
\begin{lstlisting}[language=Coq,  basicstyle=\tt\footnotesize, numberstyle=\scriptsize]
Definition wt_end : wt_proc l_end :=  exist _ _ t_Finish.
Notation finish := wt_end.
\end{lstlisting}
\vspace{-1mm}

\revised{
  On the other hand, the notation \lstinline!\\send! is defined in the
  same way, but the definition of the dependent pair requires a simple
  proof (i.e.: \lstinline!wt_send!). The send command is implemented
  using a singleton choice, and this proof simply says that this label
  is the one in the singleton choice. The definition is as follows:
}
\vspace{-1mm}
\begin{lstlisting}[language=Coq,  basicstyle=\tt\footnotesize, numberstyle=\scriptsize]
Definition wt_send p l T (pl : coq_ty T) L (P : wt_proc L)
  : wt_proc (l_msg l_send p [::(l, (T, L))])
  := exist _ _ (t_Send p pl (of_wt_proc P)
                            (find_cont_sing l T L)).
Notation "\send" := wt_send.
\end{lstlisting}
\vspace{-1mm}

Despite not being directly encoded as a Coq datatype,
Figure~\ref{fig:zooidsyntax} presents the syntax for \dslName{}
terms in BNF notation.
\begin{figure}
\begin{small}
\begin{definition}[\dslName{} syntax]\label{fig:zooid}
  \small 
\ \\[0.5mm]
$\begin{array}{rcl}
    \zooidb
    & \Coloneqq & \zbalt{\ell}{x : \tS} \zooid \\

      \zooids & \Coloneqq & \zcase{\expr}{\ell}{\expr : \tS} \zooid
      \mid \zskip{\ell}{\tS} \lT \\ & &
      \mid \zdflt{\ell}{\expr : \tS} \zooid
    \\[.1cm]
    \zooid
    & \Coloneqq &
      \zjump \lX
      \mid \zloop \lX {\zooid}
      \mid \zif{\expr} \; \zthen{\zooid} \; \zelse{\zooid} \\ & &
      \mid \zsend{\p}{\ell}{\expr : \tS} \; \zooid
      \mid \zrecv{\p}{\ell}{\vx : \tS} \; \zooid
    \\
    & &
      \mid \zend
      \mid \zbranch{\p} \; \zalts{\zooidb_1 \mid \ldots \mid \zooidb_n} \\ & &
      \mid \zselect{\p} \; \zalts{\zooids_1 \mid \ldots \mid \zooids_n} \\ & &
      \mid \pread \pactr \zooid
      \mid \pwrite \pactw {\expr} \zooid \\ & &
      \mid \pinteract \pacti {\expr} \zooid
    \end{array}$
\end{definition}
  \end{small}
\vspace*{-5mm}
\caption{\dslName Syntax}\label{fig:zooidsyntax}
\vspace*{-6mm}
\end{figure}
\noindent 
The syntactic constructs are the expected, with only a few
differences:
\begin{enumerate*}[label=(\alph*)]
\item $\zif{}\zthen{}\zelse{}$ is a \dslName{} construct since it needs to carry the
proof that the underlying proc is well-typed;
\item branch and select must take a list of alternatives ($\zooidb$ and $\zooids$ respectively),
and send/receive are defined as branch/select with a singleton alternative.
\end{enumerate*}
The alternatives for branch, $\zooidb$, are pairs of labels and
continuations. The alternatives for $\dproc{\code{select}}$, $\zooids$ are:\\
\begin{enumerate*}[label=(\arabic*)]
\item $\zcase{\expr_1}{\ell}{\expr_2 : \tS} \zooid$, specifies
to send $\ell$ and $\expr_2 : \llbracket \tS \rrbracket$
and then continue as $\zooid$,
when
$\expr_1$ evaluates to $\dcoq{\code{true}}$;\\
\item
$\zdflt{\ell}{\expr_2 : \tS} \zooid$, specifies that the default alternative
is to send  $\ell$ and $\expr_2$, and then continue as $\zooid$;  and  \iftoggle{full}{\\}{}
\item
$\zskip{\ell}{\tS} \lT$, specifies the unimplemented alternative of sending
$\ell$ and a value of sort $\tS$, and then continuing as $\lT$.
\end{enumerate*}
We require $\dproc{\code{skip}}$ to enforce a unique local type: since
Definition~\ref{def:proc-typing} does not include subtyping,
\dslName{} requires that all the possible behaviours in the local type
must be either implemented or declared. We impose a syntactic
condition on $\dproc{\code{select}}$: there must be exactly one default
case, which must occur after the last $\dproc{\code{case}}$. The three
constructs to interact with external code ($\dproc{\code{read}}$,
$\dproc{\code{write}}$, and $\dproc{\code{interact}}$) are similar to
their untyped counterparts from \sec\ref{sec:core-procs}. 
These actions do not impact the traces nor the local types, so
they simply sport the local type of their continuations.

\vspace{-2mm}
\subsection{Semantics of \dslName{}} \label{sec:zooidsem}
The semantics of \dslName{} is defined as a labelled transition system of the
underlying $\proc$ terms, analogously to that of local type
trees in Definition \ref{lstep}\footnote{For the sake of uniformity, here we present
  the LTS for processes as a relation,
  however in Coq we define it, equivalently,
  as a recursive function: \code{do\_step\_proc} in \code{Proc.v}.
}, but with values instead of sorts in the trace,
and explicitly unfolding recursion.

\begin{definition}[LTS for processes]\label{proc:lts}
  The \emph{LTS for processes}
  is, for each action $a$, defined as:\\[0.5mm]
  \centerline{
    \small
    $\begin{array}{c}
      \begin{array}{lll}
        \rulename{p-step-send} &
        \rulename{p-step-recv} \\
        \dfrac{%
        a=!\p\q(\ell,\expr_1)\quad
        }{%
        \psend{\q}{\ell, \expr_1}{\expr_2}
        \stepa{a}
        \expr_2
        } \qquad
        &
          \dfrac{%
          a=?\q\p(\ell_{\bar{\imath}},\expr)\quad
          }{%
          \precv{\p}{\ell_i. \expr_i}_{i \in I}
          \stepa{a}
          (\expr_{\bar{\imath}} \; \expr)
          }
        &
      \end{array} \\[5mm]
      \begin{array}{l}
        \rulename{p-step-loop} 
          \dfrac{%
          [(\ploop{\lX}{\expr})/(\pjump \lX)]\expr \stepa{a} \expr'
          }{%
          (\ploop{\lX}{\expr})
          \stepa{a}
          \expr'
          }
      \end{array}
    \end{array}
    $}
\end{definition}
\vspace*{-1mm} The steps of the LTS are: \rulename{p-step-send} states that a send
process transitions to the continuation $\expr_2$ with the action that sends a
label $\ell$ and value $\expr_1$; \rulename{p-step-recv} states that a receive
process transitions to $(\expr_{\bar{\imath}} \; \expr)$ with the receive action
from participant $\p$; and \rulename{p-step-loop} unfolds recursion once to
perform a step on a recursive process.

We prove the type preservation for $\ofLt$. To show this, we need to
relate process actions with local/global type actions. This is done by a simple
erasure that removes the values, but preserves the types in an action, denoted
by $|a|$. For example, if $a=!\p\q(\ell,\expr)$ and $\expr : \llbracket \tS
\rrbracket$, then $|a| = !\p\q(\ell, \tS)$.

\begin{theorem}[Type preservation](Theorem \code{\small preservation} in the file \code{\small Proc.v}.)\label{thm:preservation}
If $\Gamma \ofLt \expr : \lT$ and $\expr \stepa{a} \expr'$, then there exists
$\lT'$ such that $\lT \stepa{|a|} \lT'$, and $\Gamma \ofLt \expr' : \lT'$.
\end{theorem}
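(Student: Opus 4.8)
The plan is to prove subject reduction by induction on the derivation of the process step $\expr \stepa{a} \expr'$, inverting the typing derivation $\Gamma \ofLt \expr : \lT$ in each case; to keep the induction hypothesis usable I would first state the theorem with $\Gamma$ and $\lT$ universally quantified. The transition system of Definition~\ref{proc:lts} has only three rules, giving three cases: \rulename{p-step-send} and \rulename{p-step-recv} are base cases, while \rulename{p-step-loop} is the single inductive case, carrying the IH on the step of the unfolded body. In every case the erasure $|a|$ discards the transmitted value but keeps its sort, so the action the local type is asked to perform matches the label/sort structure recorded in $\lT$.

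For \rulename{p-step-send} the process is $\psend{\q}{\ell_j, \expr_1}{\expr_2}$ and $\expr' = \expr_2$. Inverting the typing forces \rulename{p-ty-send}, so that $\lT$ is the send type $\lsend{\q}{\ell}{\tS}{\lT}$ with $j \in I$ and the premise $\Gamma \ofLt \expr_2 : \lT_{j}$. I take $\lT' = \lT_{j}$: the local type performs the matching step $\lsend{\q}{\ell}{\tS}{\lT} \stepa{|a|} \lT_{j}$ with $|a| = !\p\q(\ell_j,\tS_j)$ (selecting branch $j$), and $\Gamma \ofLt \expr' : \lT'$ is exactly the remaining premise. For \rulename{p-step-recv} the process is $\precv{\p}{\ell_i. \expr_i}_{i \in I}$, it receives $\ell_{\bar\imath}$ with value $\expr_v$, and $\expr' = \expr_{\bar\imath}\;\expr_v$. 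Inversion forces \rulename{p-ty-recv}, so $\lT$ is $\lrecv{\p}{\ell}{\tS}{\lT}$ and, for each $i$, $\Gamma, x : \llbracket \tS_i \rrbracket \ofLt \expr_i\; x : \lT_i$. Taking $\lT' = \lT_{\bar\imath}$, the local type performs the matching receive step to $\lT_{\bar\imath}$; the residual typing $\Gamma \ofLt \expr_{\bar\imath}\; \expr_v : \lT_{\bar\imath}$ follows by instantiating the $\bar\imath$-th premise with the received value, which requires a small value-substitution lemma for $\ofLt$ (in the shallow embedding this is essentially specialising the Gallina-parametric hypothesis, since $\expr_{\bar\imath}\;x$ becomes $\expr_{\bar\imath}\;\expr_v$).

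The crux is \rulename{p-step-loop}. Here $\expr = \ploop{\lX}{\expr_0}$ and the premise of the rule is a step $[(\ploop{\lX}{\expr_0})/(\pjump{\lX})]\expr_0 \stepa{a} \expr'$ of the once-unfolded body. Inversion forces \rulename{p-ty-loop}, yielding $\Gamma \ofLt \expr_0 : \lT_0$ with $\lT = \lrec{\lX}{\lT_0}$. The key auxiliary lemma I would establish is that process-level recursion unfolding mirrors type-level unfolding: if $\Gamma \ofLt \expr_0 : \lT_0$ then $\Gamma \ofLt [(\ploop{\lX}{\expr_0})/(\pjump{\lX})]\expr_0 : \lT_0\{\lrec{\lX}{\lT_0}/\lX\}$, proven by induction on the typing of $\expr_0$, the interesting case being \rulename{p-ty-jump}, where $\pjump{\lX}$ (typed by $\lX$) is replaced by the whole loop (typed by $\lrec{\lX}{\lT_0}$), matching the substitution $\{\lrec{\lX}{\lT_0}/\lX\}$ at the type level. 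Armed with this, I apply the IH to the unfolded body against its unfolded type, obtaining $\lT'$ with $\lT_0\{\lrec{\lX}{\lT_0}/\lX\} \stepa{|a|} \lT'$ and $\Gamma \ofLt \expr' : \lT'$; finally the recursion/unfolding rule of the local-type transition relation (the analogue of \rulename{p-step-loop}) lifts this to $\lrec{\lX}{\lT_0} \stepa{|a|} \lT'$, closing the case.

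I expect the unfolding lemma to be the main obstacle, because the process and type recursion variables are shared (rule \rulename{p-ty-jump} links $\pjump{\lX}$ directly to the type variable $\lX$), so the lemma must track the binding discipline for $\lX$ through both substitutions simultaneously and interact correctly with the shallow-embedded loop body and Coq's chosen representation of binders. The remaining loose ends are routine: the value-substitution lemma for the receive case; the observation that the internal prefixes \rulename{p-ty-read}, \rulename{p-ty-write} and \rulename{p-ty-interact} leave $\lT$ unchanged, so that such silent actions (absent from the communication LTS) preserve the type by delegating to the continuation; and the fact that in the mixed embedding a Gallina-headed expression must first reduce to a process head constructor before any rule of Definition~\ref{proc:lts} fires, so that stepping and typing agree on the exposed constructor.
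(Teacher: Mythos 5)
Your proposal is correct and follows the approach one would expect: the paper gives no textual proof of this theorem (it defers entirely to \code{preservation} in \code{Proc.v}), and your decomposition --- induction on the step derivation, inversion of the typing judgment in each case, instantiation of the Gallina-parametric receive premise at the received value, and a substitution lemma showing that replacing $\pjump{\lX}$ by the loop in the body mirrors the type-level unfolding $\lT_0\{\lrec{\lX}{\lT_0}/\lX\}$ --- is exactly the standard structure of such a subject-reduction argument and correctly identifies the loop/unfolding lemma as the only non-routine step. Your closing remarks about the silent \code{read}/\code{write}/\code{interact} prefixes and about Gallina-headed expressions normalising to an exposed $\Proc$ constructor before the LTS applies are also the right things to discharge in the mechanisation.
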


We write $\plts t \expr$ to express that a trace $t$ is admissible
by process $\expr$. The formal definition goes analogously to Definition
\ref{l-traces} for $\llts \_ \_$; 
note, however, that the admission of a trace by
process is checked \emph{in isolation} to other processes.
To relate process traces to global/local type traces we need to define the
notion of a \emph{complete subtrace}.

\begin{definition}[Complete subtrace]\label{def:completesubtrace}
  We say that $t_1$ is a complete subtrace of $t_2$ for participant $\p$ (definition \code{\small subtrace} in \code{\small Local.v}), if all
actions in $t_2$ that have $\p$ as a subject occur in $t_1$ in the same
relative position (i.e.\ the $n$-th action of $\p$ in $t_2$ must be the $n$-th action
of $t_1$). We write $t_1 \subtrace_\p t_2$ as the greatest relation
satisfying:
\begin{center}
\begin{small}
$
  \newDfrac{
    \subject{a} \neq \p \quad
    t_1 \subtrace_\p t_2
  }{
    t_1 \subtrace_\p (\trnext a t_2)
  }
  \qquad
  \newDfrac{
    \subject{a} = \p \quad
    t_1 \subtrace_\p t_2
  }{
    (\trnext a t_1) \subtrace_\p (\trnext a t_2)
  }
  \qquad
  \newDfrac{
  }{
    \trend \subtrace_\p \trend
  }
  $
\end{small}
\end{center}
\end{definition}

The main result for \dslName{} states that for all admissible traces
for a well-typed process, there exists at least a trace in the larger
system that is a complete supertrace of that of the process. We state
this formally as Theorem \ref{thm:zooid} \\({\small \zooidTheoremCoq{}} in
\code{\small Proc.v}). 
Thus, well-typed processes inherit the global type properties of
protocol compliance, deadlock freedom and liveness.

\begin{theorem}[Process and global type traces]\label{thm:zooid}
Let $\coG \osproj (\dcol{E}, \epsilon)$ and $\Gamma \ofLt \expr : \lT$
such that $\lunroll{\lT}{(\dcol{E} \; \p)}$.  Then, for all traces $t_\p$ such
that $\plts{t_\p}{\expr}$ there exists a trace $t$ such that
$\glts{t}{\coG}$, and $|t_\p| \subtrace_\p t$.
\end{theorem}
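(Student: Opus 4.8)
The plan is to construct the global trace $t$ by coinduction, emitting $|t_\p|$ interleaved with the other roles' actions, while keeping the three levels (process, environment, global tree) in lockstep through the already-established results. The bridge to the process is type preservation (Theorem~\ref{thm:preservation}): a step $\expr \stepa a \expr'$ induces $\lT \stepa{|a|} \lT'$ with $\Gamma \ofLt \expr' : \lT'$, and since $\lunroll{\lT}{(\dcol{E}\ \p)}$, unravelling transports this to the tree $\dcol{E}\ \p$, which therefore exposes the same top-level send or receive. The bridge to the global tree is step completeness (Theorem~\ref{thm:step-compl}): any environment step $(\dcol{E},Q) \stepa b (\dcol{E'},Q')$ under the invariant $\coG \osproj (\dcol{E},Q)$ lifts to a global step $\coG \stepa b \coG'$ re-establishing $\coG' \osproj (\dcol{E'},Q')$. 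Thus it suffices to schedule a stream of environment steps whose $\p$-subtrace is $|t_\p|$.

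I would carry the coinductive invariant $\coG \osproj (\dcol{E},Q)$, $\Gamma \ofLt \expr : \lT$, and $\lunroll{\lT}{(\dcol{E}\ \p)}$, and case-split on $\p$'s next action, i.e.\ the head of the remaining $t_\p$. If $t_\p$ is exhausted, the process has genuinely terminated (by the base case of $\plts \_ \_$), so $\dcol{E}\ \p = \colend$ and $\neg\,\partof \p \coG$ by rule~\rulename{co-proj-end}; I then append any admissible global trace of the residual $\coG$, obtained by eagerly firing its head message, every action of which has subject different from $\p$, discharging the $\subtrace_\p$ obligation. If the head of $t_\p$ is a send (so $|a| = {!\p\q(\ell,\tS)}$), then $\dcol{E}\ \p$ is a send tree, so rule~\rulename{l-step-send} enables the environment step labelled $|a|$ with no side condition; I emit $|a|$, lift it with Theorem~\ref{thm:step-compl}, and recurse. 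If the head is a receive, rule~\rulename{l-step-recv} additionally requires the queue $Q(\q,\p)$ to already hold the awaited message, which is where the real work lies.

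The crux, and the step I expect to be the main obstacle, is enabling $\p$'s receives. I would isolate a scheduling lemma: whenever $\coG \osproj (\dcol{E},Q)$ and $\dcol{E}\ \p$ is a receive-from-$\q$ tree, there is a finite sequence of environment steps, all with subject different from $\p$, after which $Q(\q,\p)$ carries the expected message so that~\rulename{l-step-recv} applies. Its justification is the coherence of the projection: since $\coproj \p \coG {(\dcol{E}\ \p)}$ presents this receive as $\p$'s next communication, the matching $\q \to \p$ node occurs at finite depth in $\coG$, reachable by peeling off the leading non-$\p$ communications admitted by rule~\rulename{co-proj-cont}; firing those together with the send of $\q$ populates the queue, while a well-founded measure on that depth guarantees termination. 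Each such step is skipped by $\subtrace_\p$, so after emitting them I emit $|a|$ and recurse. Formalising this progress property—that the matching send is always reachable and that the schedule terminates—is the delicate part, since the projection relation is coinductive and care is needed to rule out an infinite stall of non-$\p$ prefixes before $\p$'s receive.

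Finally, the emitted stream is admissible for $\coG$ because the invariant is re-established at every stage (Theorem~\ref{thm:step-compl}) and each recursive call emits at least the action $|a|$ (or, in the terminating case, a full eager completion), so the coinductive definition of $\glts \_ \_$ is productive; by construction its $\p$-subtrace is exactly $|t_\p|$, giving $|t_\p| \subtrace_\p t$. As an alternative to lifting each step individually, one could build the whole witness at the environment level and invoke trace equivalence (Theorem~\ref{thm:trace-equiv}) once, since $\coG \osproj (\dcol{E},\epsilon)$ makes $\llts t {(\dcol{E},\epsilon)}$ and $\glts t \coG$ interchangeable; the scheduling lemma remains the load-bearing ingredient either way.
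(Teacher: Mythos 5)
The paper never spells out a proof of Theorem~\ref{thm:zooid} on paper---it defers entirely to \code{process\_traces\_are\_global\_types} in \code{Proc.v}---so your proposal can only be judged against the machinery the paper does establish, and on that score it is sound and assembled from exactly the right pieces: preservation to keep the process and the local tree $\dcol{E}\;\p$ in lockstep, a scheduling of non-$\p$ actions to enable $\p$'s receives, and either per-step lifting via Theorem~\ref{thm:step-compl} or a single appeal to Theorem~\ref{thm:trace-equiv}. The environment-level alternative you mention in closing is the more economical route (and very likely the mechanised one, given that $\subtrace_\p$ is defined in \code{Local.v}): construct $t$ with $\llts{t}{(\dcol{E},\epsilon)}$ and $|t_\p|\subtrace_\p t$, then convert to $\glts{t}{\coG}$ for free. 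You correctly identify the progress/scheduling lemma as load-bearing, and your justification of it is essentially right; two details deserve explicit care. First, when you finally fire $\q$'s send to populate $Q(\q,\p)$ you must choose the label dictated by the head of $t_\p$; this is only possible because the hypothesis $\qproj{\coG}{\epsilon}$ forces $\coG$ to contain no already-committed nodes of the form $\comsgs{\q}{\p}{\ell_j}\dcog{\dots}$, so every label on the $\q\to\p$ edge is still open to the scheduler---and when peeling the outer non-$\p$ communications you should descend along the branch witnessing $\partof{\p}{\coG}$, so that the inductive measure (the height of that derivation) strictly decreases while \rulename{co-proj-cont} guarantees $\p$'s projection is unchanged whichever branch is taken. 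Second, the terminated case needs a small inversion rather than a direct appeal to \rulename{co-proj-end}: one shows $\coproj{\p}{\coG}{\colend}$ and $\partof{\p}{\coG}$ are jointly contradictory (follow the finite path to the first node involving $\p$; the projection there must be a send or receive constructor), which yields $\neg\,\partof{\p}{\coG}$ and hence that every action of an admissible residual trace of $\coG$ is skipped by $\subtrace_\p$. Neither point breaks your argument; they are precisely where a mechanisation would have to sweat.
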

Figure~\ref{fig:zooid-theorem} presents the meaning of the above theorem graphically.
\begin{figure}
\begin{center}
  \small
    \tikzset{
      ncbar angle/.initial=90,
      ncbar/.style={
        to path=(\tikztostart)
        -- ($(\tikztostart)!#1!\pgfkeysvalueof{/tikz/ncbar angle}:(\tikztotarget)$)
        -- ($(\tikztotarget)!($(\tikztostart)!#1!\pgfkeysvalueof{/tikz/ncbar angle}:(\tikztotarget)$)!\pgfkeysvalueof{/tikz/ncbar angle}:(\tikztostart)$)
        -- (\tikztotarget)
      },
            ncbar/.default=0.3cm,
    }
    \tikzset{square left brace/.style={ncbar=0.1cm}}

    \begin{tikzpicture}
      \node[text=\colorgt] (a1) {\small $\coG_1$ };
      \node[text=\colorgt  , right = .35cm of a1] (a2) {\small $\coG_2$ };
      \node[text=\colorgt  , right = .35cm of a2] (a3) {\small $\coG_3$ };
      \node[                 right = .35cm of a3] (d1) {\small $\cdots$ };
      \node[text=\colorgt  , right = .35cm of d1] (a5) {\small $\coG_5$ };
      \node[text=\colorgt  , right = .35cm of a5] (a6) {\small $\coG_6$ };
      \node[                 right = .35cm of a6] (d2) {\small $\cdots$ };

      \path[draw, ->] (a1) -- (a2) node[midway, above, text = \colorgt] {\footnotesize $a_1$ };
      \path[draw, ->] (a2) -- (a3) node[midway, above, text = \colorproc] (g2) {\footnotesize $a_2$ };
      \path[draw, ->] (a3) -- (d1) node[midway, above, text = \colorgt] { };
      \path[draw, ->] (d1) -- (a5) node[midway, above, text = \colorgt] {\footnotesize $a_4$ };
      \path[draw, ->] (a5) -- (a6) node[midway, above, text = \colorproc] (g6) {\footnotesize $a_5$ };
      \path[draw, ->] (a6) -- (d2) node[midway, above, text = \colorgt] { };

      \node[text=\colorproc, above = .5cm of a1] (a2') {\small $\proc_1$ };
      \node[text=\colorproc, right = 1.0cm of a2'] (a5') {\small $\proc_2$ };
      \node[text=\colorproc, right = 1.0cm of a5'] (a6') {\small $\proc_3$ };
      \node[                 right = 1.0cm of a6'] (d2') {\small $\cdots$ };
      \path[draw, ->] (a2') -- (a5') node[above, midway, text = \colorproc] (p2) {\footnotesize $a_2$};
      \path[draw, ->] (a5') -- (a6') node[above, midway, text = \colorproc] (p6) {\footnotesize $a_5$};
      \path[draw, ->] (a6') -- (d2');

      \path[draw, thick, dotted, -, red] (g2) -- (p2);
      \path[draw, thick, dotted, -, red] (g6) -- (p6);

      \draw ($(a2')-(.5cm,.2cm)$) to [square left brace] ($(a2')+(-.5cm, .2cm)$)
       node[text=\colorproc, left, yshift=-.2cm, xshift=-.1cm] {$\forall t_\p$} ;

      \draw ($(a1)-(.5cm,.2cm)$) to [square left brace] ($(a1)+(-.5cm, .2cm)$)
       node[text=\colorgt, left, yshift=-.2cm, xshift=-.2cm] (t) {$\exists t$} ;


    \end{tikzpicture}
  \end{center}\vspace*{-3mm}
\caption{Theorem \ref{thm:zooid}, visually.}\vspace*{-2mm}
\label{fig:zooid-theorem}
\end{figure}
Any trace $t_\p = \dproc{\trnext{a_2}{\trnext{a_5}{\ldots}}} $ of a process
$\proc$ is contained within a larger system trace $t =
\dgt{\trnext{a_1}{\trnext{\dproc{a_2}}{\trnext{a_3}{\ldots}}}}$ of $\coG$, given
that $\proc$ behaves as some participant $\p$ in $\coG$.
Namely,
if a process $\expr$ is well typed with a local type $\lT$, which is equal
up to unravelling to that of participant $\p$ in $\coG$, then the
behaviour of $\expr$ is that of $\p$ in $\coG$.

\subsection{Extraction}\label{sec:extraction}
\label{subsec:extraction}
Terms of type $\Proc$, in Coq, can be easily extracted to executable OCaml code,
following an approach similar to that of Interaction Trees \cite{Li-yao:2019}:
we can substitute the occurrences of $\proc$ terms by a suitable OCaml handler.
Figure~\ref{fig:processmonad} shows the declaration of a module for that purpose.
\begin{figure}
\begin{center}
\begin{lstlisting}[language=Coq,  basicstyle=\tt\footnotesize, numberstyle=\scriptsize]
Module ProcessMonad.
  Parameter t : Type -> Type.
  (* monadic bind and pure values *)
  Parameter bind : forall T1 T2, t T1 ->
    (T1 -> t T2) -> t T2.
  Parameter pure : forall T1, T1 -> t T1.
  (* actions to send and receive *)
  Parameter send : forall T, role -> lbl -> T -> t unit.
  Parameter recv : (lbl -> t unit) -> t unit.
  Parameter recv_one : forall T, role -> t T.
  (* actions for setting up a loop and jumping *)
  Parameter loop : forall T1, nat -> t T1 -> t T1.
  Parameter set_current: nat -> t unit.
  (* function to run the monad *)
  Parameter run : forall A, t A -> A.
End ProcessMonad.
\end{lstlisting}
  \end{center}\vspace*{-4mm}
\caption{The Process Monad.}\vspace*{-4mm}
\label{fig:processmonad}
\end{figure}

Module \code{ProcessMonad} specifies a monadic type \code{t}, that supports the
standard \code{bind} and \code{pure} operations, as well as constructs
for adding the required effects, in this case network communication
and looping (with potential non-termination). During extraction this
module becomes the ambient monad for the extracted code. In order to
run the code the user 
instantiates the monad
to provide a low
level implementation, which fills in the details about the network
transport. 
\dslName{}
processes are translated into the monad using the function
\code{\small extract\_proc} from \code{\small Proc.v}.
\iftoggle{full}{Appendix~\appref{appendix:extraction}}{\cite[Appendix~\appref{appendix:extraction}]{fullversion}}
shows the function in its entirety.

\subsection{Runtime}

The code for an endpoint process is extracted as a value inside of the
process monad from \sec\ref{sec:extraction}. \dslName's runtime
provides an implementation of \code{ProcessMonad}. 
The endpoint process is independent of the transport and network
protocols; the exact specification of those is deferred to the
implementation of the monad.
The runtime implements the monad relying on
the monad provided by OCaml's Lwt
library\footnote{https://ocsigen.org/lwt/5.2.0/manual/manual},
as well as its asynchronous communication primitives.
The transport uses TCP/IP and the payloads are encoded and decoded
using the 'Marshal' module in OCaml's standard
library\footnote{\url{https://ocaml.org/releases/4.11/htmlman/libref/Marshal.html}}.
This design prioritises OCaml based technologies to implement
asynchronous I/O and data encoding. Other transports are possible 
(e.g., web services over HTTP).

\subsubsection{Implementation}

In \dslName, the user implements their processes in the DSL, then uses 
Coq to produce OCaml code for the monad's module type and 
for the process, using extraction. The runtime implements a means to run that
code. Concretely it provides the transport and serialization.

A runnable process amounts to an instance of the functor type in
Figure~\ref{fig:processfunctor}, in which we provide the process monad
instance together with the extracted process.
\begin{figure}
\begin{center}
\begin{lstlisting}[language=ocaml, basicstyle=\tt\footnotesize, numberstyle=\scriptsize, flexiblecolumns=true]
module type PROCESS_FUNCTOR =
  functor (MP : ProcessMonad) -> sig
    module PM : sig
      type 'x t = 'x MP.t
      val run : 'a1 t -> 'a1
      val send : role -> lbl -> 'a1 -> unit t
      val recv :
        role -> (lbl -> unit t) -> unit t
      val recv_one : role -> 'a1 t
      val bind : 'a1 t -> ('a1 -> 'a2 t) -> 'a2 t
      val pure : 'a1 -> 'a1 t
      val loop : var -> (unit -> 'a1 t) -> 'a1 t
      val set_current : var -> unit t
    end
    val proc : unit MP.t
  end
\end{lstlisting}
  \end{center}\vspace*{-6mm}
\caption{The Process Functor.}\vspace*{-6mm}
\label{fig:processfunctor}
\end{figure}

Communication primitives in processes are unaware of transport or
other networking issues, they simply expect to be able to communicate
with the other roles involved in the protocol. The runtime
implementation requires the user to provide for each role a list of
channels to communicate with the other roles. It is specified as:
\vspace{-1mm}
\begin{lstlisting}[language=ocaml, basicstyle=\tt\footnotesize, flexiblecolumns=true]
type connection_spec
  = Server of sockaddr | Client of sockaddr
type conn_desc =
  { role_to : role; spec : connection_spec }
\end{lstlisting}
\vspace{-1mm}
where each process needs to specify a \lstinline!conn_desc list!
detailing a channel to each role where it either starts a connection
(using the \lstinline!Client! connector and specifiying IP and port in
the \lstinline!sockaddr! datatype) or waits for a connection (in a
similar way using the \lstinline!Server! constructor).




So finally, the runtime is invoked by calling the function:
\vspace{-1mm}
\begin{lstlisting}[language=ocaml, basicstyle=\tt\footnotesize, flexiblecolumns=true]
val execute_extracted_process
  : conn_desc list -> (module PROCESS_FUNCTOR) -> unit
\end{lstlisting}
\vspace{-1mm}
\noindent which connects a participant to all the roles as specified in the connection
list and executes extracted process passed as first-class module value
to the function. If the extracted process interacts with OCaml code, the library that
implements all the external functions has to be compiled into the
executable. 

With the addition of the runtime \dslName processes become certified
code that can be readily executed to implement distributed multiparty
services. 


\section{Evaluation: Certified Processes} \label{sec:zooid}
This section displays several common use cases in the MPST literature,
implemented and certified using \dslName: (1) several implementations of a
recursive ping-pong protocol; (2) a recursive pipeline; and (3) the two-buyer
protocol from \cite{Honda2008Multiparty}. \revised{We conclude the section with
a summary evaluating our mechanisation effort.}

\myparagraph{A Common Workflow}
\label{sec:zooid-workflow}
Our workflow consists of the following
steps:
\begin{enumerate*}[label=(\arabic*)]
  \item\label{lst:spec} specify the global type for the protocol;
  \item\label{lst:proj} project the global type into the set of local types;
  \item\label{lst:impl} implement a process using \dslName;
  \item\label{lst:prove} (if necessary) prove that the local type of the process
is equal up to unravelling to the projection of some participant;
  \item\label{lst:extr} use extraction to OCaml; and
  \item\label{lst:ocaml} implement external OCaml actions (if any).
\end{enumerate*}

Steps \ref{lst:spec}, \ref{lst:impl}, and \ref{lst:ocaml} are the
necessary inputs for implementing a certified process. Steps
\ref{lst:proj} and \ref{lst:extr} are fully automated, and step
\ref{lst:prove} is often automated too, although it may require a
simple manual proof. Finally, while step
\ref{lst:extr} is fully automated, it is possible to control the
result by using common Coq commands (e.g.~marking some definitions
opaque to avoid inlining them).
\subsection{Examples of Certified Processes}
\noindent\textbf{Pipeline. }
We start with a recursive variant of the example in
\sec~\ref{subsec:first-zooid}. The first step is to specify the global type.
We write its inductive representation:

\vskip.1cm
\noindent
$
\small
\begin{array}{ll}
\coqDef \; \Rpipe :=  \grec{\gX}{} & \hspace{-2mm}
  \msg{\Alice}{\Bob} \ell(\tnat) .\\
  & \hspace{-2mm} \msg{\Bob}{\Carol} \ell(\tnat) .
  \gX.
\end{array}
$
\vskip.1cm

\noindent
The next step is to project $\Rpipe$ into all of its participants.
There are two reasons to apply the projection at this step: (1) only
well-formed protocols are projectable; and (2) we obtain the local
types that will guide the implementation: the local types
will need to typecheck the implemented processes. If the global type
is not projectable, or the processes do not implement the resulting
local types (or one of their unrollings), then we cannot guarantee
anything about a \dslName{} implementation of any participant. We
define a notation for performing the projection of all participants:

\vskip.1cm
\noindent
$\small
\coqDef \; \RpipeLT := \coqProj \; \Rpipe.
$
\vskip.1cm

\noindent
If $\iftoggle{full}{\ }{}\Rpipe\iftoggle{full}{\ }{}$ is not well-formed, then $\iftoggle{full}{\ }{}\coqProj \iftoggle{full}{\ }{}$ will not \iftoggle{full}{\\}{}
typecheck. Otherwise, $\RpipeLT$ will be a list of pairs of
participants and local types. This list will contain an entry for
$\Alice$, $\Bob$ and $\Carol$. We get local type for $\Bob$ with:
\vskip.1cm
\noindent
$\small
\coqDef \; \BobLT := \coqGet \; \textsf{Bob} \; \RpipeLT.
$
\vskip.1cm

\noindent
The notation $\coqGet$ expands into a lookup in $\RpipeLT$ that
requires a proof that $\Bob$ is in $\RpipeLT$. If we write
$\coqGet \; \p \; \allowbreak\RpipeLT$ with some $\p \not\in\RpipeLT$, then the
command will fail to typecheck. There are now two possibilities for
using $\BobLT$ to implement $\Bob$: (1) providing $\BobLT$ as a type
index; or (2) omitting $\BobLT$, inferring the local type, and then
proving that the inferred local type is equal to $\BobLT$ up to
unravelling. Here we use (1), but sometimes the process actually
implements an unrolling of the local type. We will show examples of
(2) in the next section.

\vskip.1cm
\noindent
$\small
\begin{array}{l}
  \coqDef \; \BobProc : \zooidTy{\BobLT}
  \\ \quad
  :=
  \dproc{\code{loop}} \; {\lX} \; %
  \begin{array}[t]{@{}l@{}}
    ( \zrecv{\Alice}{\ell}{x : \tnat} \\
    \zinteract {\withcolor{dkgreen}{\code{compute}}} x \ %
    (\zfun {res} \\ \ \ %
    {\zsend{\Carol}{\ell}{\dcoq{res} : \tnat} \zjump \lX}) ).
  \end{array}
\end{array}
$
\vskip.1cm

With \dslName's
\withcolor{\colorproc}{\code{interact}}\ command we can call
the \withcolor{dkgreen}{\code{compute}}\ function, which is implemented in
OCaml, allowing any arbitrary computation safely because the runtime
hides the communication channels to prevent errors.

Finally, to do extraction to OCaml, we call $\runProc : \Proc \to
\code{MP\!.\!t}$. The user has options for code extraction: (1)
since $\Proc$ is defined inductively, use Coq's $\coqEval \;
\code{compute}$ to first replace all occurrences of $\Proc$ to
$\code{MP\!.\!t}$; (2) extract the inductive representation, as well as
$\runProc$. The former may evaluate and unfold more terms than desired. To
control this, we use Coq's command $\coqOpaque$ to specify any function or
definition that we do not wish to be unfolded.\\[1mm]
\noindent\textbf{Ping-Pong. }
In the anonymous supplement, we present
several implementations of the clients of a ping-pong server.
The global protocol is:

\vskip.1cm
\noindent
$\small
\begin{array}{l}
\coqDef \; \Rpingpong := \grec{\gX}{}
  \msg{\Alice}{\Bob}\ \{\\\quad\quad\quad \ell_1(\tunit). \; \gend; \;\ \ell_2(\tnat) .
  \msg{\Bob}{\Alice} \ell_3(\tnat) . \gX\}.
\end{array}
$
\vskip.1cm

\noindent
Here, $\Alice$ acts as the client for $\Bob$, which is the ping-pong server.
$\Alice$ can send zero or more \emph{ping} messages (label $\ell_2$), and
finally quitting (label $\ell_1$). $\Bob$, for each ping received, will reply a
\emph{pong} message (label $\ell_3$). In particular, we
wish to implement a client that sends an undefined number of pings,
stopping when the server replies with a natural number greater than
some $k$. We show below the \dslName{} specification:

\vskip.1cm
\noindent
$\small
\hspace{1mm}\begin{array}{@{}l@{}}
  \coqDef \; \AliceProc: \azooid
  := \dproc{[\code{proc}} \\
  \begin{array}[t]{@{}l@{}}
    \zselect{\Bob}{
    \hspace{-3mm}\begin{array}[t]{@{}l@{}}
      \hspace{3mm}[\zskip{\ell_1}{\tunit} \lend
      \\ \mid
      \zdflt{\ell_2}{0 : \tnat}\\\ \ %
      \dproc{\code{loop}} \; \lX \; (
      \zrecv{\Bob}{\ell_3}{x : \tnat}
      \\
      \zselect{\Bob}{%
      \begin{array}[t]{@{}l@{}}
        [ \zcase{x \geq k}{\ell_1}{\code{tt} : \tunit}
          \zend
        \\ \mid
        \zdflt{\ell_2}{x : \tnat}
        \zjump{\lX}
        ])
      ]]
      \end{array}
      }
    \end{array}
    }
  \end{array}
\end{array}
$
\vskip.1cm

\noindent
We project $\Rpingpong$ and get the expected
local type for $\Alice$: $\AliceLT$.
We observe that here the local type for
$\AliceProc$ is not syntactically equal to $\AliceLT$:

\vskip.1cm

\noindent
\hspace{-1.5mm}$\small
\begin{array}{l}
\AliceLT =
\dlt{\mu \lX}. \lsnd{\Bob} \{
   \ell_1(\tunit). \lend; 
   \ell_2(\tnat). \lrcv{\Bob} \ell_3(\tnat). \lX\}

  \\[0.5em]

\code{projT1} \; \AliceProc = \lsnd{\Bob} \{\ %
   \ell_1(\tunit). \lend;
   \; \ell_2(\tnat). \dlt{\mu \lX}.\\\quad \quad \quad \lrcv{\Bob} \ell_3(\tnat).
   \lsnd{\Bob} \{
   \ell_1(\tunit). \lend; \;
   \ell_2(\tnat). \lX.
    \} \}

\end{array}
$\\[1mm]
\noindent
This is not a problem since a simple proof by coinduction can show
that both types unravel to the same local tree. This gains 
the flexibility to have processes that implement any unrolling of
their local type, and the proofs are mostly simple as they follow the
way the types were unrolled. See
\iftoggle{full}{Appendix \appref{app:pingpong}}{\cite[Appendix \appref{app:pingpong}]{fullversion}}
for
more details on how to
construct gradually this client, showing how to iteratively
program using \dslName.

\subsection{A Certified Two Buyer Protocol}
We conclude this section presenting an implementation of the
\emph{two-buyer protocol} \cite{Honda2008Multiparty}, a common
benchmark of MPST. This is a protocol for an online purchase service
that enables customers to split the cost of an item among two
participants, as long as they agree on their shares. First, buyer
$\BuyerA$ queries the seller $\Seller$ for an item. Then,
$\Seller$ sends the item cost first to $\BuyerA$, then to
$\BuyerB$. Then,
$\BuyerA$ sends a proposed share for the item.
$\BuyerB$ then either accepts the proposal, and receives the delivery
date from $\Seller$, or rejects the proposal.

\begin{figure}
  \centering
  \begin{displaymath}\footnotesize
  \begin{array}{l}
    \begin{array}{@{}l@{}}
      \coqDef \; \TwoBuyer := \ %
      \msg{\BuyerA}{\Seller} \ItemId(\tnat).\\ \quad
      \msg{\Seller}{\BuyerA} \Quote(\tnat).
      \msg{\Seller}{\BuyerB} \Quote(\tnat).\\ \quad

      \msg{\BuyerA}{\BuyerB} \Propose(\tnat).
      \msg{\BuyerB}{\Seller}
      \{\ %
      \Accept(\tnat). \\ \quad \msg{\Seller}{\BuyerB}\Date(\tnat). \gend; \ %
      \Reject(\tunit). \; \gend
      \}
    \end{array}
    \\\multicolumn{1}{c}{\rule{.4\columnwidth}{0.4pt}}\\

    \begin{array}{@{}l@{}}
      \BuyerBLT := \lrcv{\Seller} \Quote(\tnat).
      \lrcv{\BuyerA} \Propose(\tnat).
      \lsnd{\Seller}
      \{
      \Accept(\tnat).\\  \lrcv{\Seller} \Date(\tnat). \lend;
      \Reject(\tunit).  \lend
      \}
    \end{array}
    \\\multicolumn{1}{c}{\rule{.4\columnwidth}{0.4pt}}\\

    \begin{array}{@{}l@{}}
      \coqDef \; \ProcB : \zooidTy{\BuyerBLT} :=\\
      \zrecv{\Seller}{\Quote}{x : \tnat}
      \zrecv{\BuyerA}{\Propose}{y : \tnat}
      \\ \quad
      \zselect{\Seller}{%
      \begin{array}[t]{@{}l@{}}
        [\zcase{y >= \code{divn} \; x \; 3}{\Accept}{y - x : \tnat}\\ \ \ %
        \zrecv{\Seller}{\Date}{d : \tnat}.
        \zend
        \\
        \mid
        \zdflt{\Reject}{\code{tt} : \tunit} \zend
        ]
      \end{array}
      }
    \end{array}
  \end{array}
\end{displaymath}
\vspace*{-2mm}
\caption{The Two Buyer protocol}\vspace*{-.6cm}
\label{fig:twobuyer}
\end{figure}

Figure~\ref{fig:twobuyer} shows the protocol as a global type,
the local type, $\BuyerBLT$, that results from the projection on $\BuyerB$,
and a possible implementation of the role of $\BuyerB$ in \dslName{}.
Different implementations of the local type
will differ in how the choice is made, but 
the local type
will always need to be syntactically equal to the projected $\BuyerBLT$,
due to the absence of recursion. In the implementation chosen in
Figure~\ref{fig:twobuyer}, the participant $\BuyerB$
will reject any proposal where $\BuyerB$ pays
more than one third of the cost of the item.
This implementation is \emph{guaranteed} to behave as $\BuyerB$ in
the protocol $\TwoBuyer$, hence deadlock-free.  Our workflow
preserves the ability to define and implement each participant independently:
$\BuyerA$ and $\Seller$ could be implemented in any language, as long as they
are implemented using a compatible transport to that of the OCaml implementation
of $\code{MP\!.\!t}$. The code that checks the types and performs the
projections is certified, as it is exactly the same code about which the
properties were established.

\subsection{Mechanisation Effort}
\revised{The development is 7.3KLOC of Coq code, and 1.7KLOC of OCaml
  for the runtime (including examples). The certified code consists of
  269 definitions, including functions and (co)inductive definitions
  and 396 proved lemmas and theorems. The most challenging part was
  working out the right definitions: the finite syntax object/infinite
  unrolling correspondence felt like a convoluted approach at first,
  but it greatly accelerated our progress afterwards.}


\section{Related Work and Conclusion} \label{sec:related}\label{sec:conclusion}

In the concurrency and behavioural types communities, there is growing
interest in mechanisation and the use of proof assistants to validate
research. As a recent example,
\citet{Krebbers:2020} explore the notion of semantic typing using a
concurrent separation logic as a semantic domain to build on top a
language to describe binary session types. On the same vein,
SteelCore~\cite{Swamy:2020} allows DSLs to take advantage of solid the
semantic foundations provided by a proof assistant. Where their
works use separation logic as a foundation, \dslName uses MPST and
their coinductive expansion.

The ambition of mechanisation in behavioural types is increasing and
collaborative projects that explore the space of available solutions
are an important tool for the community, where they explore different
representations of binders (names, de~Bruijn indices/levels, nominals
respectively), see \citeN[Discussion]{VEST:2020}. In this work we
sidetrack the question by designing \dslName to use a shallow
embedding of its binders (thus avoiding to need an explicit
representation for variables). In our experience, this is a simple and
valuable technique for the situations where it is applicable.

Other works also explore ideas on \emph{binary} session types using
proof assistants and mechanised proofs. For example,
\revised{\citet{brady:2017} develops a methodology to describe safely
  communicating programs and implements DSLs, embedded in Idris,
  relying on the Idris type checker}. \citet{Thiemann:2019} develops
an intrinsically typed semantics in Agda that provides preservation
and a notion of progress for binary session types. \Citet{GTV20}
explore the interaction between duality and recursive types and how
they take advantage of mechanisation to formalise some of their
results. \citet{Tassarotti:2017} show the correctness (in the Coq
proof assistant) of a compiler that uses an intermediate language
based on a simplified version of the GV system~\cite{Gay:2010} to add
session types to a functional programming language. And
\citet{Orchard:2016} discuss the relation between session types and
effect systems, and implement their code in the Agda proof assistant.
Their formalisation concentrates on translating between effect systems
and session types in a type preserving manner. \citet{emtst:2019}
present a type preservation of binary session types~\cite{Honda:1998,
  Yoshida:2007} as a case study of using their
tool~\cite{Castro:2020}. Furthermore, \citet{Goto:2016} present a
session types system with session polymorphism and use Coq to prove
type soundness of their system. Note that none of the above works on
session types treats \emph{multiparty session types} -- they are
limited to \emph{binary} session types.

Our work on MPST uses mechanisation to both give a fresh look at trace
equivalence~\cite{DenielouYoshida2013} in MPST and to further explore
its relation to a process calculus. At the same time our aim is to
provide a bedrock for future projects dealing with the MPST theories.
And crucially, this is the first work that tackles \emph{a full syntax
  of asynchronous multiparty session types} that type the whole
interaction, as opposed to binary session types, which only type
individual channels.

\revised{ Furthermore, in this work, we present not only \dslName{} as
  a certified process language, but also the methodology to design a
  certified language like this. Zooid’s design starts with the theory,
  then the mechanised metatheory, and, finally, 
  implementing a deeply embedded process language (deeply embedded in
  two ways: as a DSL and in the library of definitions and lemmas
  provided in the proof mechanisation). We propose \dslName{} as an
  alternative to writing an implementation that is proved correct post
  facto. There is no tension between proofs and implementation, since
  the proofs enable the
  implementation. 
  An important feature of our design is the \emph{correspondence of
    syntactic objects and their infinite tree representation}.
  Coinductive trees allow us to deal smoothly with semantics and avoid
  bindings: such a technique applies to languages with equi-recursion,
  a widespread construct \cite
  {GHILEZAN2019127,PierceBook,SeveriDezani:2019}. On the other hand we
  have kept an inductive type system for processes, so that we have
  finite, easy-to-inspect, structures, on which we can make
  computations. Our novel design takes advantage of the infinite-tree
  representation of syntactic objects, thus providing us with
  syntactic types for Zooid and coinductive representation for the
  proofs.}

Regarding the choice of tool and inspiration in this work, we point
out that the first objective is to mechanise trace equivalence between
global and local types. 
For that, we
took inspiration from more semantic representations of session
types~\cite{GHILEZAN2019127, Li-yao:2019}. The choice of the Coq proof
assistant~\cite{CoqManual} was motivated by its stability, rich
support for coinduction, and good support for the extraction of
certified code. Stability is important since this is a codebase that
we expect to work on and expand for future projects. The proofs take
advantage of small scale reflection~\cite{Gonthier:2010} using
Ssreflect to structure our development. And given the pervasive need
for greatest fixed points in MPST, we extensively use the PaCo
library~\cite{paco} for the proofs that depend on coinduction.

To conclude, we design and implement a certified language for
concurrent processes supporting MPST. We start by mechanising the
meta-theory of asynchronous MPST, and prove the soundness and
completness theorems of trace semantics of global and local types. We
then build \dslName, a process language on top of that. Using code
extraction, we interface with OCaml code to produce running
implementations of the processes specified in \dslName.

This work on mechanising MPST and \dslName is a founding stone, there
are many exciting opportunities for future work. On top of our
framework, we plan to explore new ideas and extensions of the theory
of session types. The immediate next step is to make the proofs
extensible, for example by allowing easy integration of custom merge
strategies, adding advanced features such as indexed dependent session
types~\cite{Castro:2019:DPU:3302515.3290342}, timed
specifications~\cite{BYY2014,BLY2015}, \revised{or session/channel
  delegation~\cite{Honda2008Multiparty}.} Moreover, we intend to apply
the work in this paper (and its extensions) to implement a certified
toolchain for the Scribble protocol description language
(\iftoggle{full}{available at }{}{\url{http://www.scribble.org}), also known as ``the practical
  incarnation of multiparty session types''
  \cite{10.1007/978-3-642-19056-8_4,FeatherweightScribble}. To this
  aim we plan to translate from Scribble to MPST style global types,
  following the Featherweight Scribble
  formalisation~\cite{FeatherweightScribble}.


\begin{acks}                            

  We thank the PLDI reviewers for their careful reviews and
  suggestions. We thank Fangyi Zhou for their comments and testing the
  artifact. The work is supported by
  \grantsponsor{}{EPSRC}{https://epsrc.ukri.org/}, grants
  \grantnum{EPSRC}{EP/T006544/1}, \grantnum{EPSRC}{EP/K011715/1},
  \grantnum{EPSRC}{EP/K034413/1}, \grantnum{EPSRC}{EP/L00058X/1},
  \grantnum{EPSRC}{EP/N027833/1}, \grantnum{EPSRC}{EP/N028201/1},
  \grantnum{EPSRC}{EP/T014709/1}, and \grantnum{EPSRC}{EP/V000462/1}
  and by \grantsponsor{}{NCSS/EPSRC}{} \grantnum{NCSS/EPSRC}{VeTSS}.

\end{acks}

\bibliography{fullversion}

\appendix
\onecolumn
\theoremstyle{acmplain}
\newtheorem*{theorem*}{Theorem}

\section{Multiparty Session Types in Coq}
\label{appendix:meta-proofs}

One of the contributions of our work is a mechanisation of multiparty session types;
this section is dedicated to their metatheory,
as we have formalised it in Coq: we follow the structure of the Coq
development, and indicate with precise pointers where definitions and
results can be found in the Coq formalisation associated to this
paper. We will present proof outlines for some of the main results, while the full proofs are to be found in the Coq formalisation associated with this paper; such outlines are meant to guide the interested reader through our formalisation.

This appendix is structured as a formalised journey towards semantics.
We first present the classic syntax of global and local session types.
We then introduce coinductive global and local trees as a ``more
semantic version'' of types (Sections \ref{subsec:global-A} and
\ref{subsec:loc-A}). We formalise a precise relation to associate trees
to types, and then we show that it preserves projections (Section
\ref{subsec:projection-A}). This closes the square
\textcolor{orange}{(M.1)} of Figure \ref{fig:dia}.
We introduce buffers to deal with asynchronicity (Section
\ref{subsec:async_projection-A}), we then define small-step semantics
for global and local types, via labelled transition systems on trees
(Section \ref{subsec:step-A}). Finally, in Section \ref{subsec:trace-eq-A}
we present our main result: execution trace equivalence for global and
local types, thus closing the square \textcolor{orange}{(M.2)} of
Figure \ref{fig:dia}.

\subsection{Global Types}
\label{subsec:global-A}
This subsection gives definitions for global types, following our Coq
development. The literature offers a wide variety of presentations of
global types \cite{Honda2008Multiparty,HYC2016,Scalas:2019,CDPY2015},
each exploring different aspects of communication. Building on
\cite{DenielouYoshida2013}, we formalise \emph{asynchronous multiparty
  session types} (MPST), which allow us to capture the essential
behaviour of asynchronous message exchange, where messages are
transmitted via FIFO queues, and treat the key features of MPST
including selection, branching and recursion.

We first discuss the formalisation of \emph{global types}, covered in
the Coq files of the folder \code{Global}. We use \emph{sorts} to refer to the
types of supported message payloads, covered in the Coq file: \code{Common/AtomSets.v}.
\begin{definition}[Sorts and global types]
  \label{def:global-types-A}%
        {\em Sorts} (datatype \mty{} in \code{Common/AtomSets.v}), ranged over by $\tS$,
and  {\em global types} (datatype \gty{} in \code{Global/Syntax.v}), ranged over by $\G$, are generated by:
\[
  \begin{array}{rcl}
    \tS & ::=& \tnat \SEP \tint \SEP\tbool \SEP \tplus \tS \tS \SEP \tpair \tS \tS \SEP \tseq\ \tS\\[1mm]
    \G & ::=& %
              \gend \;\SEP\;%
              \gX \;\SEP\;%
              \grec \gX \G \;\SEP\;
              \msgi \p\q \ell {\tS} {\G}%
  \end{array}
\]
We require that $\p\neq \q$, $I\neq \emptyset,$ and
$\ell_i \neq \ell_j$ whenever $i \neq j,$ for all $i,j\in I$.
\end{definition}
In Definition~\ref{def:global-types-A}, the type
$\msgi \p\q \ell {\tS} {\G}$ describes a protocol where %
participant $\p$ must send to $\q$ one message %
with label $\ell_i$ and a value of sort $\dte{\tS_i}$ as payload, for
some $i \in I$; %
then, depending on which $\ell_i$ was sent by $\p$, %
the protocol continues as $\G_i$. Sorts can be basic types such as
natural numbers ($\tnat$), integers ($\tint$), booleans ($\tbool$) or
recursive combinations of these, as sums ($\dte{+}$), pairs
($\dte{*}$) or lists ($\tseq$). %
The type $\gend$ represents a terminated protocol. %
Recursive protocol is modelled as $\dgt{\gmu\gX.\G}$, %
where recursion variable $\gX$ is bound.

The representation of the syntax above as inductive types is standard.
In Coq we represent the recursion binder using \dbj
indices~\cite{debruijn:1972, Gordon:1993,
  McKinna:1999,McBride:2004}.
To ease the presentation throughout the paper,
we keep using explicit names for variables.

As customary in the literature, we are interested in global types such
that (1) bound variables are \emph{guarded}---e.g.,

\[\gmu\gX.\msg \p\q \dgt{\ell (\tnat). \G}\]

is a valid global type,
whereas $\gmu\gX\dgt{.}\gX$ is not---and (2) types are \emph{closed},
i.e., all recursion variables are bound by $\gmu\gX$. We will mostly
leave these conditions implicit, however we provide below the two
formal definitions.

\begin{definition}[Guardedness for global types]
  \label{def:guarded-A}%
  We say that a global type is \emph{guarded} (\code{guarded} in
  \code{Global/Syntax.v}) according to the following definition:
  \[
    \small
    \begin{array}{c}
      \dfrac{}{\guarded\ \gend}\quad
      \dfrac{}{\guarded\ \gX}\quad
      \dfrac{\code{not\_pure\_rec}\;\gX\;\G \quad \guarded\ \G}{\guarded\ (\grec \gX \G)}\\[3mm]
      \dfrac{\forall i\in I. \guarded\ {\dgt{\G_i}}}{\guarded\ (\msgi \p\q \ell {\tS} {\G})}
    \end{array}
  \]
  where $\code{not\_pure\_rec}\;\gX\;\G$ means that $\G$ is different
  from $\grec {\dgt{Y_1}} {\dots\grec {\dgt{Y_n}} \gX}$ and also
  $\G\neq\gX$.
\end{definition}

\begin{definition}[Free variables and closure (global types)]
\label{def:closed-A}%
The set of \emph{free variables} of a global type $\G$, $\gfv(\G)$,
(\code{g\_fidx} in \code{Global/Syntax.v}) is defined as follows.
\[
\small
\begin{array}{c}
\gfv(\gend) = \emptyset, \
\gfv(\gX) = \{\gX\}, \
\gfv(\grec \gX \G) = (\gfv(\G))\backslash\{\gX\}, \\
\gfv(\msgi \p\q \ell {\tS} {\G}) = \bigcup\limits_{i\in I} \gfv({\dgt{\G_i}})
\end{array}
\]
We say that $\G$ is \emph{closed} (\code{g\_closed} in
\code{Global/Syntax.v}) if it does not contain free variables, namely:
$\closed\ \G \leftrightarrow (\gfv(\G)=\emptyset)$.
\end{definition}
We define the \emph{set of participants of a global type $\G$} (\code{participants} in \code{Global/Syntax.v}), %
by structural induction on $\G$, as follows:
\[
\small
\begin{array}{c}
  \parti\ \gend = \parti\ \gX = \emptyset \quad
  \parti\ \dgt{\mu \gX.\G} = \parti\ {\G} \\
  \parti\ {\msgi \p \q \ell \tS {\G}}=\{\p,\q\}\cup \bigcup\limits_{i\in I} \parti\ {\dgt{G_i}} %
\end{array}
\]
Participants of a global type are those roles that are involved in the
communication.

In session types, it is common practice to adopt the
\emph{equi-recursive viewpoint} \cite{PierceBook}, i.e., to identify
$\dgt{\gmu\gX.\G}$ and $\dgt{\G \{ \gmu\gX.\G / \gX \}}$, given that
their intended behaviour is the same. Such unravelling of the
recursion constructor can be performed infinitely many times, thus
obtaining possibly infinite trees, whose structure derives from the
above syntax of global types \cite{GHILEZAN2019127}.
%
%

In the formalisation we provide a \emph{coinductively defined
  datatype} (codatatype) of \emph{finitely branching trees} with
\emph{possible infinite depth}. Their branching mirrors the branching
of global types into their continuations and the infinite depth allows
us to indefinitely unravel recursion.

We have a similar coinductive representation for local types
(\sec\ref{subsec:loc-A}). For the definitions of, and proofs about,
coinductive objects in Coq, we have taken advantage of the Paco
library \cite{paco} for parametrised coinduction, which allows for a
more compositional reasoning in the formalisation than the standard
\code{cofix} construction.

\begin{definition}[Semantic global trees]
  \label{def:global trees-A}%
  {\em Semantic global trees} (datatypes \rgty\ and \igty\ in
  \code{Global/Tree.v}, see Remark \ref{remark:prefixes-A}), %
  ranged over by $\coG$, %
  are terms generated \emph{coinductively}
  by: 
  \[
    \begin{array}{rll}
    \coG & ::= & \\
    & & \cogend \;\SEP\;
    \comsgni \p\q \ell {\tS} {\coG}\\
      && \;\SEP
         \comsgsi  \p\q {\ell_j} \ell {\tS} {\coG}
    \end{array}
  \]
  We require that $\p\neq \q$, $I\neq \emptyset,$ and
  $\ell_i \neq \ell_j$ whenever $i \neq j,$ for all $i,j\in I$. %
\end{definition}
The above codatatype represents the bridge between the syntax and the
semantics for global types.
Here 
we make explicit the two asynchronous stages of the communication of a
message:
\begin{itemize}
\item $\comsgni \p\q \ell {\tS} {\coG}$ represents the status where a
  message from the participant $\p$ to the participant $\q$ has yet to
  be sent;
\item $\comsgsi \p\q {\ell_j} \ell {\tS} {\coG}$ represents the status
  immediately after the above: the label $\ell_j$ has been picked
  among the $\ell_i$, $\p$ has sent the message, with payload $\tS_j$,
  but $\q$ has not received it yet.
\end{itemize}

We can now give the central definition of coinductive unravelling for
global types, which relates a global type with its semantic tree.
\begin{definition}[Global unravelling]
  \label{def:gunroll-A}%
  {\em Unravelling of global types} (definition \code{GUnroll} in
  \code{Global/Unravel.v}) is the relation between global types and semantic
  global trees coinductively defined by: 
  \[
    \begin{array}{ccc}
    \rulename{g-unr-end} &\rulename{g-unr-rec}\\
    \newDfrac{ }{\gunroll \gend \cogend} &
    \newDfrac{\gunroll {\dgt{\G \{ \gmu\gX.\G / \gX \}}} \coG}{\gunroll{\grec \gX \G} \coG} &
  \end{array}
  \]
  \[
\begin{array}{l}
    \rulename{g-unr-msg} \\
    \newDfrac{\forall i\in I. \gunroll {\dgt{\G_i}} {\dcog{\coG_i}} }{\gunroll{\msgi \p\q \ell {\tS} {\G}}{\comsgni \p\q \ell {\tS} {\coG}} }
  \end{array}
  \]
\end{definition}
The unravelling operation gives us a snapshot of all possible
executions of the global type. As infinite trees, both
$\dgt{\gmu\gX.\G}$ and $\dgt{\G \{ \gmu\gX.\G / \gX \}}$ have the same
representation: we are able to identify global types that are the same
up to unfolding, thus offering a rigorous behavioural
characterisation.
Moreover, we have obtained a
binding-free syntax for global types and therefore removed one of the
most notoriously tedious features of formal reasoning.

\begin{remark} \label{remark:prefixes-A} In Coq the coinductive datatype
  for global trees is defined in a slightly different way, in
  particular we formally split Definition \ref{def:global trees-A} in
  two parts. First, we define its coinductive core as follows
  (datatype \rgty\ in \code{Global.v}):
\begin{definition}[Semantic global trees, alternative definition]
  \label{def:g-trees-alt-A}%
  {\em Semantic global trees} (datatype \rgty\ in \code{Global/Trees.v}), ranged over by $\coG$, %
  are terms generated \emph{coinductively} by the following grammar:
  \[
  \coG \quad ::=\quad %
  \cogend \;\SEP\;%
  \comsgni \p\q \ell {\tS} {\coG}
  \]
  We require that $\p\neq \q$, $I\neq \emptyset,$ and
  $\ell_i \neq \ell_j$ whenever $i \neq j,$ for all $i,j\in I$. %
\end{definition}
Note that at this point we have \emph{not} introduced the asynchronous
bit of separating ``send'' and ``receive'' messages yet. We introduce
the ``receive'' constructor $\comsgs \_ \_ \_$ with the next
\emph{inductive} datatype, defined on top of trees. In particular,
this receive constructor is only present in the prefixes because at
any given time only finitely many messages have been sent.
\begin{definition}[Prefixes for global trees]
  \label{def:prefixes-A}%
  {\em Prefixes for global trees} (datatype \igty\ in \code{Global/Tree.v}), ranged over by $\preG$, %
  are terms generated \emph{inductively} by the following grammar:
  \[
    \begin{array}{rll}
    \preG & ::= & \\
    & & \pgend\ \coG \;\SEP\;
    \pmsgni \p\q \ell {\tS} {\preG}\;\SEP\;\\
      && \;\SEP
         \pmsgsi  \p\q {\ell_j} \ell {\tS} {\preG}
    \end{array}
  \]
  We require that $\p\neq \q$, $I\neq \emptyset,$ and
  $\ell_i \neq \ell_j$ whenever $i \neq j,$ for all $i,j\in I$. %
\end{definition}
The intuition for each construct in the last two definitions follow
exactly the one for Definition \ref{def:global trees-A}; we inject the
codatatype \rgty\ into the datatype \igty\ using a dedicated
constructor $\pgend$ and, as anticipated, the receive-message
$\pmsgs \p \q \ell$ is now part only of the syntax for inductive
prefixes.

In Coq, distinguishing the two (co)datatypes and injecting one into
the other, has allowed us to perform ``induction on global trees'' (on
their prefixes), by considering each time the \emph{finite number of
  unevaluated steps}, namely the number of messages that have been
sent and not yet received. About this, we notice that formally there
is no isomorphism between the codatatype of infinite trees in
Definition~\ref{def:global trees-A} and the one we obtain by composing
prefixes of Definition~\ref{def:prefixes-A} and trees of
Definition~\ref{def:g-trees-alt-A}: the constructor
$\pmsgs \p \q \ell \ \dpre{\dots}$ can now appear only inside the
inductive prefix. However, this does not affect the unravelling
operation $\Re$ (Definition~\ref{def:gunroll-A}), since types are
unravelled in trees without any construct
$\pmsgs \p \q \ell \ \dpre{\dots}$, neither will affect any further
semantic description, since we will consider only a finite number of
semantic steps: only a finite number of messages will have been sent
after each semantic step of the system.

For simplicity and to stay closer to the intuition, throughout the
paper we will stick to the presentation of trees as a single
codatatype (Definition~\ref{def:global trees-A}). Where we need to
perform induction on prefixes we will explicitly mention it.
\end{remark}

\subsection{Local Types} \label{subsec:loc-A} For \emph{local types} (or
\emph{local session types}), we take the same approach as global
types: we formalise their inductive syntax and then we coinductively
unravel recursion to obtain possibly infinite trees.
\begin{definition}[Local types]
  \label{def:local-types-A}%
  {\em Local types} (datatype \code{l\_ty} in \code{Local/Syntax.v}), ranged over by $\lT$, %
  are generated by the following grammar:
  \[
    \begin{array}{rll}
    \lT & ::= & \\
    & & \lend\;\SEP\;\lX \;\SEP\;\lrec \lX \lT \;\SEP\;\\
      && \;\SEP\;  \lsend \q \ell {\tS} {\lT}\;\SEP\;%
  \lrecv \p \ell {\tS} {\lT}
    \end{array}
  \]
We require that $I\neq \emptyset,$ and $\ell_i \neq \ell_j$ whenever $i \neq j,$ for all $i,j\in I$.
\end{definition}
The session type $\lend$ says that no further communication is
possible and the protocol is completed. %
Recursion is modelled by the session type $\dlt{\mu\lX.\lT}$. The
\emph{send type} 
$\lsend \q \ell {\tS} {\lT}$ %
says that the participant implementing the type %
must choose a labelled message to send to $\q$; %
if the participant chooses the message $\ell_i$, for some $i\in I,$ %
it must include in the message to $\q$ a payload value of sort
$\tS_i$, %
and continue as prescribed by $\dlt{\lT_i}$. %
The \emph{receive
  type} 
$\lrecv \p \ell {\tS} {\lT}$ %
requires to wait to receive %
a value of sort $\tS_i$ (for some $i \in I$) %
from the participant $\p$, via a message with label $\ell_i$; %
if the received message has label $\ell_i$, %
the protocol will continue as prescribed by $\dlt{\lT_i}$. %

We restrict ourselves to closed local types and we require recursion
to be guarded. In the text we will mostly implicitly assume those. We
define analogous predicates to the ones for global types.
\begin{definition}[Guardedness for local types]
  \label{def:loc-guarded-A}%
  We say that a local type is \emph{guarded} (\code{lguarded} in
  \code{Local/Syntax.v}) according to the following definition:
\[
\small
  \begin{array}{c}
  \dfrac{}{\guarded\ \lend} \quad
  \dfrac{}{\guarded\ \lX} \quad
  \dfrac{\code{not\_pure\_rec}\;\lX\;\lT\quad \guarded\ \lT}{\guarded\ (\lrec \lX \lT)}\\[4mm]
  \dfrac{\forall i\in I. \guarded\ {\dlt{\lT_i}}}{\guarded\ (\lsend\q \ell {\tS} {\lT})}\quad
  \dfrac{\forall i\in I. \guarded\ {\dlt{\lT_i}}}{\guarded\ (\lrecv \p \ell {\tS} {\lT})}
  \end{array}
  \]
  where $\code{not\_pure\_rec}\;\lX\;\lT$ means that $\lT$ is
  different from $\lrec {\dlt{Y_1}} {\dots\lrec {\dlt{Y_n}} \lX}$ and
  also $\lT\neq\lX$
\end{definition}
\begin{definition}[Free variables and closure (local types)]
  \label{def:loc-closed-A}%
The set of \emph{free variables} of a local type $\lT$, $\lfv(\lT)$, (\code{l\_fidx} in \code{Local/Syntax.v}) is defined as follows.
\[
\small
\begin{array}{c}
\lfv(\lend) = \emptyset\quad
\lfv(\lX) = \{\lX\}\quad
\lfv(\lrec \lX \lT) = (\lfv(\lT))\backslash\{\lX\} \vspace*{2mm}\\
\gfv(\lsend \q \ell {\tS} {\lT}) = \bigcup\limits_{i\in I} \lfv({\dlt{\lT_i}})\\
\lfv(\lrecv \p \ell {\tS} {\lT}) = \bigcup\limits_{i\in I} \lfv({\dlt{\lT_i}})
\end{array}
\]
We say that $\lT$ is \emph{closed} (\code{l\_closed} in
\code{Local/Syntax.v}) if it does not contain free variables, namely
$\closed\ \lT \leftrightarrow (\lfv(\lT)=\emptyset)$.
\end{definition}
\noindent We provide a binding-free codatatype for local trees, whose structure derives from their syntax.

\begin{definition}[Semantic local trees]
  \label{def:local-trees-A}%
  {\em Semantic local trees} (datatype \code{rl\_ty} in
  \code{Local/Tree.v}), ranged over by $\lT$, %
  are terms generated \emph{coinductively} by the following grammar:
  \[
    \begin{array}{rll}
    \colT & ::= & \\
    & & \colend\;\SEP\;\colsend \p \ell {\tS} {\colT} \;\SEP\;\\
      && \;\SEP\;   \colrecv \q \ell {\tS} {\colT}
    \end{array}
  \]
 We require that $\p\neq \q$, $I\neq \emptyset,$ and $\ell_i \neq \ell_j$ whenever $i \neq j,$ for all $i,j\in I$.
\end{definition}
As is done for global types, we define the unravelling of a local type
into a local tree.
\begin{definition}[Local unravelling]
  \label{def:lunroll-A}%
  {\em Unravelling of local types} (definition \code{LUnroll} in
  \code{Local/Unravel.v}) is the relation between local types and semantic
  trees coinductively specified by the following rules:
  \[
    \begin{array}{ll}
   \rulename{l-unr-end} & \rulename{l-unr-rec}\\
  \newDfrac{ }{\lunroll \lend \colend} &
  \newDfrac{\lunroll {\dlt{ \lT \{ \lrec{\lX}{\lT} / \lX \}}} \colT}{\lunroll{\lrec \lX \lT} \colT}
  \\[3mm]
  \multicolumn{2}{l}{\rulename{l-unr-send}}\\
  \multicolumn{2}{l}{
    \newDfrac{\forall i\in I. \lunroll {\dlt{\lT_i}} {\dcol{\colT_i}}}
             {\lunroll{  \lsend \q \ell {\tS} {\lT} }{\colsend \q \ell {\tS} {\colT}}  }
  }\\[3mm]
  \multicolumn{2}{l}{\rulename{l-unr-recv}}\\
  \multicolumn{2}{l}{
    \newDfrac{\forall i\in I. \lunroll {\dlt{\lT_i}} {\dcol{\colT_i}} }{\lunroll{  \lrecv \p \ell {\tS} {\lT}}{\colrecv \p \ell {\tS} {\colT}} }
  }
  \end{array}
  \]
\end{definition}

\begin{remark}
We have required several ``well-formedness'' properties to types.
\begin{enumerate*}
\item $I\neq\emptyset$ in Definitions \ref{def:global-types-A},
  \ref{def:global trees-A}, \ref{def:local-types-A} and
  \ref{def:local-trees-A}, namely the continuations for global/local
  types/trees are not allowed to be empty.
\item Every recursion constructor in global/local types must be
  guarded (Definitions \ref{def:guarded-A} and \ref{def:loc-guarded-A}).
\item We only consider closed global/local types (Definitions
  \ref{def:closed-A} and \ref{def:loc-closed-A}).
\end{enumerate*}

In the rest of the paper we continue to implicitly assume those for
each object we consider; however in the Coq development such
conditions must be made explicit in definitions and statements. We
have formalised them with (co)inductive predicates. In particular for
global types we have defined \code{g\_precond} (in \code{Global/Syntax.v})
exactly as the conjunction of the three predicates listed above, while
for global trees we have defined \code{WF} (in \code{Projection/CProject.v}) to
ensure that the continuations in the tree are never empty.
\end{remark}

\subsection{Projections, or how to discipline communication}
\label{subsec:projection-A}
At the very core of the theory of multiparty session types, there is
the notion of \emph{projection}. We have laid down a setting, where
global types offer a bird's-eye perspective on communication and local
types take instead the point of view of a single participant. The
following definition is formalised to make sure that participants
respect what is globally prescribed for the protocol: each local type
$\lT$ protocol must be the projection, onto the respective
participant, of the global type $\G$.
\begin{definition}  \label{pro-A}%
\label{def:projection-A}%
The \emph{projection 
  of a global type onto a participant $\pr$} (\code{project} in
\code{Projection/IProject.v}) is a partial function
$\projt{\_} \pr: \gty \nrightarrow \lty $
defined by recursion on $\G$ whenever the recursive call is defined:
\[
\begin{array}{ll}
  \rulename{proj-end} & \rulename{proj-var}\\
  \projt{\gend} \pr = \lend & \projt{\gX}{\pr}=\lX \\[2mm]
  \rulename{proj-rec} & \\
  \multicolumn{2}{l}{
    \projt{(\grec \gX \G)}{\pr}=\lrec \lX {} (\projt{\G}{\pr})\ \text{if}\ \guarded  (\projt{\G}{\pr})
  }\\[2mm]
  \rulename{proj-send} & \\
  \multicolumn{2}{l}{
    \pr=\p\  \text{implies}\ \projt{\msgi \p\q \ell {\tS} {\G}} \pr=\lsendni \q \ell {\tS} { {\color{black}\projt{\G_{\dgt i}} \pr} }
  }\\[2mm]
  \rulename{proj-recv} & \\
  \multicolumn{2}{l}{
    \pr=\q\ \text{implies}\ \projt{\msgi \p\q \ell {\tS} {\G}} \pr=\lrecvni \p \ell {\tS} { {\color{black}\projt{\G_{\dgt i}} \pr} }
  }\\[2mm]
  \rulename{proj-cont} & \\
  \multicolumn{2}{l}{
    \begin{array}{l}
      \text{$\pr\neq \p$, $\pr \neq \q$ and $\forall i,j\in I$, $\projt{\G_{\dgt{i}}}\pr=\projt{\G_{\dgt{j}}}\pr$; implies}\\
      \projt{\msgi \p\q \ell {\tS} {\G}} \pr = \projt {\G_{\dgt 1}} \pr
    \end{array}
  }\\[4mm]
  \multicolumn{2}{l}{\text{undefined if none of the above applies.}}
\end{array}
\]
\end{definition}
We describe the clauses of Definition \ref{def:projection-A}:
\begin{description}
\item[\rulename{proj-end,proj-var}] give the projections for end-types
  and type variables; \item[\rulename{proj-rec}] gives the projection
  on recursive types;
\item[\rulename{proj-send} (resp.~\rulename{proj-recv})] %
  states that a global type starting with a communication %
  from $\pr$ to $\q$ (resp.~from $\q$ to $\pr$) %
  projects onto a sending (resp.~receiving) local type
  $\lsendni \q \ell {\tS} { {\color{black}\projt{\G_{\dgt i}} \pr}}$
  (resp.~$\lrecvni \p \ell {\tS} { {\color{black}\projt{\G_{\dgt i}}
      \pr} }$), provided that the continuations
  $\projt{\G_{\dgt{i}}}\pr$ are also projections of the corresponding
  global type continuations $\dgt{G_i}$;
\item[\rulename{proj-cont}] states that, if the projected global type
  starts with a communication between $\p$ and $\q$ and if we are
  projecting it onto a third participant $\pr$, then, for the
  projection to be defined, we need to make sure that continuation is
  \emph{the same} on all branches.%
\end{description}
To prove the main result of trace equivalence in Coq, we want to
conveniently work with coinductive trees, hence we also define the
projection of a global tree onto a participant.
\begin{example}[Projection]\label{ex:projection-A}
  We show the examples of projection of global types.
  First we would not be able to project
  \
  $ \dgt{\G'} = \msg{\Alice}{\Bob} \dgt{\{ \ell_1(\tnat) .
    \msg{\Bob}{\Carol} \ell(\tnat) .\gend ,}
  $\\
  $
    \dgt{\ell_2(\tnat) .
    \msg{\Alice}{\Carol} \ell(\tnat) .\gend \}}
  $
  onto \Carol{}, since, after skipping the first interaction between
  \Alice{} and \Bob{}, it would not be clear whether \Carol{} should
  expect a message
  from  \Alice{} or from \Bob{}. If instead we have
  \
  $
    \dgt{\G} = \msg{\Alice}{\Bob}
  $\\
  $
    \dgt{\{ \ell_1(\tnat) .
    \msg{\Bob}{\Carol} \ell(\tnat) .\gend , \ell_2(\tbool) .
    \msg{\Bob}{\Carol}}
  $\\
  $
  \dgt{\ell(\tnat).\gend \}}
  $
  the projection $\projt\G \Carol$ is well defined as the local type
  $\lT = \lrcv{\Bob}\dlt{\ell(\tnat).\lend}$. In Coq we have rendered
  this behaviour encoding the projection codomain as
  $\code{option}$  $\code{l\_ty}$, as is common practice when
  formalising partial functions.
\end{example}
\begin{definition}
  \label{co-pro-A}%
  \label{def:co-projection-A}%
  The \emph{projection 
    of a coinductive 
    global tree onto a participant $\pr$} (definitions \code{Project}
  and \code{IProj} in\\ \code{Projection/CProject.v}) is a relation
  $\coproj {\pr} {\_} {\_}\ :\ \rel{\cogty}{\colty} $ coinductively
  specified by the following clauses:
  \[
  \small
  \begin{array}{l}
    \rulename{co-proj-end}   \\
    \newDfrac{\neg\ \partof \pr \coG}{\coproj {\pr} {\coG}  {\colend}}\\[4mm]
    \rulename{co-proj-send-1}\\
    \newDfrac
        {\pr=\p\ \ \ \ \ \ \ \ \forall i\in I. \coproj {\pr} {\coG_{\dcog i}} {\colT_{\dcol i}} }
 	{\coproj {\pr} {\comsgni \p\q \ell {\tS} {\coG}} {\colsend \q \ell {\tS} {\colT}}}  \\[4mm]
    \rulename{co-proj-send-2}\\
    \newDfrac{\pr\neq\q\ \ \ \ \ \ \ \ \forall i\in I. \coproj {\pr} {\coG_{\dcog i}} {\colT_{\dcol i}} }
             { \coproj {\pr} {\comsgsi  \p\q {\ell_j} \ell {\tS} {\coG} } {\colT_{\dcol j } } } \\[6mm]
    \rulename{co-proj-recv-1}\\
    \newDfrac{\pr=\q\ \ \ \ \ \ \ \ \forall i\in I. \coproj {\pr} {\coG_{\dcog i}} {\colT_{\dcol i}} }
             { 	\coproj {\pr} { \comsgni \p\q \ell {\tS} {\coG}} {\colrecv \p \ell {\tS} {\colT} } } \\[4mm]
 \rulename{co-proj-recv-2}\\
 \newDfrac{\pr=\q\ \ \ \ \ \ \ \ \forall i\in I. \coproj {\pr} {\coG_{\dcog i}} {\colT_{\dcol i}} }
          {\coproj {\pr} {\comsgsi  \p\q {\ell_j} \ell {\tS} {\coG} } {\colrecv \p \ell {\tS} {\colT} } } \\[6mm]
 \rulename{co-proj-cont}\\
 \ \ \pr\neq\p\ \ \pr\neq\q\ \ \ \forall i\in I. \coproj {\pr} {\coG_{\dcog i}} {\colT_{\dcol i}} \\[1mm]
 \newDfrac{
          \ \ \ \ \ \ \forall i,j\in I.  \colT_{\dcol i}=\colT_{\dcol j}\ \ \ \ \ \ \
          \forall i\in I. \partof \pr {\coG_{\dcog i}}\ \ \ \ \ \ \ }
          {\coproj {\pr} { \comsgni \p\q \ell {\tS} {\coG} } {\colT_{\dcol 1} }}
        \end{array}
  \]
\end{definition}
The coinductive definition of projection follows the same intuition as
the recursive one (Definition \ref{def:projection-A}), but we have
extended this to the asynchronous construct
$\comsgs \p \q \ell\;\dots\;$, adapting to a coinductive setting the
definition in \cite[Appendix A.1]{DenielouYoshida2013}. %
In rules \rulename{co-proj-end} and \rulename{co-proj-cont} we have
added explicit conditions on participants being present in a global
tree.

\begin{definition}
\label{def:partof-A}
A role $\p$ is said to be \emph{participant of a global tree} $\coG$
(definition \pof{} in \code{Global/Tree.v}), when for $\p$ and $\coG$ the
following inductively defined predicate, $\pof{}\ \_\ \_$, holds:
\[
\small
\begin{array}{l}
  \dfrac{}{\partof \p {\comsgni \p\q \ell {\tS} {\coG}}} \\[3mm]
  \dfrac{}{\partof \q {\comsgni \p\q \ell {\tS} {\coG}}} \\[4mm]
  \dfrac{}{\partof \p {\comsgsi  \p\q {\ell_j} \ell {\tS} {\coG}}} \\[4mm]
  \dfrac{}{\partof \q {\comsgsi  \p\q {\ell_j} \ell {\tS} {\coG}}} \\[6mm]
  \dfrac{\exists i\in I.\ \partof{\pr}{\dcog{\coG_i}}}{\partof \pr {\comsgni  \p\q \ell {\tS} {\coG}}} \\[4mm]
  \dfrac{\exists i\in I.\ \partof{\pr}{\dcog{\coG_i}}}{\partof \pr {\comsgsi  \p\q {\ell_j} \ell {\tS} {\coG}}}
\end{array}
\]
\end{definition}
Definition~\ref{def:partof-A} captures the same concept as $\ \parti\ $
for global types. Such a predicate is inductive in its nature, even on
a coinductive datatype; the intuitive reason for this is that if $\p$
is a participant of $\coG$ it should be found, as a sending or
receiving role, within a \emph{finite}, albeit arbitrary, number of
steps in the branching structure of $\coG$. By factoring in the
predicate \pof , Definition \ref{co-pro-A} ensures (1) that the
projection of a global tree on a participant outside the protocol is
$\cogend$ (rule \rulename{co-proj-end}) and (2) that this discipline
is preserved in the continuations (rule \rulename{co-proj-cont}).

From a formalisation point of view, if we had tried to define the
above tree projection as a corecursive function, instead as
coinductive relation, we would have incurred problems for rules
\rulename{co-proj-send-2} and \rulename{co-proj-cont}: here, the
corecursive call of $\dcol{\colT_1}$ does not appear guarded by any
constructor. Also in rule \rulename{co-proj-cont}, we would need to
provide a coinductive proof for the hypothesis
$\forall i,j\in I. \colT_{\dcol i}=\colT_{\dcol j}$ and this would
lead to further complications. On the other side, working with
projection as a relation is common practice in the literature (see,
e.g.,\citeN[Definition 3.6]{GHILEZAN2019127}) and allowed us to have a
smoother development for trees in Coq.

\begin{example}
  The projection of coinductive trees is slightly more permissive than
  its inductive counterpart, as pointed out in \cite{GHILEZAN2019127},
  Remark 3.14. Let us consider for example:
\begin{equation}\label{eq:proj:one-A}
\G = \msg \p \q \dgt{\{\ \ell_{0}(\tnat).\G_{0},\ell_{1}(\tnat). \G_{1}\}}
\end{equation}
with
$\dgt{\G_{0}} = \grec \gX {\msg \p \pr \dgt{\ell(\tnat).}\gX}$ and
$\dgt{\G_{1}} = \msg \p \pr \dgt{\ell(\tnat).}$\\$\grec \gX {\msg \p \pr \ell(\tnat).\gX}$.
Then we have:
\begin{equation}\label{eq:proj:two-A}
  \begin{array}{c}
    \projt {\dgt{\G_{0}}} \pr = \lrec \lX {\lrcv \p \dlt{\ell(\tnat).\lX} }
    \\ \neq\\
    \lrcv \p \dlt{\ell(\tnat).} \lrec \lX {\lrcv \p \dlt{\ell(\tnat).\lX} } = \projt {\dgt{ \G_{1} }} \pr\  \text{,}
\end{array}
\end{equation}
then no rule from Definition \ref{def:projection-A} applies (in
particular \\ \rulename{proj-cont} does not), and the projection onto
$\pr$ for $\G$ is undefined. On the other end, it is clear that
$\dgt{G_{1}}$ is obtained by $\dgt{G_{0}}$ with ``one step of
unravelling'' or, formally, that the infinite tree associated by $\Re$
to both of
them 
is the same\\
$\dcog{\coG_{01}} = \comsgn \p \pr \dcog{\ell(\tnat).}\comsgn \p \pr
\dcog{\ell(\tnat).\;\dots}$. Indeed we observe that
$\dgt{\G_{1}} = \msg \p \pr \dgt{\ell(\tnat).}\dgt{\G_{0}}$, thus by
rule \rulename{g-unr-rec} of Definition \ref{def:gunroll-A}, we have
that from $\dgt{\G_{1}} \Re\dcog{\coG_{01}}$,
$\dgt{\G_{0}} \Re\dcog{\coG_{01}}$ must hold.
Thus we have:
\begin{equation}\label{eq:proj:three-A}
  \G\Re\coG\quad\text{with}\quad\coG = \comsgn \p \q \dcog{\{\ \ell_{0}(\tnat).\coG_{01},\ell_{1}(\tnat). \coG_{01}\}}
\end{equation}
By choosing
$\dcol{\colT_{01}} = \colrcv \p \dcol{\ell(\tnat).}\colrcv \p
\dcol{\ell(\tnat).\ \dots}$, we can indefinitely apply
\rulename{co-proj-recv-1} to get
$\coproj \pr {\dcog{\coG_{01}}} {\dcol{\colT_{01}}}$ and thus we
finally obtain $\coproj \pr \coG \colT$ by \rulename{co-proj-cont}:
namely while $\G$ does not admit any projection on $\pr$, its
unravelling $\coG$ does.
\end{example}

\begin{figure}

\begin{tikzpicture}
	\node(G0)at (0,0) {$\G$};
	\node(G1)at (3,0) {$\coG$};
	\node(L0)at (0,-1.5) {$\lT$};
	\node(L1)at (3,-1.5) {$\colT$};
	\node(M1)at (1.5,-0.75) {\small{\textcolor{orange}{(M.1)}}};
	\path[commutative diagrams/.cd,every arrow,font=\scriptsize]
	(G0) edge node[above] {$\Re$} (G1)
	(L0) edge node[above] {$\Re$} (L1)
	(G0) edge node[right] {$\upharpoonright$} (L0)
	(G1) edge node[right] {$\upharpoonright^\textsf{c}$} (L1)
	;

\end{tikzpicture}

\end{figure}

The above example shows that, inside a global type, when two branches
of a continuation are obtained by a different number of unravelling
steps of the same recursion type, syntactic projection (Definition
\ref{def:projection-A}) gets stuck. At the same time its coinductive
counterpart (Definition \ref{def:co-projection-A}) handles smoothly this
case, thanks to infinite unravelling that gives such recursion global
types the same representation.

To conclude this subsection, we state our first main result from the
formalisation, namely that unravelling preserves projections. This
completes the first metatheory
square 
\textcolor{orange}{(M.1)} of \theDiagram\ in Figure \ref{fig:dia}.

\begin{restatable}[Unravelling preserves projections]{theorem}{unrproA}\label{thm:unr-pro-A}(Theorem \code{ic\_proj} in \code{Projection/Correctness.v}.)\\ Given a global type $\G$, such that $\guarded\ \G$ and $\closed\ \G$, if
\begin{enumerate*}[label=(\alph*)]
\item there exists a local type $\lT$ such that $\projt{\G} \pr = \lT\ $,
\item there exists a global tree $\coG$ such that $\gunroll \G \coG$ and,
\item there exists a local tree $\colT$ such that $\lunroll \lT \colT$,
\end{enumerate*}
then $\coproj \pr \coG \colT$.
\end{restatable}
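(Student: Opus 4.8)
The target $\coproj{\pr}{\coG}{\colT}$ is a coinductively defined relation (Definition~\ref{def:co-projection}), so the plan is to prove it by coinduction, using the PaCo machinery. I would introduce the candidate relation
\[
R = \bigl\{\,(\coG,\colT) \;\bigm|\; \exists\,\G\,\lT.\ \guarded\ \G \wedge \closed\ \G \wedge \projt{\G}{\pr}=\lT \wedge \gunroll{\G}{\coG} \wedge \lunroll{\lT}{\colT}\,\bigr\},
\]
which contains the pair of interest by hypotheses (a)--(c). It then suffices to show that $R$ is consistent with the rules of Definition~\ref{def:co-projection}, i.e.\ that it is a post-fixed point of the coinductive projection functor; the conclusion then follows from the coinduction principle.

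\textbf{Reducing to head normal form.} Fix $(\coG,\colT)\in R$ with witnesses $\G,\lT$. Since $\G$ is closed it is not a free variable, so its head is $\gend$, a message $\msgi{\p}{\q}{\ell}{\tS}{\G}$, or a recursion $\grec{\gX}{\G}$. The recursion case is the delicate one: I would first establish a \emph{commutation lemma} saying that inductive projection commutes with the unfolding substitution,
\[
\projt{\dgt{\G \{ \gmu\gX.\G / \gX \}}}{\pr} \;=\; (\projt{\G}{\pr})\bigl\{\lrec{\lX}{(\projt{\G}{\pr})} / \lX \bigr\},
\]
so that replacing the witness $\G$ by its unfolding preserves all three relations: the trees $\coG$ and $\colT$ are unchanged (inversion of \rulename{g-unr-rec} and \rulename{l-unr-rec}), and the projection equality is updated by the lemma. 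Guardedness (the \code{not\_pure\_rec} condition) guarantees that only finitely many such unfoldings are needed before a message or $\gend$ is exposed, so this is an inner well-founded induction nested inside the coinduction.

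\textbf{Case analysis on the exposed head.} If $\G=\gend$, inversion on $\gunroll{}{}$ gives $\coG=\cogend$, and $\projt{\gend}{\pr}=\lend$ forces $\colT=\colend$; then \rulename{co-proj-end} applies since $\neg\partof{\pr}{\cogend}$. If $\G=\msgi{\p}{\q}{\ell}{\tS}{\G}$, inversion on \rulename{g-unr-msg} gives $\coG=\comsgni{\p}{\q}{\ell}{\tS}{\coG}$ with $\gunroll{\G_i}{\coG_i}$. When $\pr=\p$ (resp.\ $\pr=\q$), rule \rulename{proj-send} (resp.\ \rulename{proj-recv}) makes $\lT$ a send (resp.\ receive) type, inversion on \rulename{l-unr-send} (resp.\ \rulename{l-unr-recv}) exposes the matching tree constructor, and \rulename{co-proj-send-1} (resp.\ \rulename{co-proj-recv-1}) applies, with each continuation pair $(\coG_i,\colT_i)$ landing back in $R$ (witnessed by $\G_i$ and $\projt{\G_i}{\pr}$) and discharged by the coinductive hypothesis. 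When $\pr\neq\p,\q$, projection uses \rulename{proj-cont}, so all $\projt{\G_i}{\pr}$ are syntactically equal and hence all $\colT_i$ coincide. Here I would invoke a supporting lemma stating that $\pr$ participates in $\coG_i$ exactly when its projection is not $\colend$; this yields a clean dichotomy, exactly matching the two available rules: either all $\colT_i=\colend$ (then $\neg\partof{\pr}{\coG}$ and \rulename{co-proj-end} applies) or all $\colT_i\neq\colend$ (then $\partof{\pr}{\coG_i}$ for every $i$, the $\colT_i$ are equal, and \rulename{co-proj-cont} applies), with continuations again falling into $R$.

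\textbf{Main obstacle.} The crux is the interaction of recursion with projection: proving the commutation lemma above and combining it with guardedness to justify that the finite unfolding to head normal form terminates and preserves membership in $R$, all while nesting this inner induction correctly inside the PaCo coinduction. The participation-versus-$\colend$ dichotomy used in the \rulename{proj-cont} case is the second delicate ingredient, since it is precisely what makes the pair of rules \rulename{co-proj-end} and \rulename{co-proj-cont} jointly exhaustive and forces the right one to fire.
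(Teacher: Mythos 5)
Your proposal is correct and follows essentially the same route as the paper's proof: PaCo coinduction on the coinductive projection relation, with recursion handled by finitely unfolding $\grec{\gX}{\G}$ (justified by guardedness and closedness) until a message or $\gend$ head is exposed, and with the key supporting fact that projection commutes with unfolding so that the local tree $\colT$ is unchanged (the paper packages this as lemma \code{LUnroll\_ind}, your commutation lemma is an equivalent formulation). Your explicit treatment of the \rulename{proj-cont} case via the participation-versus-$\colend$ dichotomy matches why Definition~\ref{def:co-projection} carries the \pof{} side conditions.
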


\begin{proof}[Proof Outline.]
The full proof is found in \code{Projection/Correctness.v} of our Coq formalisation. As an outline, before performing coinduction on the definition of the coinductive projection in $\coproj \pr \coG \lT$, we rule out the case in which $\G$ has the shape $\dgt{\grec \gX \G'}$. Formally this first step goes as follows.
\begin{itemize}
\item Since $\G=\dgt{\grec \gX \G'}$ and $\dgt{\G'\{\gX / \grec \gX \G'\}}$ have the same unravelling $\coG$ (Definition \ref{def:gunroll-A}), we can perform such operation (\emph{finite unravelling}, definition) on $\dgt{\grec \gX {\G'} }$ until we get $\dgt{\G_n}$ that is either a message-type or an end-type (we have as an hypothesis that $\G$ is closed and closure is preserved by finite unravelling).
\item We have proved that if $\projt \pr \G = \lT$, $\lunroll \lT \colT$ and $\projt \pr {\dgt{\G_n}} = \dlt{\lT_n}$ then $\lunroll {\dlt{\lT_n}} {\colT}$ (lemma \code{LUnroll\_ind} in \code{Local/Unravel.v}).
\end{itemize}
Thus, proving the theorem for every non-recursion global type $\dgt{\G_n}$, gives us the theorem for every $\G$ global type. We therefore assume that $\G$ is not a recursion-type and proceed by coinduction on $\coproj \pr \coG \lT$. We use the features of the PaCo \cite{paco} to modularise the proof: in a separate Coq lemma we can assume the coinductive hypothesis in the context and prove the statement with guardedness guaranteed by the PaCo features (lemma \code{project\_nonrec} in \code{Projection/Correctness.v}).
\end{proof}

\subsection{Projection Environments for Asynchronous Communication}
\label{subsec:async_projection-A}

In this subsection, we introduce key concepts for building an
asynchronous operational semantics for multiparty session types. We
define our semantics following \cite{DenielouYoshida2013}, where a
precise correspondence is drawn between communicating finite-state
automata and multiparty session types. We do not formalise an explicit
syntax for automata, but develop labelled transition systems for
global and local trees with automata in mind. We rely on \emph{queue
  environments} as communication buffers, shared between pairs of
local trees, which allow for asynchronicity of the execution, while
guaranteeing the disciplined behaviour of participants. Let us start
with a paradigmatic example: a simple message exchange between two
participants.

\begin{example}[Local trees for a simple message exchange]\label{ex:queue-A}
  Below we informally 
  use the notation $\colT_{\dcol{1}}\step\colT_{\dcol{2}}$ to indicate
  one semantic step between local trees. $\p$ sends a message to $\q$
  with label $\dcol{\ell}$ and payload of sort $\tS$ and continues on
  $\colT$, and dually $\q$ receives from $\p$ the message, with same
  label and payload, and then continues on $\dcol{\colT'}$:
  \[
    \colsnd \q \dcol{\ell(\tS).\colT} \step \colT
    \quad \mbox{and} \quad
    \colrcv \p \dcol{\ell(\tS).\colT'} \step \dcol{\colT'}
  \]
  For $\q$ to receive the message, it is necessary that $\p$ has first
  sent the message. To model this asynchronous behaviour, we use FIFO
  queues: in the designated queue $Q(\p,\q)$ (empty at first) we
  enqueue the message sent from $\p$ and we store it, until the
  message is received by $\q$ and dequeued.
  \begin{figure*}
  \[
  \hspace{32mm}
  \begin{tikzcd}[ampersand replacement=\&]
      \colsnd \q \dcol{\ell(\tS).} \colT \arrow[rr, "\text{step}"]\& \&\colT\&\&\\[-5mm]
      Q(\p,\q)=\code{empty}\arrow[rr,"\text{enqueue}"] \& \& Q(\p,\q)=[(\ell , \tS)]\arrow[rr,"\text{dequeue}"] \& \& Q(\p,\q)=\code{empty}\\[-5mm]
      \& \& \colrcv \p \dcol{\ell(\tS).} \colT{\color{dkolive}'}
      \arrow[rr, "\text{step}"]\& \& \colT{\color{dkolive}'}
  \end{tikzcd}
  \]
  \caption{Simple message exchange with a FIFO queue as buffer.} \label{fig-q-message-A}
  \end{figure*}
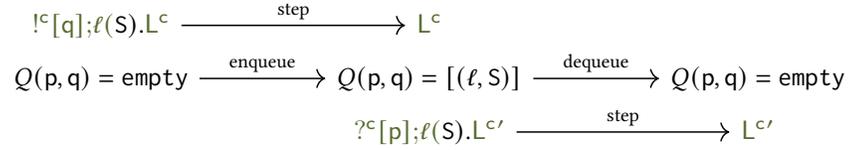
  Figure \ref{fig-q-message-A} summarises the intuition behind the semantics of a
  simple message exchange, starting from an empty queue: the message
  is received immediately after it has been sent. Such a FIFO queue
  allows for storing more than one message sent from $\p$ to $\q$,
  which $\q$ will receive in the same order they have been sent,
  according to the first-in-first-out discipline.
\end{example}

Considering our global protocols, we need one queue for each ordered
pair of participants $(\p,\q)$ to store the messages sent from $\p$ to
$\q$. We formally collect such queues in \emph{queue environments}.
\begin{definition}[Queue environments]
\label{def:qenv-A}
We call \emph{queue environment} any finitely supported function that
maps a pair of participants into a finite sequence (\emph{queue}) of
pairs of labels and sorts.
\end{definition}
In Coq we write the above type as
$\qenv = \{\fmap\ \ \ \role * \role \rightarrow \seq\; (\lbl *
\mty)\}$
(Notation \qenv~in \code{Local/Semantics.v}), where we have used
support from the Mathematical Components libraries
\cite{Gonthier:2010} for datatypes as finite function (\fmap) and
lists (or \emph{sequences}, \seq). Finite maps are formalised as
partial functions\footnote{More detail can be found at the
  Mathematical Components web page
  \url{https://math-comp.github.io/}.} with additional structure for
their finite domain. We use the Coq constructor $\code{None}$ for the
default return value of a partial function applied to an input value
outside its domain.


On queue environments we have defined the operation of enqueuing
\enq\ and dequeuing \deq\ as:
\[
\begin{array}{rcl}
\enq\ Q\ (\p,\q)\ (\ell,\tS) &=& Q[(\p,\q)\mapsfrom Q(\p,\q)@(\ell,\tS)]\\[1mm]
\deq\ Q\ (\p,\q) &=& \code{if}\ Q(\p,\q)=(\ell,\tS)\#s\\
                  && \code{then}\ ((\ell,\tS),Q[(\p,\q)\mapsfrom s])\\
		  && \code{else}\ \code{None}
\end{array}
\]
As for notation, we use $\#$ as the ``cons'' constructor for lists and
$@$ as the ``append'' operation; $f[x\mapsfrom y]$ denotes the
updating of a function $f$ in $x$ with $y$, namely
$f[x\mapsfrom y]\ x'=f\ x'$ for all $x\neq x'$ and
$f[x\mapsfrom y]\ x=y$, and $\code{None}$ is the default value for
partial functions provided by Coq. In case the sequence $Q(\p,\q)$ is
empty \deq~will not perform any operation on it, but return
$\code{None}$; in case the sequence is not empty it will return both
its head and its tail (as a pair). We denote the empty queue
environment by $\epsilon$, namely $\epsilon\ (\p,\q)=\code{None}$ for
all $(\p,\q)$.

Queue environments are used to regulate the asynchronous message
passing among participants for the whole protocol. We adapt the
projection of global types onto queue environment from \citep[Appendix
A.1]{DenielouYoshida2013}, to our coinductive setting.

\begin{definition}[Queue projection](Definition \code{qProject} in
  \code{Projection/QProject.v})
  \label{qpro-A}%
  \label{def:q-projection-A}%
  The \emph{projection on queue environments
  of a global tree} (\emph{queue projection} for short) %
  is the relation $\qproj {\_}  {\_}\ :\ \rel{\cogty}{\qenv} $ coinductively specified by the following clauses:
  \[
  \begin{array}{ll}
    \rulename{q-proj-end} &  \rulename{q-proj-send}\\
    \newDfrac{}{\qproj \cogend  {\epsilon}} \quad \quad \quad&
    \newDfrac{\forall i\in I. \qproj {\coG_{\dcog i}} {Q}\ \ \ \ \ \ \ \ Q(\p,\q)=\code{None}
    }{\qproj {\comsgni \p\q \ell {\tS} {\coG}} {Q}}\\[4mm]
    \rulename{q-proj-recv} & \\
    \multicolumn{2}{l}{
      \newDfrac{\qproj {\coG_{\dcog{j}}} {Q}\ \ \ \ \ \ \ \ \deq\ Q'(\p,\q)= ((\ell_j, \tS_j),Q)}
               {\qproj {\comsgsi \p\q {\ell_j} \ell {\tS} {\coG}} {Q'}}
    }
  \end{array}
  \]
\end{definition}
The projection of the tree $\cogend$ is $\epsilon$ as expected: once
the computation is terminated every queue is empty
(\rulename{q-proj-end}). Rule \rulename{q-proj-send} states that a
message $\comsgni \p\q \ell {\tS} {\coG}$ has $Q$ as its projections,
if $Q(\p,\q)$ is empty (no message has been yet sent between $\p$ and
$\q$) and $Q$ is also projection for each continuation
$\coG_{\dcog{i}}$ (where the message has been already sent and
received). Ultimately \rulename{q-proj-recv} states that a message
$\comsgsi \p\q {\ell_j} \ell {\tS} {\coG}$ has $Q'$ as its
projections, if $\deq\ Q' (\p,\q) = ((\ell_j,\tS_j),Q)$ and if $Q$ is
projection for each continuation $\coG_{\dcog{i}}$.

\begin{remark} \label{remark:qproj-on-prefix-A} For the sake of on-paper
  presentation, the above definition is presented as a coinductive
  predicate, dealing with coinductive objects (trees). Albeit this
  definition carries the correct concept, formally it is not accurate:
  in our formalisation, \code{qProject} is defined in Coq not as a
  codatatype, but as a datatype, \emph{inductively} on prefixes for
  global trees (see Remark \ref{remark:prefixes-A}).
\end{remark}

Queue projection has allowed us to associate to a global tree, in one
shot, all the queues involved in the protocol collected in a queue
environment. Along the same lines, we will consider all the local
types of the protocol at once, by defining the type of \emph{local
  environments}.
\begin{definition}[Local environments]
\label{def:env-A}
We call \emph{local environment}, or simply \emph{environment}, any
finitely supported function $\dcol{E}$ that maps participants into
local types.
\end{definition}
In Coq we write the above type as
$\env = \{\fmap\ \ \ \role \rightarrow \rlty\}$ (Notation \env~in
\code{Local/Semantics.v}).

As anticipated, we are interested in those environments that are
defined on the participants of a protocol (global tree $\coG$) and
that map each participant $\p$ to the projection of $\coG$ onto such
$\p$. 
\begin{definition}[Environment projection](Definition\\ \code{eProject} in \code{Projection/CProject.v}.)
\label{def:epro-A}
We say that $\dcol{E}$ is an environment projection for $\coG$,
notation $\coG\upharpoonright\dcol{E}$, if it holds that
\ $\forall \p.\ \coproj \p  \coG {(\dcol{E\ \p})}$.
\end{definition}
%

In the next sections we establish a relation between the semantics for
global and local types. In the local case, we define the semantics on
local environments together with queue environments. In the statements
of our soundness and completeness results we therefore consider the
projection of a global tree both on local environments and on queue
environments, together in one shot.
\begin{definition}[One-shot projection](Definition\\ \code{Projection} in \code{Projection.v})
\label{def:ospro-A}
We say that the pair of a local environment and of a queue environment $(\dcol{E},Q)$ is a (one-shot) projection for the global tree $\coG$, notation $\coG\osproj(\dcol{E},Q)$ if it holds that:
$\coG\upharpoonright\dcol{E} \quad \text{and} \quad \qproj \coG Q$.
\end{definition}
\begin{example}
  Let us consider the global tree:
  $\coG = \comsgs \p \q \ell$\\ $\dcog{\ell(\tS).}\comsgn \q \p
  \dcog{\ell(\tS).}\comsgn \q \p \allowbreak\dcog{\ell(\tS).\dots}\ \
  $. Participant $\p$ has sent a message to $\q$, $\q$ will receive it
  next (but has not yet) and then the protocol continues indefinitely
  with $\q$ sending a message to $\p$ after the other. We define
  $\dcol{E}$---with support $\{\p,\q\}$, since these are the only two
  participants involved---such that:
  $\ \ \dcol{E\ \p}=\colrcv \q \dcol{\ell(\tS).}\colrcv \q
  \dcol{\ell(\tS).\dots}\ \ $ and
  $\ \ \dcol{E\ \q}=\colrcv \p \dcol{\ell(\tS).}\colsnd \p
  \dcol{\ell(\tS).}\colsnd \p \dcol{\ell(\tS).\dots}\ \ $. We then
  define $Q$---with support a subset of $\{(\p,\q),(\q,\p)\}$, since
  messages are sent only from $\p$ to $\q$ or from $\q$ to $\p$---such
  that: $\ \ Q(\p,\q)=[(\ell,\tS)]\ \ $ and
  $\ \ Q(\q,\p)=\code{None}$. Following the definitions in this
  section it is easy to verify that $\coG\osproj(\dcol{E},Q)$; observe
  that the only ``message'' enqueued in $Q$ is $(\ell,\tS)$, since
  this is the only one sent, but not yet received (at this stage of
  the execution).
\end{example}

\subsection{Labelled Transition Relations for Tree Types}
\label{subsec:step-A}

We define \emph{trace semantics} both for types and for processes. At
the core of trace semantics that we define for session types, lies a
labelled transition system (LTS) defined on trees, with regard to
\emph{actions}. In this section we present the basic definitions and
results---up to soundness and completeness of the local reduction with
respect to the global one---, following the structure of our Coq
formalisation.

The basic \emph{actions} (datatype \code{act} in \code{Common/Actions.v}) of
our asynchronous communication are objects, ranged over by $a$, of the
shape either:
\begin{itemize}
\item $!\p\q(\ell,\tS)$: send $!$ action, from participant $\p$ to
  participant $\q$, of label $\ell$ and payload type $\tS$, or
\item $?\q\p(\ell,\tS)$: receive $?$ action, from participant $\p$ at
  participant $\q$, of label $\ell$ and payload type $\tS$.
\end{itemize}
We define the \emph{subject} of an action $a$ (definition
\code{subject} in \code{Common/Actions.v}), $\subject a$, as $\p$ if
$a=!\p\q(\ell ,\tS)$ and as $\q$ if $a=?\q\p(\ell,\tS)$.%
\footnote{The representation of actions is directly taken from
  \cite{DenielouYoshida2013}, however we have swapped the order of
  $\p$ and $\q$ in the receive action, so that the subject of an
  action always occurs in first position.}

Given an action, our types (represented as trees) can perform a
reduction step.

\begin{definition}[LTS for global trees (\code{step} in \code{Global/Semantics.v})]\label{def:gstep-A}
  \ \\
  The \emph{labelled transition relation for global trees}
  (\emph{global reduction} or \emph{global step} for short) %
  is, for each action $a$, the relation
  $\_\ \stepa{a}\ \_\ :\ \rel{\cogty}{\cogty} $ inductively specified
  by the following clauses:
  \[
\begin{array}{l}
  \rulename{g-step-send}\\
  \dfrac{a=!\p\q(\ell_j,\tS_j)}{\comsgni \p \q \ell \tS \coG \stepa{a} \comsgsi \p \q {\ell_j} \ell \tS \coG}\\[4mm]
  \rulename{g-step-recv}\\
  \dfrac{a=?\q\p(\ell_j,\tS_j)}{  \comsgsi \p \q {\ell_j} \ell \tS \coG \stepa{a} \coG_{\dcog{j}} }\\[4mm]

\rulename{g-step-str1} \\
\dfrac{\subject a \neq \p \quad \subject a \neq \q \quad \forall i\in I. \coG_{\dcog{i}}\stepa{a}\dcog{\coG'_i}}{ \comsgni \p \q \ell \tS \coG  \stepa{a} \comsgni \p \q \ell \tS {\coG'} }\\[4mm]
 \rulename{g-step-str2}\\
\dfrac{\subject a \neq \q \quad \coG_{\dcog{j}}\stepa{a}\dcog{\coG'_j}\quad \forall i\in I\backslash\{ j \}. \coG_{\dcog{i}}=\dcog{\coG'_i}}{ \comsgsi \p \q {\ell_j} \ell \tS \coG  \stepa{a} \comsgsi \p \q {\ell_j} \ell \tS {\coG'} }

\end{array}
\]
\end{definition}

The step relation describes a labelled transition system for global
trees with the following intuition:
\begin{description}
\item[\rulename{g-step-send}] with the sending action
  $!\p\q(\ell_j,\tS_j)$, the global tree
  $\comsgni \p \q \ell \tS \coG$ can perform a step into
  $\comsgsi \p \q {\ell_j} \ell \tS \coG$: this is the \emph{sending
    base case}, where a message with label $\ell_j$ and payload type
  $\tS_j$ is sent by $\p$, but not yet received by $\q$;
\item[\rulename{g-step-recv}] with the receiving action
  $?\q\p(\ell_j,\tS_j)$, the global tree
  $\comsgsi \p \q {\ell_j} \ell \tS \coG$ can perform a step into
  $\coG_{\dcog{j}}$: this is the \emph{receiving base case}, where a
  message with label $\ell_j$ and payload type $\tS_j$, that was
  previously sent by $\p$, is now received by $\q$;
\item[\rulename{g-step-str1}] with an action $a$ a step is allowed
  to be performed under a sending constructor
  $\comsgn \p \q \dcog{\dots}$ each time that the subject of that
  action is different from $\p$ and from $\q$ and, coinductively, each
  continuation steps accordingly, namely
  $\forall i\in I. \coG_{\dcog{i}}\stepa{a}\dcog{\coG'_i}$;
\item[\rulename{g-step-str2}] with an action $a$ a step is allowed
  to be performed under a receiving constructor
  $\comsgs \p \q {\ell_j} \dcog{\dots}$ each time that the subject of
  that action is different from $\q$ ($\p$ has already sent the
  message and the label $\ell_j$ has already been selected), the
  continuation corresponding to the label $\ell_j$ steps accordingly,
  namely $\coG_{\dcog{j}}\stepa{a}\dcog{\coG'_j}$, and each other
  continuation stays the same, namely
  $\forall i\in I\backslash\{ j \}. \coG_{\dcog{i}}=\dcog{\coG'_i}$.
\end{description}
\begin{remark}\label{remark:nondet-A}
The semantics allows for some degree of non-determinism. For instance,
$\comsgni \p \q \ell \tS \coG$ could perform a step according to both
rules \rulename{g-step-send} and \rulename{g-step-str1} (depending
on the subject of the action).
\end{remark}


Below we formalise the intuition from Example \ref{ex:queue-A}, and we
define a transition system for environments of local trees, together
with environments of queues.
\begin{definition}[LTS for environments (\code{l\_step} in \code{Local/Semantics.v})]\label{lstep-A}\ \\
  The \emph{labelled transition relation for environments}
  (\emph{local reduction} or \emph{local step} for short) is, for each
  action $a$, the relation
  $\_\ \stepa{a}\ \_\ :\ \rel{(\env * \qenv)}{(\env * \qenv)} $
  inductively specified by the following clauses:
\[
\begin{array}{l}
  \rulename{l-step-send}\\
  \dfrac{a=!\p\q(\ell_j,\tS_j)\quad \dcol{E\ \p} = \colsend \q \ell \tS \colT}
        {(\dcol{E},Q) \stepa{a} (\dcol{E[\p\mapsfrom \colT_j]},\enq\ Q\ (\p,\q)\ (\ell_j,\tS_j))}\\[5mm]
  \rulename{l-step-recv} \qquad a=?\q\p(\ell_j,\tS_j)\\
  \dfrac{\dcol{E\ \q} = \colrecv \p \ell \tS \colT \quad Q(\p,\q)=(\ell_j,\tS_j)\#s}
        {(\dcol{E},Q) \stepa{a} (\dcol{E[\q\mapsfrom \colT_j]},Q[(\p,\q)\mapsfrom s])}
\end{array}
\]
Notice that, if the third condition in the premise of\\
\rulename{l-step-recv} is satisfied, in its conclusion
$Q[(\p,\q)\mapsfrom s] = \pi_2\ (\deq\ Q\ (\p,\q))$ (where $\pi_2$ is
the projection on the second component of a pair).
\end{definition}

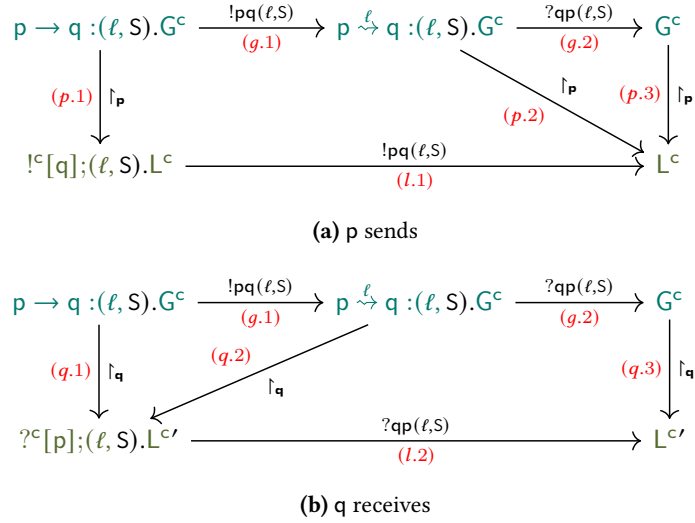
\begin{figure}
\begin{subfigure}{\columnwidth}
  \[
  \hspace{-4mm}
\begin{tikzcd}
\comsgn \p \q \dcog{(\ell, \tS).} \coG \arrow[rr, "{!\p\q (\ell,\tS)}","(g.1)"' red]\arrow[dd,"\upharpoonright_{\p}","(p.1)"' red]
& &
\comsgs \p \q {\ell}  \dcog{(\ell, \tS).} \coG\arrow[rr, "{?\q\p (\ell,\tS)}","(g.2)"' red]\arrow[rrdd, "\upharpoonright_{\p}","(p.2)"' red]
& &
\coG\arrow[dd, "\upharpoonright_{\p}","(p.3)"' red] \\
& & & &\\
\colsnd \q \dcol{(\ell, \tS).} \colT \arrow[rrrr, "{!\p\q (\ell,\tS)}","(l.1)"' red]& & & &\colT
\end{tikzcd}
\]
\subcaption{$\p$ sends}\label{subfig:send-A}
\end{subfigure}\newline\vspace{2mm}
\begin{subfigure}{\columnwidth}
  \[
  \hspace{-4mm}
\begin{tikzcd}
\comsgn \p \q \dcog{(\ell, \tS).} \coG\arrow[rr, "{!\p\q(\ell,\tS)}","(g.1)"' red]\arrow[dd,"\upharpoonright_{\q}","(q.1)"' red]
& &
\comsgs \p\q \ell \dcog{(\ell, \tS).} \coG\arrow[rr, "{?\q\p(\ell,\tS)}","(g.2)"' red]\arrow[lldd, "\upharpoonright_{\q}","(q.2)"' red]
& &
\coG\arrow[dd, "\upharpoonright_{\q}","(q.3)"' red] \\
& & & &\\
\colrcv \p \dcol{(\ell, \tS).} \dcol{\colT'} \arrow[rrrr, "{?\q\p(\ell,\tS)}","(l.2)"' red]& & & &\dcol{\colT'}
\end{tikzcd}
\]
\subcaption{$\q$ receives}\label{subfig:recv-A}
\end{subfigure}
\vspace{-2mm}
\caption{Basic send/receive steps for global and local trees}\label{fig:steps-A}
\vspace{-3mm}
\end{figure}

\begin{example}[Basic send/receive steps for global and local trees]
  Figure \ref{subfig:send-A} shows the transitions for a global tree,
  regulating the sending of a message from $\p$ to $\q$, and
  simultaneously the local transition for its projection on $\p$. The
  asynchronicity of our system is witnessed by the two different
  steps: $\color{red}{(g.1)}$, for the sending action
  $!\p\q(\ell,\tS)$, and $\color{red}{(g.2)}$, for the receiving one
  $?\q\p(\ell,\tS)$. Projecting
  $\comsgn \p \q \dcog{(\ell, \tS).} \coG$ on $\p$ (arrow
  $\color{red}{(g.1)}$) gives us a local tree that performs a sending
  step $\color{red}{(l.1)}$ corresponding to $\color{red}{(g.1)}$, and
  projection is preserved (arrow $\color{red}{(p.2)}$). However this
  does not happen for the receiving step $\color{red}{(g.2)}$: here
  the projections on $\p$ of
  $\comsgs \p\q \ell \dcog{(\ell, \tS).} \coG$ along
  $\color{red}{(p.2)}$ and of $\coG$ along $\color{red}{(p.3)}$ are
  the same. The situation is dual if we consider the projection on the
  receiving participant $\q$, Figure \ref{subfig:recv}. Here the
  projections along $\color{red}{(q.1)}$ and $\color{red}{(q.2)}$,
  corresponding to the global tree performing a sending action, result
  in the same local tree. We have instead a local step
  $\color{red}{(l.2)}$ preserving the local projections on $\q$ along
  $\color{red}{(q.2)}$ and $\color{red}{(q.3)}$ for the receiving
  action along $\color{red}{(g.2)}$.
\end{example}

Figure \ref{fig:steps-A} confirms our intuition: when the global tree
performs one step, \emph{there is one local tree}, projection of the
global tree on a participant, such that it performs a corresponding
step. We have indeed defined semantics for collections of local trees,
as opposed to single local trees. The formal relation of the
small-step reductions with respect to projection is established with
soundness and completeness results.

\begin{restatable}[Step Soundness]{theorem}{stepsoundA}(Theorem \code{Project\_step} in \code{TraceEquiv.v})\label{thm:step-sound-A}\ \\
  If $\coG \stepa{a} \dcog{\coG'}$ and $\coG \osproj (\dcol{E},Q)$,
  there exist $\dcol{E'}$ and $Q'$ such that
  $\dcog{\coG'} \osproj (\dcol{E'},Q')$ and
  $(\dcol{E},Q) \stepa{a} (\dcol{E'},Q')$.
\end{restatable}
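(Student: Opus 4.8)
The plan is to proceed by induction on the derivation of the global step $\coG\stepa{a}\coG'$, which is an inductively defined relation with two base rules (\rulename{g-step-send}, \rulename{g-step-recv}) and two structural rules (\rulename{g-step-str1}, \rulename{g-step-str2}). Throughout, I would unfold the one-shot projection $\coG\osproj(\dcol{E},Q)$ into its two components, the environment projection $\forall\pr.\ \coproj\pr\coG{(\dcol{E}\ \pr)}$ and the queue projection $\qproj\coG Q$, and rely on inversion lemmas for both relations (obtained through PaCo's unfolding) that read off, from the syntactic shape of $\coG$, which projection rule must have applied and what the continuation projections are. I also need a few auxiliary facts: that the subject of a step is always a participant of the source tree, that $\coproj\pr\coG\colend$ holds exactly when $\neg\partof\pr\coG$, and that $\neg\partof\pr\coG$ is preserved by stepping; these discharge the \rulename{co-proj-end} side-conditions uniformly.

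For the base case \rulename{g-step-send}, where $\coG=\comsgni\p\q\ell\tS\coG$ and $\coG'=\comsgsi\p\q{\ell_j}\ell\tS\coG$, inverting the projection onto $\p$ forces \rulename{co-proj-send-1}, so $\dcol{E}\ \p=\colsend\q\ell\tS\colT$; I then take $\dcol{E'}=\dcol{E}[\p\mapsfrom\colT_j]$ and $Q'=\enq\ Q\ (\p,\q)\ (\ell_j,\tS_j)$, which yields the local step by \rulename{l-step-send}. The witness $\coG'\osproj(\dcol{E'},Q')$ is checked component-wise: for $\p$ use \rulename{co-proj-send-2}; for $\q$ re-derive \rulename{co-proj-recv-2} from \rulename{co-proj-recv-1}; for a third party, \rulename{co-proj-cont} turns into \rulename{co-proj-send-2} (using that all its continuation projections coincide, so the selected branch $\colT_j$ equals the common value) while \rulename{co-proj-end} is preserved; the queue side uses that \rulename{q-proj-send} forced $Q(\p,\q)=\code{None}$, so the single enqueued element is exactly what \rulename{q-proj-recv} dequeues. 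The case \rulename{g-step-recv} is dual: inverting the projection onto $\q$ forces \rulename{co-proj-recv-2} and inverting $\qproj\coG Q$ forces \rulename{q-proj-recv}, supplying both the receive type $\colrecv\p\ell\tS\colT$ and the queue head $(\ell_j,\tS_j)\#s$ needed to fire \rulename{l-step-recv}, while the projection of the continuation is read directly off the premises.

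For the structural cases I would apply the induction hypothesis to the stepping continuation(s). In \rulename{g-step-str1} the subject $\pr_0=\subject a$ satisfies $\pr_0\neq\p,\q$, and \rulename{q-proj-send} hands the same queue $Q$ to every branch, so I build, for each $i$, an environment $\dcol{E}_i$ that agrees with $\dcol{E}$ on every participant except $\p$ and $\q$ (whose entries come from the branch-wise premises of \rulename{co-proj-send-1}/\rulename{co-proj-recv-1}), obtain $\coG_i\osproj(\dcol{E}_i,Q)$, and invoke the IH. Because the step only rewrites $\pr_0$'s entry, which is a third party and hence shared across all $\dcol{E}_i$, the resulting local steps are uniform: they rewrite $\pr_0$ to a single tree and produce a single $Q'$, letting me assemble $\dcol{E'}=\dcol{E}[\pr_0\mapsfrom\colend]$-or-otherwise and reconstruct $\coG'$'s projection (\rulename{co-proj-cont} for $\pr_0$, unchanged projections for $\p,\q$, untouched $Q(\p,\q)=\code{None}$). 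The case \rulename{g-step-str2} is analogous but threads the dequeue: I feed the branch-$j$ IH the queue with $(\p,\q)$'s head stripped and must reinstate that head when rebuilding $\qproj{\coG'}{Q'}$ via \rulename{q-proj-recv}.

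The main obstacle is exactly this reassembly in the structural cases. Two points are delicate. First, I must verify that the subject's local type and the affected queue evolve identically in every branch, so that the single environment update and single $Q'$ produced by the local LTS are well-defined and satisfy \rulename{co-proj-cont}'s requirement that all continuations share a projection; this in turn needs the participation conditions of \rulename{co-proj-cont} to be re-established after the step, which is where the characterisation of end-projection and the preservation of $\neg\partof\pr\coG$ come in (branching on whether the common continuation tree is $\colend$). Second, the FIFO bookkeeping in \rulename{g-step-str2} when $\pr_0=\p$ sends again to $\q$ is subtle: the freshly enqueued message must land at the tail while the in-transit $(\ell_j,\tS_j)$ stays at the head, so that the outer \rulename{q-proj-recv} still dequeues the correct element. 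Making these two invariants line up with the coinductive projection, within PaCo's guardedness discipline, is where I expect the bulk of the effort to lie.
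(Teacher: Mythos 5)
Your proposal is correct and follows essentially the same route as the paper: induction on the global step relation, explicit construction of the updated pair $(\dcol{E'},Q')$ (which is exactly what the paper's \code{run\_step} function computes), branch-wise application of the induction hypothesis in the structural cases, and reassembly of the projections using the same compatibility/synchronisation facts you flag as delicate. The only difference is organisational: the paper factors the witness construction into a standalone function and discharges the two obligations ($\coG'\osproj(\dcol{E'},Q')$ and the local step, via a \code{runnable} predicate) as separate inductions, whereas you interleave them case by case.
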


\begin{proof}[Proof Outline.] The proof follows the intuition displayed by Figure \ref{fig:steps-A}. We identify three major proof steps:
\begin{enumerate}
\item we explicitly build the pair $(\dcol{E'},Q')$ from $(\dcol{E},Q)$;
\item we prove  $\dcog{\coG'} \osproj (\dcol{E'},Q')$;
\item we prove $(\dcol{E},Q) \stepa{a} (\dcol{E'},Q')$.
\end{enumerate}

$(1)$ The pair $(\dcol{E'},Q')$ is the result of the function \code{run\_step} in \code{Local/Semantics.v}, applied to $a$ and $(\dcol{E},Q)$. It is defined as follows:
\begin{itemize}
\item if $a=!\p\q(\ell_{j},\tS_{j})$ and $\dcol{E\ \p}=\colsend \q \ell \tS \colT$, then $\code{run\_step}\ a\ (\dcol{E},Q)=(\dcol{E'},Q')$, where $E'=E[\p\mapsfrom\dcol{\colT_{j}}]$ and $Q'=\enq\ Q\ (\p,\q)\ (\ell_{j},\tS_{j})$;
\item if $a=?\q\p(\ell_{j},\tS_{j})$,
 $\dcol{E\ \p}=\colrecv \p \ell \tS \colT$
 and $Q(\p,\q)=(\ell_{j},\tS_{j})\#s$,
 then $\code{run\_step}\ a\ (\dcol{E},Q)=(\dcol{E'},Q')$, where $E'=E[\q\mapsfrom\dcol{\colT_{j}}]$ and $Q'=\pi_2 (\deq\ Q\ (\p,\q))$;
\item if none of the above $\code{run\_step}\ a\ (\dcol{E},Q)=(\dcol{E},Q)$ (this is just intended as a default output to formally define the function in Coq).
\end{itemize}

Note that we have built $(\dcol{E'},Q')$ according to the effect that we expect that the one-step local reduction has on $(\dcol{E},Q)$.

$(2)$ In order to prove that our candidate $(\dcol{E'},Q')$ is indeed projection for $G'$. In the formalisation we have outsourced this to the lemma \code{runstep\_proj} in \code{TraceEquiv.v}. The proof proceed by induction on the step relation (Definition \ref{def:gstep}) in hypothesis $\coG \stepa{a} \dcog{\coG'}$. The base cases, corresponding to rules \rulename{g-step-send} and \rulename{g-step-rcv}, are handled by the two following lemmas (both in \code{TraceEquiv.v}):
\begin{itemize}
\item \code{Projection\_send}: if $\comsgni \p \q \ell \tS \coG \osproj (\dcol{E},Q)$ then $\comsgsi \p \q {\ell_{j}}  \ell \tS \coG \osproj (\dcol{E'},Q')$, with $(\dcol{E'},Q')$ as defined in $(1)$ with $a=!\p\q(\ell_{j},\tS_{j})$;
\item \code{Projection\_recv}: if $\comsgsi \p \q {\ell_{j}} \ell \tS \coG \osproj (\dcol{E},Q)$ then $\dcog{\coG_{j}}\osproj(\dcol{E'},Q')$, with $(\dcol{E'},Q')$ as defined in $(1)$ with $a=?\q\p(\ell_{j},\tS_{j})$.
\end{itemize}
The two recursive cases, corresponding to rules \rulename{g-step-str1} and \rulename{g-step-str2},
are also handled separately. These cases are less intuitive and more tedious to prove. We omit the details, however the method is the same for both:
\begin{itemize}
\item first we prove that we can describe $(\dcol{E_i},Q_i)$ the one shot projection for each tree continuation of $\comsgni \p \q \ell \tS \coG$ (respectively $\comsgsi \p \q {\ell_{j}} \ell \tS \coG$) in terms of the function \code{run\_step} above---lemmas \code{Proj\_None\_next} and \code{Proj\_Some\_next} in \code{TraceEquiv.v}---;
\item then we use the induction hypothesis to obtain $(\dcol{E_i},Q_i)\stepa{!\p\q(\ell_{j},\tS_{j})}(\dcol{E'_i},Q'_i)$ (respectively $(\dcol{E_i},Q_i)\stepa{?\q\p(\ell_{j},\tS_{j})}(\dcol{E'_i},Q'_i)$) as projections for the continuations in $\comsgni \p \q \ell \tS \coG'$ (respectively $\comsgsi \p \q {\ell_{j}} \ell \tS \coG'$);
\item finally we build back $(\dcol{E'},Q')$ from these, such that $\comsgni \p \q \ell \tS \coG'\osproj(\dcol{E'},Q')$ (respectively $\comsgsi \p \q {\ell_{j}} \ell \tS \coG'\osproj(\dcol{E'},Q')$).
\end{itemize}
The proof above requires ``compatibility'' and ``synchronisation'' lemmas, e.g., to make sure that when we build $(\dcol{E'},Q')$ from the different $(\dcol{E'_i},Q'_i)$, we obtain exactly the result of applying \code{run\_step} to  $(\dcol{E},Q)$.

$(3)$ Lastly we need to prove $(\dcol{E},Q)\stepa{a}(\dcol{E'},Q')$, and proceed by induction on
$\coG \stepa{a} \dcog{\coG'}$. Here we need to show that if $\coG$ performs a step with the action $a$, then its one-shot projection $(\dcol{E},Q)$ \emph{is able to perform a step} with the same action $a$; then we now that this step will be performed $(\dcol{E'},Q')$, which has been defined via \code{run\_step} exactly with this purpose. In \code{Local/Semantics.v} we define a predicate, \code{runnable : env * qenv $\rightarrow$ bool}, that formalises the concept that an environment is able to perform a step, returning \emph{true} or \emph{false} accordingly. Thus we conclude, by proving the next results:
\begin{itemize}
\item if $\coG \stepa{a} \dcog{\coG'}$ and $\coG\osproj(\dcol{E},Q)$ then $\code{runnable}\ (\dcol{E},Q)$ (lemma \code{local\_runnable} in \code{TraceEquiv.v});
\item if $\code{runnable}\ (\dcol{E},Q)$ then $(\dcol{E},Q)\stepa{a}(\dcol{E'},Q')$, where $(\dcol{E'},Q')=\code{run\_step}\ (\dcol{E},Q)$ (lemma \code{run\_step\_sound} in \code{Local/Semantics.v}).
\end{itemize}
\end{proof}

Dually, we prove completeness for step semantics on trees. The
intuition is the same as for soundness, but reading Figures
\ref{subfig:send-A} and \ref{subfig:recv-A} from bottom to top: each time
a local tree in the environment performs a step, the global tree also
performs one.

\begin{restatable}[Step Completeness]{theorem}{stepcomplA}(Theorem\\ \code{Project\_lstep} in \code{TraceEquiv.v})\label{thm:step-compl-A}\ \\
  If $(\dcol{E},Q) \stepa{a} (\dcol{E'},Q')$ and
  $\coG \osproj (\dcol{E},Q)$, there exist $\dcog{\coG'}$ such that
  $\dcog{\coG'} \osproj (\dcol{E'},Q')$ and
  $\coG \stepa{a} \dcog{\coG'}$.
\end{restatable}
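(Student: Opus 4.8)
The plan is to prove completeness \emph{without} re-establishing projection preservation from scratch, by reducing it to Step Soundness (Theorem~\ref{thm:step-sound-A}) together with determinism of the local reduction. Concretely, from the hypotheses $(\dcol{E},Q) \stepa{a} (\dcol{E'},Q')$ and $\coG \osproj (\dcol{E},Q)$ I would first establish the \emph{global-enabling} fact that $\coG$ can perform the same action, i.e.\ $\exists \dcog{\coG'}.\ \coG \stepa{a} \dcog{\coG'}$. Granting this, Step Soundness applied to $\coG \stepa{a} \dcog{\coG'}$ and $\coG \osproj (\dcol{E},Q)$ yields $\dcol{E''}, Q''$ with $\dcog{\coG'} \osproj (\dcol{E''},Q'')$ and $(\dcol{E},Q) \stepa{a} (\dcol{E''},Q'')$. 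Since the local reduction is computed by the function $\code{run\_step}\ a\ (\dcol{E},Q)$ (see the Step Soundness outline), it is deterministic in $(\dcol{E},Q)$ once the action $a$ is fixed; hence $(\dcol{E''},Q'') = (\dcol{E'},Q')$, and the same $\dcog{\coG'}$ witnesses both $\coG \stepa{a} \dcog{\coG'}$ and $\dcog{\coG'} \osproj (\dcol{E'},Q')$, as required.

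Everything thus hinges on the global-enabling lemma, which I would prove by inversion on the local step followed by an induction on the \emph{inductive prefix} of $\coG$ (recall that, by Remarks~\ref{remark:prefixes-A} and~\ref{remark:qproj-on-prefix-A}, both \code{qProject} and \pof{} are defined inductively on prefixes, giving a finite handle on the coinductive tree). Inversion gives two cases. In the receive case $a = ?\q\p(\ell_j,\tS_j)$ the premise of \rulename{l-step-recv} forces $Q(\p,\q)$ to be non-empty with head $(\ell_j,\tS_j)$; tracing this through \rulename{q-proj-recv} in $\qproj{\coG}{Q}$ locates, within finitely many prefix layers, a node $\comsgs \p \q {\ell_j}\dcog{\dots}$ at which \rulename{g-step-recv} fires. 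In the send case $a = !\p\q(\ell_j,\tS_j)$ the premise of \rulename{l-step-send} gives $\dcol{E\ \p} = \colsend \q \ell \tS \colT$, and, since $\coproj \p \coG {(\dcol{E\ \p})}$ holds with $\p$ a participant of $\coG$ (an \emph{inductive} predicate, hence finite depth), the topmost $\p$-involving constructor of $\coG$ must be a pending send $\comsgn \p \q \dcog{\dots}$ at which \rulename{g-step-send} fires. In both cases the located node is in general nested beneath other constructors, so the base step must be lifted to the whole tree by repeated use of \rulename{g-step-str1} and \rulename{g-step-str2}.

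The \textbf{main obstacle} is precisely justifying this descent: to apply \rulename{g-step-str1} under a $\comsgn{\p'}{\q'}$ I need $\subject{a} \notin \{\p',\q'\}$, and to apply \rulename{g-step-str2} under a $\comsgs{\p'}{\q'}{\ell'}$ I need $\subject{a} \neq \q'$. These side conditions say that the subject of $a$ is never prematurely blocked on the path to its node, and I would derive them from the \emph{consistency} carried by $\coG \osproj (\dcol{E},Q)$. The key observation is that the coinductive projection rules pin down the top-level behaviour of the subject: were a conflicting constructor (a pending message towards $\subject{a}$, or a same-receiver message with a different sender) to sit above the target node, rule \rulename{co-proj-recv-2} (resp.\ \rulename{co-proj-recv-1}) would force the projection onto $\subject{a}$ to be a receive, contradicting the send/receive shape of $\dcol{E}\,(\subject{a})$ read off from the local step; FIFO consistency between $Q(\p,\q)$ and the $\comsgs \p \q {}$ nodes then fixes the matching label. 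Packaging these facts into the ``compatibility'' and ``synchronisation'' lemmas already used for Step Soundness (e.g.\ the analogues of \code{Proj\_None\_next} and \code{Proj\_Some\_next}) is the technical heart of the argument, after which the induction closes routinely, mirroring Figures~\ref{subfig:send-A} and~\ref{subfig:recv-A} read from bottom to top.
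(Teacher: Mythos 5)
Your proposal follows essentially the same route as the paper's proof: first a global-enabling lemma ($\coG$ can fire the same action $a$), established by inversion on the local step and induction on the inductive prefix of $\coG$ with inversion on the projection rules (the paper's \code{Project\_gstep}), and then the identification of $(\dcol{E'},Q')$ with the environment produced by soundness via determinism of the local step (the paper's \code{runstep\_proj} combined with \code{lstep\_eq} and \code{run\_step\_sound}/\code{run\_step\_compl}). Your elaboration of how the structural rules' side conditions follow from projection consistency is a faithful unpacking of what the paper compresses into ``case analysis and inversion lemmas about projections.''
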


\begin{proof}[Proof Outline.]
The proof structure is the following (again, the intuition for the base cases is carried by Figure \ref{fig:steps-A}):
\begin{enumerate}
\item we prove that exists $\dcog{\coG'}$ such that $\coG \stepa{a} \dcog{\coG'}$;
\item we prove that for this very $\dcog{\coG'}$ it must hold that $\dcog{\coG'} \osproj (\dcol{E'},Q')$.
\end{enumerate}

$(1)$ is taken care of by lemma \code{Project\_gstep} in \code{TraceEquiv.v}. The proof of such lemma proceed by induction on the prefix of the global tree $\coG$ (see Remark \ref{remark:prefixes-A}). The case for $\pgend$ is outsourced to the lemma \code{CProj\_step} in \code{TraceEquiv.v}. The induction cases in lemma \code{Project\_gstep}, including the one handled by \code{CProj\_step}, are all solved thanks to a---quite tedious---combination of case analysis and inversion lemmas about projections (collected in the lemma \code{Project\_inv} \code{Projection/CProject.v}).

The proof for $(2)$ is more interesting. The goal itself is handled by \code{Project\_gstep\_proj} in \code{TraceEquiv.v}. First we observe that, given $(1)$, namely $\coG \stepa{a} \dcog{\coG'}$, and the hypothesis $\coG \osproj (\dcol{E},Q)$, we know that for $(\dcol{E''},Q'')=\code{run\_step}\ (\dcol{E},Q)$ it holds that $\dcog{\coG'} \osproj (\dcol{E''},Q'')$ (see proof of Theorem \ref{thm:step-sound} and lemma \code{runstep\_proj} in \code{TraceEquiv.v}). Then we observe that, again by case analysis and inversion, we can prove lemma \code{lstep\_eq} in \code{Local/Semantics.v}:
\[
\text{If}\quad(\dcol{E},Q) \stepa{a} (\dcol{E'},Q')
\quad\text{and}\quad(\dcol{E},Q) \stepa{a} (\dcol{E''},Q'')
\text{,}\quad\text{then}\quad (\dcol{E''},Q'')=(\dcol{E''},Q'')\text{.}
\]
We conclude by lemma \code{run\_step\_compl} in \code{Local/Semantics.v}, that combines the above result with lemma \code{run\_step\_sound} in \code{Local/Semantics.v}. Indeed this guarantees that the hypothesis $\quad(\dcol{E},Q) \stepa{a} (\dcol{E''},Q'')$ in \code{lstep\_eq} above is satisfied (remember that we have chosen $(\dcol{E''},Q'')=\code{run\_step}\ (\dcol{E},Q)$; see again the proof for the soundness, Theorem \ref{thm:step-sound-A}).
\end{proof}

\subsection{Trace Semantics and Trace Equivalence}
\label{subsec:trace-eq-A}

To conclude the presentation of the metatheory we show trace
equivalence for global and local types. The end result is the Coq
formalisation of an adaptation of Theorem 3.1 in
\cite{DenielouYoshida2013} to our definition of semantics via
coinductive trees.

Traces are defined simply as streams of actions.

\begin{definition}[Traces]
  \label{def:traces-A}%
  (Codatatype \code{trace} in \code{Common/Action.v}), ranged
  over by $t$, are terms generated \emph{coinductively} by
  $\ \ t\ \ \ ::=\quad \trend \;\SEP\; \trnext a t\quad$ where $a$
  is either a sending action $!\p\q(\ell,\tS)$ or a receiving one
  $?\q\p(\ell,\tS)$ (\sec\ref{subsec:step-A}). We use the same notation
  as for lists, however we bare in mind that this definiton is
  coinductive, hence it generates possibly infinite streams.
\end{definition}

We associate traces to the execution of global and local trees.

\begin{definition}[Admissible traces for a global
  tree]\label{def:g-traces-A}
  We say that a trace is admissible for a global tree if the
  coinductive relation $\glts \_ \_$ (definition \code{g\_lts} in
  \code{Global/Semantics.v}) holds:

  \[
    \newDfrac{}{\glts  \trend \cogend} \quad %
    \newDfrac{\coG \stepa{a}  {\dcog{\coG'}}\quad \glts t {\dcog{\coG'}} }{ \glts {\trnext a t} \coG}
  \]
\end{definition}

For local trees, we consider the whole protocol, namely the pair of
local and queue environments.

\begin{definition}[Admissible traces for environments]\label{l-traces-A}
  We say that a trace is admissible for a pair of a local environment
  and a queue environment if the coinductive relation $\llts \_ \_$
  (definition \code{l\_lts} in \code{Local/Semantics.v}) holds:

\[
\newDfrac{\forall \p. \dcol{E}\ \p=\code{None} 
}{\llts  \trend (\dcol{E},\epsilon)} \quad %
\newDfrac
{(\dcol{E},Q) \stepa{a}  (\dcol{E'},Q')\quad \llts t {(\dcol{E'},Q')}}
{ \llts {\trnext a t} {(\dcol{E},Q)}}
\]

\end{definition}

Observe that, given the element of non-determinism in our semantics
(see \sec\ref{subsec:step-A}), generally more than one execution trace
are admissible for a global tree (or for an environment).

We can now state and prove the \emph{trace equivalence} theorem for
multiparty session types.

\begin{theorem}[Trace equivalence](Theorem\\ \code{TraceEquivalence} in \code{TraceEquiv.v}.)\label{thm:trace-equiv-A}\ \\
  If $\coG\osproj(\dcol{E},Q)$ then
$\ \glts t \coG\ $
if and only if $\ \llts t (\dcol{E},Q)\ $.
\end{theorem}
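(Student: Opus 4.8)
The plan is to prove both implications simultaneously by coinduction, using Step Soundness (Theorem~\ref{thm:step-sound}) and Step Completeness (Theorem~\ref{thm:step-compl}) to supply the one-step correspondence between the global and the local transition systems. Since $\glts \_ \_$ and $\llts \_ \_$ are both defined coinductively and traces may be infinite, I would carry out the argument with PaCo: exhibit an invariant relation on each side and check that it is a post-fixed point of the generating functional, appealing to the coinductive hypothesis only under a guarding constructor.

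For the forward direction ($\glts t \coG \Rightarrow \llts t (\dcol{E},Q)$) I would take as invariant $R = \{\, (t,(\dcol{E},Q)) \mid \exists \coG.\ \coG\osproj(\dcol{E},Q) \wedge \glts t \coG \,\}$ and show it is contained in the functional defining $\llts$. If $t=\trnext a {t'}$, inversion on $\glts {\trnext a {t'}}\coG$ yields $\coG'$ with $\coG\stepa a \coG'$ and $\glts {t'}\coG'$; Step Soundness applied to $\coG\stepa a \coG'$ and $\coG\osproj(\dcol{E},Q)$ produces $(\dcol{E'},Q')$ with $(\dcol{E},Q)\stepa a (\dcol{E'},Q')$ and $\coG'\osproj(\dcol{E'},Q')$, so $(t',(\dcol{E'},Q'))\in R$ and the $\llts$-step rule closes the case. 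The backward direction is symmetric: take $R' = \{\,(t,\coG)\mid \exists(\dcol{E},Q).\ \coG\osproj(\dcol{E},Q)\wedge \llts t (\dcol{E},Q)\,\}$ and use Step Completeness instead, so that inversion on $\llts {\trnext a {t'}}(\dcol{E},Q)$ gives $(\dcol{E'},Q')$ and completeness returns a matching $\coG'$ with $\coG\stepa a \coG'$ and $\coG'\osproj(\dcol{E'},Q')$. Crucially the trace $t$ fixes the action $a$ at each stage, so the inherent non-determinism of the two transition systems never forces a choice: the step lemmas deliver precisely the transition labelled by $a$.

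The base case $t=\trend$ is where the only genuine reconciliation work lies. In the forward direction, $\glts \trend \coG$ forces $\coG=\cogend$ by inversion; unfolding $\cogend\osproj(\dcol{E},Q)$ then gives $Q=\epsilon$ from $\qproj \cogend Q$, while from $\cogend\upharpoonright\dcol{E}$ together with $\neg\,\partof \p \cogend$ for every $\p$ (rule \rulename{co-proj-end}) one obtains that every participant is mapped to a terminated local tree, matching the terminal premise of the $\llts \trend$ rule. The backward direction is the converse: from $\llts \trend (\dcol{E},Q)$ one reads off the terminal conditions on $(\dcol{E},Q)$ and must then argue that $\coG\osproj(\dcol{E},\epsilon)$ forces $\coG=\cogend$, since any non-terminated tree would, by \pof, exhibit a participant whose projection is non-terminal.

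I expect the main obstacle to be neither inductive step — those are essentially a reassembly of the already-proved soundness and completeness theorems — but rather two bookkeeping matters against the precise Coq definitions. First, packaging the coinduction so that the appeal to the coinductive hypothesis sits in a guarded position; with PaCo this is localised, so I would factor the one-step correspondence into an auxiliary lemma and keep the top-level proof a short coinductive case split. Second, discharging the terminal case against the formal definitions of $\osproj$, $\upharpoonright$ and $\qproj$, where the distinction between $\code{None}$ and $\colend$ and between coinductive trees and their inductive prefixes (Remark~\ref{remark:prefixes-A}) must be handled with care.
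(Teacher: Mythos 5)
Your proposal is correct and follows essentially the same route as the paper: coinduction (via PaCo) on the target trace relation, case analysis/inversion on the hypothesis relation, with the cons case discharged by Step Soundness in the forward direction and Step Completeness in the backward direction, and the $\trend$ case handled by inversion of the projection and terminal conditions. The explicit invariant relations you write down are just the standard PaCo packaging of what the paper's proof sketch does implicitly.
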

\begin{proof}[Proof Sketch.] (Theorem \code{TraceEquivalence} in \code{TraceEquiv.v}.)

  \textbf{(If)} We assume $\coG\osproj(\dcol{E},Q)$ and $\glts t \coG$
  and we proceed by coinduction (exploiting the techniques from the
  Paco library \cite{paco}) on the $\llts \_ \_$ relation in the goal,
  followed by a case analysis on $\glts \_ \_$ in hypothesis. The base
  $\trend$ case is handled simply by inversion lemmas, while the
  coinductive one is solved thanks to the soundness theorem (Theorem
  \ref{thm:step-sound-A}).

  \textbf{(Only If)} We assume $\coG\osproj(\dcol{E},Q)$ and
  $\llts t (\dcol{E},Q)$. Again we proceed by coinduction (again
  exploiting the Paco techniques \cite{paco}) on the $\glts \_ \_$
  relation in the goal, followed by a case analysis on $\llts \_ \_$
  in hypothesis. The base $\trend$ case is handled simply by inversion
  lemmas, while the coinductive one is solved by the completeness
  result (Theorem \ref{thm:step-compl-A}).
\end{proof}

The above result concludes our formalisation effort of the metatheory
of multiparty session types, from their syntactic specification to the
equivalence of global and local semantics. We have built the
formalisation of the type-related part of \theDiagram : squares
\textcolor{orange}{(M.1)} and \textcolor{orange}{(M.2)} in Figure
\ref{fig:dia}. 

\begin{figure}

\hspace{-2mm}\begin{tikzpicture}
	\node(G0)at (0,0) {$\G$};
	\node(G1)at (3,0) {$\coG$};
	\node(G2)at (6,0) {\small{global trace}};
	\node(L0)at (0,-1.5) {$\lT$};
	\node(L1)at (3,-1.5) {$\colT$};
	\node(L2)at (6,-1.5) {\small{local trace}};
	\node(M1)at (1.5,-0.75) {\small{\textcolor{orange}{(M.1)}}};
	\node(M2)at (4.5,-0.75) {\small{\textcolor{orange}{(M.2)}}};
	\path[commutative diagrams/.cd,every arrow,font=\scriptsize]
	(G0) edge node[above] {$\Re$} (G1)
	(G1) edge node[above] {LTS} (G2)
	(L0) edge node[above] {$\Re$} (L1)
	(L1) edge node[above] {LTS} (L2)
	(G0) edge node[right] {$\upharpoonright$} (L0)
	(G1) edge node[right] {$\upharpoonright^\textsf{c}$} (L1)
	(G2) edge[<->] node[right] {$=$} (L2)
	;

\end{tikzpicture}

\end{figure}

\section{Process extraction}
\label{appendix:extraction}

This function, available in \code{Proc.v}, translates a \dslName
process into a monadic value.

\begin{lstlisting}[language=Coq]
Section ProcExtraction.
  Fixpoint extract_proc (d : nat) (p : Proc) : MP.t unit :=
    match p with
    | Finish => MP.pure tt
    | Jump v => MP.set_current (d - v)
    | Loop p => MP.loop d (extract_proc d.+1 p)
    | Recv p a =>
      MP.recv (fun l =>
                 (fix run_alt a :=
                    match a with
                    | A_sing T l' k =>
                      if l == l'
                      then MP.bind (MP.recv_one (coq_ty T) p)
                                   (fun x => extract_proc d (k x))
                      else MP.pure tt
                    | A_cons T l' k a =>
                      if l == l'
                      then MP.bind (MP.recv_one (coq_ty T) p)
                                   (fun x => extract_proc d (k x))
                      else run_alt a
                    end) a)
    | Send p T l v k =>
      MP.bind (MP.send p l v) (fun=>extract_proc d k)
    end.
End ProcExtraction.
\end{lstlisting}

The translation is defined recursively on the structure of processes,
and it constructs a sequence of monadic actions using bind
connecting each action to its continuation.

\subsection{Constructing a Recursive Ping-pong Client} \label{app:pingpong}
We present now several examples implementing the clients of a ping-pong server.
The global protocol that describes the behaviour of all these participants is:

\vskip.1cm
\noindent
$
\begin{array}{l}
\coqDef \; \Rpingpong := \grec{\gX}{}
  \msg{\Alice}{\Bob}\ \{\\\quad \ell_1(\tunit). \; \gend; \;\ \ell_2(\tnat) .
  \msg{\Bob}{\Alice} \ell_3(\tnat) . \gX\}.
\end{array}
$
\vskip.1cm

\noindent
Here, $\Alice$ acts as the client for $\Bob$, which is the ping-pong server.
$\Alice$ can send zero or more \emph{ping} messages (label $\ell_2$), and
finally quitting (label $\ell_1$). $\Bob$, for each ping received, will reply a
\emph{pong} message (label $\ell_3$).

Just as in the $\Rpipe$ example, we project $\Rpingpong$, and get the local type
for $\Alice$: $\AliceLT$. We define several different implementations of
$\AliceLT$ adhering to the protocol specification. The first client,
$\AliceProc_0$ simply quits without sending any ping. To be able to typecheck it
against $\AliceLT$, we need to specify the missing labels in the process
specification:

\vskip.1cm
\noindent
$
\begin{array}{@{}l@{}}
  \coqDef \; \AliceProc_0: \zooidTy{\AliceLT}
  :=
  \dproc{\code{loop}} \; {\lX} \\\ %
  \begin{array}[t]{@{}l@{}}
    (\zselect{\Bob}{%
    \begin{array}[t]{@{}l@{}}
      [\zdflt{\ell_1}{\code{tt} : \tunit} \zend
      \\ \mid
      \zskip{\ell_2}{\tnat} \lrcv{\Bob} \ell_3(\tnat) ; \lX])
    \end{array}
    }
  \end{array}
\end{array}
$
\vskip.1cm

\noindent
The $\zselect{\Bob}{}$ construct specifies that the default branch is to send
$\ell_1$, and then finish, and that the unimplemented behaviour is to send
$\ell_2$ and a $\tnat$, and then receiving $\ell_3$ from $\Bob$, and then
jumping to $\dproc{\code{loop}}\; {\lX}$.  Similarly, we define the process that
keeps sending $\ell_2$ to $\Bob$:

\vskip.1cm
\noindent
$
\begin{array}{@{}l@{}}
  \coqDef \; \AliceProc_1: \zooidTy{\AliceLT}
  :=
  \dproc{\code{loop}} \; {\lX} \\\ %
  \begin{array}[t]{@{}l@{}}
    (\zselect{\Bob}{%
    \begin{array}[t]{@{}l@{}}
      [\zskip{\ell_1}{\code{tt}} \lend
      \\ \mid
      \zdflt{\ell_2}{ 0 : \tnat} \\\ \zrecv{\Bob}{\ell_3}{x : \tnat} \zjump{\lX}])
    \end{array}
    }
  \end{array}
\end{array}
$
\vskip.1cm

\noindent Since $\Proc$ and the local types are inductively defined, there will be
sometimes valid processes with a local type that is not \emph{exactly} the
projection of a participant in the global type. In such cases, we require proofs
that the local type of the process is equal up to unravelling to the local type
projected from the global type. For example, $\AliceProc_0$ could be defined
without using $\dproc{\code{loop}}$, by providing the local type that results of
unravelling once $\AliceLT$:

\vskip.1cm
\noindent
$
\begin{array}{@{}l@{}}
  \coqDef \; \AliceProc_3: \azooid
  :=\\\ %
  \begin{array}[t]{@{}l@{}}
    [\code{proc}\\\ \zselect{\Bob}{%
    \begin{array}[t]{@{}l@{}}
      [\zdflt{\ell_1}{\code{tt} : \tunit} \zend
      \\ \mid
      \zskip{\ell_2}{\tnat} \lrcv{\Bob} \ell_3(\tnat) ; \AliceLT]] 
    \end{array}
    }
  \end{array}
\end{array}
$
\vskip.1cm

\noindent
The type $\azooid$ is the dependent pair type: $\{ \lT \mathbin{\&}
\zooidTy{\lT}\}$. The notation $[\code{proc}\; \zooid ]$ is defined as:

$\code{existT} \; (\tfun{\lT}{\zooidTy \; \lT}) \; \_ \; \zooid$.

\noindent The underscore $\_$
is inferred by Coq, since \dslName{} constructs fully determine their local type
from the inputs.
The first projection $\code{projT1} \; \AliceProc_3$ is the inferred local
type. To ensure that $\AliceProc_3$ behaves as prescribed by $\Rpingpong$, we
need to prove that its inferred local type is equal to $\AliceLT$ up to
unravelling. But for this example, it is enough to unfold $\AliceLT$ once, and
compare the result syntactically with $\code{projT1} \;
\AliceProc_3$. Similarly, if we define a process that sends a fixed number of
$n$ pings and then finishes, we would need to prove that its local type is
syntactically equal to the $n$-th unfolding of $\AliceLT$, which can be done
simply by evaluating its comparison.

Suppose now that we wish to implement a client that sends an undefined number of
pings, until the server replies a natural number greater than some $k$. We show
below the \dslName{} specification:

\vskip.1cm
\noindent
$
\begin{array}{@{}l@{}}
  \coqDef \; \AliceProc_4: \azooid
  := [\code{proc} \\
  \begin{array}[t]{@{}l@{}}
    \zselect{\Bob}{
    \begin{array}[t]{@{}l@{}}
      [\zskip{\ell_1}{\tunit} \lend
      \\ \mid
      \zdflt{\ell_2}{0 : \tnat}\\\ \ %
      \dproc{\code{loop}} \; \lX \; (
      \zrecv{\Bob}{\ell_3}{x : \tnat}
      \\ \qquad
      \zselect{\Bob}{%
      \begin{array}[t]{@{}l@{}}
        [ \zcase{x \geq k}{\ell_1}{\code{tt} : \tunit}\\\ \ \zend
        \\ \mid
        \zdflt{\ell_2}{x : \tnat}\\ \ \ \zjump{\lX}
        ])
      ]]
      \end{array}
      }
    \end{array}
    }
  \end{array}
\end{array}
$
\vskip.1cm

\noindent
The local type for $\AliceProc_4$ is not syntactically equal to $\AliceLT$:

\vskip.1cm
\noindent

$
\begin{array}{l}
\AliceLT =
\dlt{\mu \lX}. \; \lsnd{\Bob} \{
   \ell_1(\tunit). \lend; \\ \quad
   \ell_2(\tnat). \lrcv{\Bob} \ell_3(\tnat). \lX\}

  \\[0.5em]

\code{projT1} \; \AliceProc_4 = \lsnd{\Bob} \{\ %
   \ell_1(\tunit). \lend;
   \; \ell_2(\tnat). \dlt{\mu \lX}.\\\quad \lrcv{\Bob} \ell_3(\tnat).
   \lsnd{\Bob} \{
   \ell_1(\tunit). \lend; \;
   \ell_2(\tnat). \lX.
    \} \}

\end{array}
$

\noindent
However, a simple proof by coinduction can show
that both types unravel to the same
local tree.



\end{document}